\documentclass[a4paper,USenglish]{lipics-v2021}
%This is a template for producing LIPIcs articles. 
%See lipics-v2021-authors-guidelines.pdf for further information.
%for A4 paper format use option "a4paper", for US-letter use option "letterpaper"
%for british hyphenation rules use option "UKenglish", for american hyphenation rules use option "USenglish"
%for section-numbered lemmas etc., use "numberwithinsect"
%for enabling cleveref support, use "cleveref"
%for enabling autoref support, use "autoref"
%for anonymousing the authors (e.g. for double-blind review), add "anonymous"
%for enabling thm-restate support, use "thm-restate"
%for enabling a two-column layout for the author/affilation part (only applicable for > 6 authors), use "authorcolumns"
%for producing a PDF according the PDF/A standard, add "pdfa"

\pdfoutput=1 %uncomment to ensure pdflatex processing (mandatatory e.g. to submit to arXiv)
\hideLIPIcs  %uncomment to remove references to LIPIcs series (logo, DOI, ...), e.g. when preparing a pre-final version to be uploaded to arXiv or another public repository

%\graphicspath{{./graphics/}}%helpful if your graphic files are in another directory

\bibliographystyle{plainurl}% the mandatory bibstyle

\title{Black Hole Search in Dynamic Tori}

\author{Adri {Bhattacharya}}{Indian Institute of Technology Guwahati, Assam, India}{a.bhattacharya@iitg.ac.in}{[OrcidId:0000-0003-1517-8779]}{}

%{https://orcid.org/0000-0002-1825-0097}{(Optional) author-specific funding acknowledgements}%TODO mandatory, please use full name; only 1 author per \author macro; first two parameters are mandatory, other parameters can be empty. Please provide at least the name of the affiliation and the country. The full address is optional. Use additional curly braces to indicate the correct name splitting when the last name consists of multiple name parts.

% \author{Partha Sarathi Mandal}{Indian Institute of Technology Guwahati, Assam, India}{psm@iitg.ac.in}{[OrcidId:]}

\author{Giuseppe F. Italiano}{Luiss University, Rome, Italy}{gitaliano@luiss.it}{[OrcidId: 0000-0002-9492-9894]}{}

\author{Partha Sarathi Mandal}{Indian Institute of Technology Guwahati, Assam, India}{psm@iitg.ac.in}{0000-0002-8632-5767}{}

\authorrunning{Bhattacharya et al.}
 
%\Copyright{Adri Bhattacharya, Giuseppe F. Italiano, and Partha Sarathi Mandal} %TODO mandatory, please use full first names. LIPIcs license is "CC-BY";  http://creativecommons.org/licenses/by/3.0/

%\ccsdesc[100]{\textcolor{red}{Replace ccsdesc macro with valid one}} %TODO mandatory: Please choose ACM 2012 classifications from https://dl.acm.org/ccs/ccs_flat.cfm 
\ccsdesc[100]{Theory of Computation $\rightarrow$ Distributed Algorithms}
\keywords{Black Hole Search, Time Varying Graphs, Dynamic Torus, Distributed Algorithms, Mobile Agents} %TODO mandatory; please add comma-separated list of keywords

%\category{} %optional, e.g. invited paper

%\relatedversion{} %optional, e.g. full version hosted on arXiv, HAL, or other respository/website
%\relatedversiondetails[linktext={opt. text shown instead of the URL}, cite=DBLP:books/mk/GrayR93]{Classification (e.g. Full Version, Extended Version, Previous Version}{URL to related version} %linktext and cite are optional

%\supplement{}%optional, e.g. related research data, source code, ... hosted on a repository like zenodo, figshare, GitHub, ...
%\supplementdetails[linktext={opt. text shown instead of the URL}, cite=DBLP:books/mk/GrayR93, subcategory={Description, Subcategory}, swhid={Software Heritage Identifier}]{General Classification (e.g. Software, Dataset, Model, ...)}{URL to related version} %linktext, cite, and subcategory are optional

\funding{\textit{Adri Bhattacharya:} Supported by CSIR, Govt. of India, Grant Number: 09/731(0178)/2020-EMR-I}%optional, to capture a funding statement, which applies to all authors. Please enter author specific funding statements as fifth argument of the \author macro.

%\acknowledgements{I want to thank \dots}%optional

%\nolinenumbers %uncomment to disable line numbering

%Editor-only macros:: begin (do not touch as author)%%%%%%%%%%%%%%%%%%%%%%%%%%%%%%%%%%
% \EventEditors{Arnaud Casteigts and Fabian Kuhn}
% \EventNoEds{2}
% \EventLongTitle{3rd Symposium on Algorithmic Foundations of Dynamic Networks(SAND 2024)}
% \EventShortTitle{SAND 2024}
% \EventAcronym{SAND}
% \EventYear{2024}
% %\EventDate{December 24--27, 2016}
% %\EventLocation{Little Whinging, United Kingdom}
% %\EventLogo{}
% \SeriesVolume{42}
% \ArticleNo{23}
%%%%%%%%%%%%%%%%%%%%%%%%%%%%%%%%%%%%%%%%%%%%%%%%%%%%%%
\usepackage{amsfonts, amssymb, amsmath,pifont,float,caption,multicol}
\usepackage{algorithm,algpseudocode}
\usepackage[linesnumbered,ruled,vlined,algo2e]{algorithm2e}
\newtheorem*{blub}{Note}
\newtheorem{blab}{Corollary}
\newtheorem{defn}{Definition}
\newtheorem{obs}{Observation}
\newtheorem{lemm}{Lemma}
\nolinenumbers
\begin{document}

\maketitle

%TODO mandatory: add short abstract of the document
\begin{abstract}
We investigate the black hole search problem by a set of mobile agents in a dynamic torus. Black hole is defined to be a dangerous stationary node which has the capability to destroy any number of incoming agents without leaving any trace of its existence. A torus of size $n\times m$ ($3\leq n \leq m$) is a collection of $n$ row rings and $m$ column rings, and the dynamicity is such that each ring is considered to be 1-interval connected, i.e., in other words at most one edge can be missing from each ring at any round. The parameters which define the efficiency of any black hole search algorithm are: the number of agents and the number of rounds (or \textit{time}) for termination. We consider two initial configurations of mobile agents: first, the agents are co-located and second, the agents are scattered. In each case, we establish lower and upper bounds on the number of agents and on the amount of time required to solve the black hole search problem.   
\end{abstract}
\section{Introduction}

%We study 
Given a network and a set of mobile agents, the black hole search problem (also termed as BHS problem) consists of
%, which in other terms implies 
locating a malicious stationary node which has the power to eliminate any number of incoming agents without leaving any trace of its existence. This problem is not new and it readily has many real life implications. For example, the black hole may be a node 
%in a network 
infected with a virus in a computer network, and in order to make the network safe the infected node 
%with virus may 
should be located for further actions. The first task for any set of mobile agents ought to be to locate the black hole. To accomplish this task, %and for which 
at least one agent needs to visit the node; 
%so 
we aim at an efficient BHS algorithm, where 
%is 
the minimum %one in which less 
number of agents gets consumed by the black hole and so that at least one agent must remain alive in order to locate the black hole within finite time. This problem has been extensively studied in networks which are static, see, e.g., 
%some of them are 
\cite{TimeOptimalBHSRingBalamohan, BHSscatteredSynchronousRingchalopin,  BHSdirectedGraphczyzowicz, BHSsynchronousTreeczyzowicz,BHSCommonInterconnectionNetworkDobrev,BHSStaticArbNetworkDobrev,BHSAsyncRingTokensDobrev,BHSscatteredUnorientedRingTokenDobrev,BHSPebblesFlocchini}. Recently, research on black hole search problem has been mainly focused on dynamic networks; in particular, the most relevant dynamic networks studied are 
%when the underlying network is dynamic, more precisely, the most relevant dynamic network which are normally studied is 
\textit{time-varying graphs}. These networks work on %a 
temporal domains, which are mainly considered to be discrete time steps. More precisely, %where 
the network is 
%visualised to be 
a collection of static graphs, in which some edges may disappear or reappear at each discrete time step, while 
%but 
the vertex set is fixed,
%predominantly remains the same, 
with the additional constraint that at each time step the underlying graph remains 
%needs to be 
connected (also termed as \textit{1-interval connected}). Presently, apart from the black hole search on a dynamic ring \cite{BHSDynRingLuna} and on a dynamic cactus \cite{BHSdynamicCactusAdri}, nothing much is known about the black hole search problem 
%in this domain 
on dynamic %graphs
networks. 

In this paper, we 
%We have tried to 
investigate 
this problem on a dynamic torus of size $n \times m$ (where each ring is 1-interval connected and without loss of generality $3\leq n \leq m$), where a set of agents synchronously perform the same execution, with the goal of locating the black hole. We study two types of initial configurations of agents. 
%that we have studied in this paper, 
In the first configuration, all agents are 
%each agent is 
initially located at the same node; in the second configuration, the agents are scattered along the nodes of the underlying network. In both configurations, all the nodes where agents are 
%Each of these nodes along which an agent is 
initially located are not \textit{dangerous}, i.e., they do not contain the black hole (they are \textit{safe}). Our primary objective is to design an efficient BHS algorithm such that: (a) within a finite time at least one agent survives, and (b) it gains 
%has the 
knowledge of the black hole location.

\section{Related Works and Our Contribution}

\subsection{Related work}

Network exploration %Exploration 
by mobile agents is one of the fundamental problems in this domain, and it was
%is 
first introduced by Shannon \cite{ExplorationFirstWorkShannon}. After his pioneering %this plenary 
work, this problem has been extensively studied in various topologies such as rings \cite{RingExplorationNagahama}, trees \cite{TreeExplorationDas}, general graphs \cite{GraphExplorationCohen} under different models of communication (particularly, pebbles \cite{ExplorationPebbleDisser} and whiteboard \cite{WhiteboardExplorationSudo}), synchrony (synchronous \cite{GraphExplorationCohen}, semi-synchronous \cite{SemiSynchronousExplorationBrandt} and asynchronous \cite{AsyncExplorationRingFlocchini}) and both in static \cite{GraphExplorationCohen} as well as dynamic networks (tori \cite{DynTorusExpGotoh} and general graphs \cite{DynamicGeneralGraphExplorationGotoh}). 

The black hole search (BHS) problem is a special version of the exploration problem, where in the worst case the underlying network needs to be explored in order to locate the black hole position. This problem was
%is 
first introduced by Dobrev et al. \cite{BHSStaticArbNetworkDobrev}, and after that has received a lot of attention: indeed, it
%there has been a plethora of work done under this domain. The BHS problem 
has been studied for directed \cite{BHSdirectedGraphczyzowicz} as well as undirected graphs \cite{BHSscatteredSynchronousRingchalopin}, and for different underlying networks, such as rings \cite{BHSscatteredSynchronousRingchalopin}, tori \cite{BHSscatteredStaticTorusChalopin}, trees \cite{BHSsynchronousTreeczyzowicz} and general graphs \cite{BHSStaticArbNetworkDobrev}. In addition, different communication models have %also 
been considered
%explored 
for this problem, including
%some of them are 
`Enhanced Token' \cite{BHSTokenDobrev}, `Pure Token' \cite{BHSPebblesFlocchini} and whiteboard \cite{BHSWhiteboardDobrev}. Moreover, this problem has also been explored for different initial agent configurations. In particular, Shi et al. \cite{BHSTokenShi} showed
%one among them is authored by Wei Shi \cite{BHSTokenShi}, where it is shown 
that, when the agents are co-located, a minimum of 2 co-located agents communicating via tokens can solve the BHS problem in hypercube, torus and complete network with $\Theta(n)$ moves, whereas in the case where
%whereas when 
$k$ agents ($k>3$) are scattered, then with only 1 token per agent it is shown that BHS can be solved in $O(k^2n^2)$ moves. All these above papers discuss black hole search in a static network, and very little is known about the problem in dynamic networks. Di Luna et al. \cite{BHSDynRingLuna} first investigated %introduced 
this problem in a dynamic ring, and they showed %in which they have shown 
that when the agents are co-located, then in face-to-face communication with 3 agents there is an optimal algorithm that works in $\Theta(n^2)$ moves and $\Theta(n^2)$ rounds (where $n$ is the size of the ring). Next, with whiteboard communication, they %have 
reduced the complexity to $\Theta(n^{1.5})$ rounds and $\Theta(n^{1.5})$ moves. Lastly, when the agents are initially scattered and each node has a whiteboard, then again with 3 agents they showed that
%have shown 
at least $\Theta(n^2)$ moves and $\Theta(n^2)$ rounds are required for any BHS algorithm. In each case, they gave
%have given 
an optimal algorithm. Next, Bhattacharya et al. \cite{BHSdynamicCactusAdri} studied the BHS problem in a dynamic cactus graph, 
%in which they have 
and proposed an agent optimal algorithm when at most one edge can be dynamic;
%is dynamic, whereas 
in the case when at most $k$ $(>1)$ edges can be dynamic, they
%are dynamic, then they have 
proposed a lower bound of $k+2$ and an upper bound of $2k+3$ agents. 

In this paper, we further investigate the %extend this 
BHS problem in a dynamic torus, with the aim of providing
%where our aim is to give 
an efficient BHS algorithm. To the best of our knowledge, this is the first work where the BHS problem is explored in the case of a dynamic torus. Previously, Gotoh et al. \cite{DynTorusExpGotoh} studied the exploration problem under link presence detection and no link presence detection in dynamic tori, whereas Chalopin et al. \cite{BHSscatteredStaticTorusChalopin} studied the BHS problem in a static torus and gave tight bounds on the number of agents and tokens when the agents are initially scattered.

\subsection{Our Contribution}
We investigate the BHS problem in a dynamic torus for two initial configurations: first, when the set of agents are initially co-located, and next, when the agents can be initially scattered in different nodes. When the agents are initially co-located, we provide the following results.
\begin{itemize}
    \item We establish the impossibility of correctly locating the black hole with $n+1$ agents.
    \item We show that with $n+c$ (where $c \geq 2$ and $c\in \mathbb{Z}^+$) co-located agents, any BHS algorithm requires at least $\Omega(m\log n)$ rounds.
     %\pino{[COMMENT: Shouldn't this be $c>1$ or equivalently $c\geq 2$ (not $c\geq 1$)??? ($c=1$ we have the impossibility result above) Same in Thm 3???]}
    \item With $n+3$ agents we present %have established
    a BHS algorithm that works in $O(nm^{1.5})$ rounds.
    \item Next, with $n+4$ agents we present %found 
    an improved BHS algorithm that works in $O(mn)$ rounds.
    \end{itemize}
The following results are obtained when the agents are initially scattered.
    \begin{itemize}
        \item We establish the impossibility of correctly locating the black hole with $n+2$ agents.
        \item We show 
        %We have shown 
        that with $k=n+c$ (where $c \geq 3$ and $c\in \mathbb{Z}^+$) scattered agents, any BHS algorithm requires $\Omega(mn)$ rounds.
         %\pino{[COMMENT: Shouldn't we say  $c\geq 3$ ??? ($c=2$ we have the impossibility result above) Same in Thm 5???]}
        \item With $n+6$ agents we present %give 
        a BHS algorithm that works in $O(nm^{1.5})$ rounds.
        \item Lastly, with $n+7$ agents we present %give
        a round optimal BHS algorithm that works in $O(mn)$ rounds.
    \end{itemize}

\vspace{-0.5cm}

 \begin{table}[H]
 \centering
 \begin{tabular}[t]{|c|c|c|c|c|}
 \hline

  IC & Bound &\# Agents &  Rounds & Results\\
 \hline

 Colocated& LB & $n+2$ & $\Omega(m\log n)$& Cor \ref{corollary:colocatedAgentLB} \& Thm \ref{theorem:colocatedLBComplexity}\\\cline{2-5}

 &UB&$n+3$&$O(nm^{1.5})$ & Thm \ref{theorem:ComplexityAlg4} \\\cline{2-5}

 &UB& $n+4$& $O(nm)$& Thm \ref{theorem:n+4colocatedRoundComplexity} \\

 \hline

 Scattered &LB& $n+3$& $\Omega(nm)$& Cor \ref{corollary:scatteredAgentLB} \& Thm \ref{theorem:scatteredRoundLB}\\\cline{2-5}

 & UB& $n+6$ &  $O(nm^{1.5})$ & Thm \ref{theorem:n+6scatteredRoundComplexity}\\\cline{2-5}

 &UB &$n+7$&$O(nm)$ & Thm \ref{theorem:n+7scatteredRoundComplexity}\\
 \hline
 \end{tabular}
 \caption{Summary of Results where LB, UB and IC represent lower bound, upper bound and initial configuration of the agents, respectively.}
 \end{table}

\vspace{-0.5cm}
\noindent\textbf{Organisation:} The remainder of the paper is organised as follows. In Sections \ref{section3} and \ref{section-LB}, we explain the model and prove the lower bound results. Next, in Section \ref{section-preliminaries}, we discuss 
some preliminary notation and basic subroutines which will be used by
%the preliminaries required for 
our algorithms. Further, in Sections \ref{section-colocated} and \ref{section-scattered}, we present and analyse our algorithms for the co-located and scattered case. 
%are explained. 
%Lastly, concluding in section 
Finally, we list some concluding remarks in Section \ref{section-conclusion}.

\section{Model and Problem Definition }\label{section3}
\subsection{Graph Model}

The dynamic graph is modelled as a time-varying graph (or formally known as \textit{temporal graph}) $\mathcal{G}=(G,V,E,\mathbb{T},\rho)$, where $V$ is the set of vertices (or nodes), $E$ is the set of edges in $G$, $\mathbb{T}$ is defined to be the \textit{temporal domain}, which is defined to be $\mathbb{Z}^+$ as in this model we consider discrete time steps, also $\rho:E\times \mathbb{T} \rightarrow \{0,1\}$ is defined as the \textit{presence} function, which indicates the presence of an edge at a given time. The graph $G=(V,E)$ is the underlying static graph of the dynamic graph $\mathcal{G}$, also termed as \textit{footprint} of $\mathcal{G}$. More specifically, the footprint $G=(V,E)$ is a torus of size $n\times m$, where $n$ represents the number of rows and $m$ represents the number of columns, we define $V=\{v_{i,j} ~|~ 0 \le i \le n-1, 0 \le j \le m-1\}$ and $E$ is the set of edges, where the horizontal and vertical edges are $\{(v_{i,j}, v_{i,j+1\mod{m}})\}$ and $\{(v_{i,j}, v_{i+1\mod{n},j})\}$, respectively (refer Fig. \ref{fig:example_torus}). By the node $v_{i,j}$ we invariably mean $v_{i\mod{n}, j\mod{m}}$ and these modulus functions are ignored further in this paper. In order to restrict self loop or multiple edges, without loss of generality we assume $3\le n \le m$. A row ring $R_i$ (resp, a column ring $C_j$) is the subgraph of $G$ induced by the set of vertices $\{v_{i,j} ~|~ 0\le j \le m-1\}$ (resp, $\{v_{i,j}~|~0\le i \le n-1\}$). In this paper, we consider our temporal graph $\mathcal{G}$ to be a oriented dynamic torus. The adversary has the ability to make an edge reappear or disappear at any particular time step with the added constraint that, irrespective of how many edges disappear or reappear, each row and column ring at any time step must be connected; in other words each row and column ring in $\mathcal{G}$ is \textit{1-interval connected} (so at any time, the adversary can make at most one edge disappear from each row and column ring, in order to maintain the 1-interval connectivity property). A disappeared edge is termed as a \textit{missing edge} in this paper.

Every node $v_{i,j}\in G$ is labelled by a unique Id $(i,j)$, whereas each node in $G$ has 4 \textit{ports} adjacent to it, where the ports corresponding to the edges $(v_{i,j},v_{i,j-1})$, $(v_{i,j},v_{i,j+1})$, $(v_{i,j},v_{i-1,j})$, $(v_{i,j},v_{i+1,j})$, are denoted by \textit{west, east, south, north}, respectively. In addition, corresponding to each port of a node $v_{i,j}$ of $G$ a \textit{whiteboard} of storage of $O(1)$-bits is placed. The purpose of the whiteboard is to store and maintain certain information such as the node Id or agent Id or the agent's course of traversal (depending on the amount of storage the whiteboard can store). Any incoming agent can read the existing information or write any new information corresponding to a port along which it travels to the next node. Fair mutual exclusion to all incoming agents restricts concurrent access to the whiteboard. The network $G$ has a malicious node or unsafe node (refer BH in Fig. \ref{fig:example_torus}), also termed as a \textit{black hole}, which vanishes any incoming agent without leaving any of its trace. The remaining nodes in $G$ are not malicious, hence they are termed as \textit{safe nodes}.

\begin{figure}
  	\centering
  	\includegraphics[scale=0.8]{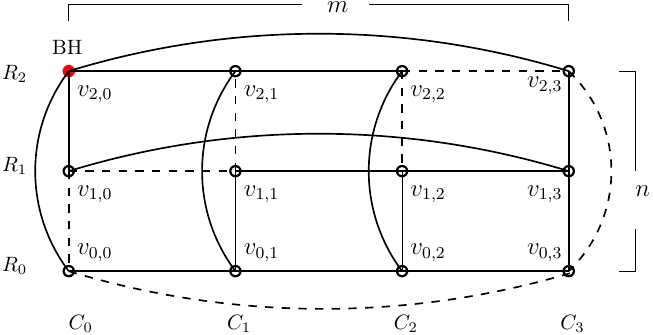}
  	\caption{A $3\times4$ dynamic torus, where dashed edges represents a disappeared or missing edge}
  	\label{fig:example_torus}
  \end{figure}

\subsection{Agent Model}

A set of $k$ agents $A=\{a_1,a_2, \ldots, a_{k}\}$ are assigned the task to locate the black hole in $\mathcal{G}$. We consider two initial configurations in this paper: first, the set of $A$ agents are initially \textit{co-located} at a safe node (the node in $G$ at which they are co-located is termed as \textit{home}), second, the agents are initially \textit{scattered} along safe nodes in $\mathcal{G}$. Each agent in $A$ has a distinct Id of size $\lfloor \log k \rfloor$ bits taken from the set $[1,k]$, and every agent has computational capabilities so that it can communicate with other agents when they are at the same node at the same time. Each agent has knowledge of the torus size, i.e., both $n$ and $m$ are known to the agents. An agent moves from one node to another using the edges at each round; furthermore any number of agents can concurrently move along an edge at any round. These actions are atomic in nature, so an agent cannot recognise the other agents concurrently passing through the same edge at the same round; but it can see and communicate with all the other agents present at the current node at the same round. These agents operate in \textit{synchronous} rounds, so in each round, every agent becomes active and takes a local snapshot of its current node. The snapshot includes the presence of the ports of its current node at the current round, the agent's local memory, the set of agents present at the current node, and the contents of the whiteboard. Now, based on this information the agent performs the following actions:
\begin{itemize}
    \item {\it Look:} In this step, the agent takes the \textit{snapshot} of the current node. This snapshot helps the agent gather the information about the Ids of other agents residing at the same node, the edges that currently exist at the current round and also the whiteboard information at the current node.  
    \item {\it Compute:} On the basis of its earlier snapshot and local memory, the agent decides to stay at the current node or move to another node. The direction of its movement is also calculated in this step.
    \item {\it Move:} In this step if the agent decides to move along a certain direction and if the corresponding edge is present, then it moves along this edge while updating the whiteboard (if required, based on the algorithm) to the next node in the subsequent round.
\end{itemize}

Since, the agents operate in \textit{synchronous} rounds, so each agent gets activated at each round and performs the LCM cycle. So, the time taken by any algorithm is calculated in terms of the \textit{rounds}.

\subsection{Configuration}
A configuration $C_r$ at a round $r$ is defined to be the amalgamation of the presence of the number of agents at a node, the local memory of each agent and contents of the whiteboard at the start of round $r$. The transformation from $C_r$ to $C_{r-1}$ depends on multiple factors, first, the execution of the algorithm, second, the adversarial choices of edges disappeared and reappeared in round $r-1$. $C_0$ is the initial configuration, where, in the co-located case, the initial safe node is chosen by the adversary, whereas in the scattered case, the adversary arbitrarily places the agents along the safe nodes.

The problem of black hole search (or BHS) is defined as follows.
\begin{defn}
    Given a dynamic torus $\mathcal{G}$ of size $n \times m$ ($3\leq n \leq m$), an algorithm $\mathcal{A}$ for a set of $k$ agents solves the BHS problem if at least one agent survives and terminates. The terminating agent must correctly know the exact position of the black hole in the footprint of $\mathcal{G}$. 
\end{defn}
The measures of the complexity for the BHS problem are as follows: the number of agents or \textit{size}, required to successfully execute $\mathcal{A}$, the \textit{time} or the number of rounds required to execute $\mathcal{A}$. Note that in this paper, we have assumed the fact that whenever an agent correctly locates the black hole, the algorithm terminates, so all the other agents executing any action gets terminated immediately.

\section{Lower Bound Results}\label{section-LB}

In this section, we present the lower bound results on the number of agents and number of rounds, in both scenario when the agents are either initially co-located or scattered.

\subsection{Co-located Agents}

The next theorem gives impossibility result on the number of agents when they are initially co-located.

\begin{theorem}\label{theorem:colocatedAgentLB}
    Given a dynamic torus $\mathcal{G}$ of size $n \times m$, there does not exist a BHS algorithm which correctly locates the black hole with $k=n+1$ co-located agents and each node in $\mathcal{G}$ contains a whiteboard of $O(1)$ bits.
\end{theorem}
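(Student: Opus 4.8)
The plan is to exhibit an adversarial strategy that, against any algorithm using $k=n+1$ co-located agents, forces a configuration in which the surviving agents cannot distinguish between two tori having the black hole in two different nodes. The key idea is that in an $n \times m$ torus, a single column ring $C_j$ is itself 1-interval connected, so the adversary controls one missing edge per column. Consider the column ring $C_{j^\ast}$ immediately "beyond" the home column (or any column the agents must cross to explore the rest of the torus): the adversary can keep one edge of $C_{j^\ast}$ missing at all times, effectively turning $C_{j^\ast}$ into a path of $n$ nodes. To traverse from one side of this path to the other, or equivalently to fully explore the $n$ nodes of this path-column when one of them may be the black hole, a standard cautious-walk counting argument shows that agents are "used up": probing each of the $n$ nodes of the path requires, in the worst case over where the missing edge sits, that the algorithm sacrifice too many agents.

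First I would set up the indistinguishability framework: fix the home node $h$, and let the adversary pick the missing edge in each column-ring adaptively. Second, I would argue that to certify that a given node $v$ in the "blocked" column is safe, at least one agent must physically enter $v$; since the adversary reveals the missing edge only at the moment an agent commits to a move, it can always arrange that the node an agent is about to step into is the black hole unless the algorithm has already hedged by leaving a witness. Third — and this is the crux — I would run the accounting: with the column turned into an $n$-node path, exploring it safely in the dynamic setting behaves like black-hole search on a dynamic line segment, where it is known (and provable by the same elementary pigeonhole-on-movements argument used for the dynamic ring in \cite{BHSDynRingLuna}) that you need enough agents that the adversary cannot both consume one in the black hole and strand the rest behind a missing edge. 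The count comes out to needing strictly more than $n+1$: one black hole victim, plus a team that must still be able to split around the single missing edge and cover the remaining $n-1$ unexplored nodes while always keeping a survivor.

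The main obstacle will be making the counting rigorous: I must carefully formalize what it means for the algorithm to "explore a node," handle the fact that agents may shuttle back and forth and communicate via the $O(1)$-bit whiteboards (so I cannot assume agents are memoryless or that the whiteboard carries no useful coordination), and show that whiteboard information does not help the agents circumvent the adversary's blocking. The standard way around the whiteboard issue is that $O(1)$ bits per port cannot encode the identity of which of $\Omega(n)$ nodes is blocked, so the adversary retains enough freedom; I would phrase this as: for any algorithm, there is a round at which a still-unexplored node in the blocked column is about to be entered by the last "dispensable" agent, and the adversary declares that node the black hole, leaving the remaining agents unable to ever reach and certify the $\Omega(n)$ other unexplored nodes (they are cut off by the missing edge and outnumbered relative to the exploration task), so no agent can ever correctly terminate. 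I expect the final bookkeeping — that $n+1$ is exactly one short, matching the $n+2$ lower bound claimed in the contributions table — to require a delicate but routine case analysis on the positions of the agents relative to the missing edge at the decisive round.
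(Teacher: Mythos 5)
There is a genuine gap, and it is in the step you yourself flag as the crux: the accounting. You try to derive the dependence on $n$ from the difficulty of exploring a single blocked column $C_{j^\ast}$ of $n$ nodes, claiming the count ``comes out to needing strictly more than $n+1$'' agents for that one column. That cannot be made to work: black hole search on a single $1$-interval connected ring (or the path obtained by permanently deleting one of its edges) needs only a constant number of agents with whiteboards, independent of the ring's length --- this is exactly the Di Luna et al.\ result you cite, which uses $3$ agents on a dynamic ring of any size. Moreover, agents are not confined to traversing $C_{j^\ast}$ vertically; they can enter each of its nodes from the row direction, and the adversary controls only one missing edge per row ring, so a single blocked column does not strand $\Omega(n)$ agents. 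The $n$-dependence of the agent lower bound does not come from the size of any one ring.

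The idea your proposal is missing is that the adversary can \emph{simultaneously} trap $n$ agents at $n$ mutually independent positions. The paper uses the diagonal nodes $v_{i,i}$, $0\le i\le n-1$: since these lie in pairwise distinct rows and pairwise distinct columns, the adversary can delete one horizontal edge in each row ring and one vertical edge in each column ring so as to block every movement out of each occupied diagonal node, and this is consistent with $1$-interval connectivity of all $n+m$ rings at once. In the worst case this pins $n$ of the $n+1$ agents, leaving a single free agent, and one then argues (as you do correctly in spirit at the end of your sketch) that a lone explorer cannot both probe new nodes and report back, because a waiting agent cannot distinguish ``the explorer died in the black hole'' from ``the explorer is delayed by a missing edge,'' so no agent can ever terminate correctly. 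Your second and fourth ingredients (a node must be physically visited; the explore-versus-report dilemma) are sound, but without the simultaneous $n$-fold trapping there is no mechanism that reduces the team to a single dispensable agent, and the claimed bound does not follow.
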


\begin{proof}
   Suppose $\mathcal{H}$ be any BHS algorithm which works with $k=n+1$ co-located agents in $\mathcal{G}$. Since, in the worst case each node of $\mathcal{G}$ needs to be explored by at least one agent in order to locate the black hole. So, while executing $\mathcal{H}$, whenever an agent visits a node of the form $v_{i,i}$ ($\forall~ 0\leq i \leq n-1$), the adversary has the ability to stop one agent at each such node from moving further in any direction (refer the nodes $v_{0,0},\ldots,v_{2,2}$ in Fig. \ref{fig:example_torus}). It is because, as these nodes are independently located in separate rows and columns, so the adversary has the ability to stop the agent from moving either horizontally or vertically by disappearing either of these edges, which in turn restricts the agent to move any further. In the worst case, $n$ among $n+1$ agents can be stuck in the nodes of the form $v_{i,i}$. This means, if the black hole is not yet detected, and there are lets say $t$ ($>0$) many nodes left to be explored, then this $n+1$-th agent is the only agent to be able to move and hence needs to explore these remaining nodes. While exploring, if this agent visits multiple nodes before reporting, then it is impossible for the agents to correctly locate the black hole. So, the only way this agent can move, is after each new node it explores, it must try to report at least one among the remaining $n$ stuck agents. Now, this also leads to impossibility, because the $n+1$-th agent may encounter a missing edge either while trying to explore a new node or while returning back to report, and it may restrict the agent from reaching its designated location. In this situation, the agent which is waiting for this $n+1$-th agent has no idea whether the agent has indeed entered the black hole or it is stuck due to a missing edge, hence in any situation it cannot ever terminate the algorithm even if the $n+1$-th agent enters the black hole. This shows that it is impossible for $k=n+1$ co-located agents to correctly locate the black hole.\end{proof}

 \begin{blab}\label{corollary:colocatedAgentLB}
     Any BHS algorithm on a dynamic torus $\mathcal{G}$ of size $n\times m$ requires at least $k=n+2$ co-located agents to correctly locate the black hole when each node in $\mathcal{G}$ has a whiteboard of $O(1)$ bits.
 \end{blab}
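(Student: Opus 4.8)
The plan is to obtain Corollary \ref{corollary:colocatedAgentLB} directly from Theorem \ref{theorem:colocatedAgentLB} by a standard monotonicity (padding) argument. Theorem \ref{theorem:colocatedAgentLB} already rules out $k=n+1$ co-located agents; to conclude that at least $n+2$ agents are \emph{necessary}, I must also rule out every $k \le n+1$. I would establish this by showing that solvability is monotone in the number of agents: if BHS is solvable with $k$ co-located agents on $\mathcal{G}$, then it is solvable with $k+1$ co-located agents on $\mathcal{G}$.

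For the monotonicity step, suppose $\mathcal{A}$ solves BHS with $k$ co-located agents (Ids in $[1,k]$). I construct an algorithm $\mathcal{A}'$ for $k+1$ co-located agents (Ids in $[1,k+1]$) as follows: the agent with Id $k+1$ never moves and never writes to any whiteboard, remaining forever at the home node --- which is safe by hypothesis, so this agent is never destroyed --- while the agents with Ids in $[1,k]$ run $\mathcal{A}$ unchanged, treating any co-located agent with Id $> k$ as absent. Since the idle agent alters neither the whiteboard contents nor the effective local snapshots seen by the working agents, the execution of the $k$ working agents under $\mathcal{A}'$ is identical to their execution under $\mathcal{A}$; hence one of them eventually terminates with the correct black hole location, so $\mathcal{A}'$ is a correct BHS algorithm. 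Iterating, solvability with $k$ agents implies solvability with all $k' \ge k$.

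Taking the contrapositive, the impossibility for $k=n+1$ from Theorem \ref{theorem:colocatedAgentLB} propagates downward: BHS is unsolvable with any $k \le n+1$ co-located agents, and therefore any BHS algorithm on a dynamic torus of size $n \times m$ requires at least $k=n+2$ co-located agents. Alternatively, one could re-run the adversarial argument of Theorem \ref{theorem:colocatedAgentLB} verbatim for $k \le n+1$ --- with at most $n$ agents the adversary can freeze all of them at distinct diagonal nodes $v_{i,i}$, so no agent is ever free to report, making termination impossible --- but the monotonicity route is cleaner and reuses the theorem as a black box.

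I expect the only delicate point to be the claim that inserting a permanently idle agent cannot perturb the behaviour of $\mathcal{A}$, given that agents see each other at a node and that whiteboards hold only $O(1)$ bits. This is handled entirely in the construction of $\mathcal{A}'$ by stipulating that the padding agent is inert (no moves, no writes) and that the working agents ignore agents with out-of-range Ids; no other subtlety arises, so the corollary is essentially immediate.
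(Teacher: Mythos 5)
Your proposal is correct and matches the paper's (implicit) reasoning: the corollary is stated without proof as an immediate consequence of Theorem \ref{theorem:colocatedAgentLB}, and your padding/monotonicity argument simply makes explicit the standard step --- ruling out all $k \le n+1$ rather than only $k = n+1$ --- that the authors leave unstated. No gap; the inert-agent construction is sound given that the idle agent neither moves nor writes and the working agents ignore out-of-range Ids.
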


The next lemma gives a lower bound on the round complexity for any exploration algorithm operating along a dynamic ring, where the agents are initially co-located.
\begin{lemm}\label{lemma:colocated4agentLBRingExploration}
   In a dynamic ring of size $n>3$ in presence of whiteboard, any exploration algorithm with $l$ ($l\geq 2$) co-located agents require at least $\Omega(n)$ rounds to explore a ring of size $n$.
\end{lemm}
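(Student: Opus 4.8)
The plan is to prove this by a simple reachability (distance) argument that does not actually need to invoke the adversary's power at all: it holds a fortiori in the dynamic, whiteboard setting because an agent moves across at most one edge per round no matter what the edge-presence schedule is, and the whiteboards affect which moves are made but never how far an agent can get. First I would fix the home node $h$ at which all $l$ agents are initially co-located and identify the ring nodes with $\mathbb{Z}_n$ so that $h=0$; for a node $v$ write $d(v)=\min\{v,\,n-v\}$ for its undirected distance from $h$ along the ring.

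Next I would establish the key invariant: for every agent $a$ and every round $r$, the node occupied by $a$ at the end of round $r$ has distance at most $r$ from $h$ (the exact constant depends only on whether the initial configuration is counted as round $0$ or round $1$, which is irrelevant for an asymptotic bound). This is an immediate induction on $r$: initially the agent sits at $h$, and in each round it either stays put or traverses a single edge, so its distance from $h$ can grow by at most one. Crucially, this invariant is oblivious to the number of agents, since all $l \ge 2$ of them start from the same node $h$; extra agents cannot speed up the spreading front.

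Then I would combine the two observations. An exploration algorithm that halts at round $T$ must have had every node of the ring visited by some agent at some round $r \le T$; by the invariant any such node lies within distance $T$ of $h$. But in an $n$-node ring the set of nodes at distance at most $T$ from a fixed node has size at most $\min\{n,\,2T+1\}$. Hence $2T+1 \ge n$, i.e. $T \ge (n-1)/2 = \Omega(n)$, which is the claim.

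The main obstacle: frankly there isn't a hard step here — the only care required is the bookkeeping of the move/round off-by-one and an explicit remark that the lower bound is independent of $l$ and that passing from the static to the $1$-interval-connected ring can only make the problem harder, so the bound transfers verbatim. If one wanted a proof that genuinely exploits dynamicity (for instance to push the constant up or to match a specific upper bound), the adversary could additionally stall an agent's chosen direction by repeatedly deleting the edge it wants to cross, but this refinement is unnecessary for the stated $\Omega(n)$ bound and I would not pursue it.
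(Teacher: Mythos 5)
Your proof is correct and rests on the same quantitative fact as the paper's own argument, namely that agents starting co-located can enlarge the set of visited nodes by at most two per round, so after $T$ rounds at most $2T+1$ nodes have been seen and hence $T \geq (n-1)/2 = \Omega(n)$. Your distance-ball invariant is simply a cleaner, fully algorithm-independent formalization of the paper's informal claim that ``at most 2 nodes can be explored at each round,'' and your remark that the bound needs neither the adversary nor the whiteboard is accurate.
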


\begin{proof}
    Suppose $l$ agents are initially co-located at a node and they execute some exploration algorithm $\mathcal{H}$. Note that any efficient exploration algorithm must instruct the agents to concurrently explore the ring, so the only possibility for concurrency in any such efficient algorithm $\mathcal{H}$, is to instruct some agents to move in a clockwise direction whereas the another set of agents to move in a counter-clockwise direction. In this situation, while executing $\mathcal{H}$ observe that at each round at most 2 nodes can be explored (if none are blocked by a missing edge), so in order to explore a ring of size $n$ at least $\frac{n}{2}=\Omega(n)$ rounds are required.  
\end{proof}

\begin{theorem}[\cite{BHSDynRingLuna}]\label{theorem:colocatedDiLunaRingLB}
    In a dynamic ring of size $n>3$, any BHS algorithm with 3 co-located agents in presence of whiteboard requires $\Omega(n^{1.5})$ rounds, even if the agents have distinct Ids. 
\end{theorem}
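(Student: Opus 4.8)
This theorem is cited from prior work (Di Luna et al. \cite{BHSDynRingLuna}), so the task is to reconstruct the argument establishing an $\Omega(n^{1.5})$ round lower bound for BHS on a dynamic ring with $3$ co-located agents and whiteboards. The plan is an adversarial argument built around the single extra agent. With $3$ agents, two can be stuck (the black hole can consume one, and the $1$-interval connectivity lets the adversary block another), so the progress of exploration is driven essentially by a single ``active'' agent that must simultaneously probe unexplored territory and periodically report back to the surviving teammate(s) — otherwise the cautious-walk style safety invariant breaks and no agent can ever safely terminate (this is exactly the reasoning already used in the proof of Theorem \ref{theorem:colocatedAgentLB}). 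I would first formalize this into a claim: in any correct BHS algorithm, before the active agent explores a ``fresh'' arc it must have established, via a whiteboard token / physical rendezvous, that some alive agent knows the boundary of the explored-safe region, and after exploring new nodes it must return to re-synchronize before venturing further; otherwise the adversary can place the black hole so that the active agent is lost without any survivor knowing where.

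The core of the argument is then to show the adversary can force these ``report trips'' to be both frequent and long. Here is the delay gadget I would use: whenever the active agent sets off from the explored/safe boundary toward a new node, the adversary removes the edge just ahead of it for a stretch of roughly $\sqrt{n}$ rounds, so the agent makes little headway; symmetrically, whenever it turns around to report, the adversary removes the edge behind it for another $\sqrt{n}$-length stretch. Over a window of $O(\sqrt{n})$ rounds the active agent therefore advances the explored frontier by only $O(1)$ net new nodes in the worst case — or, dually, each newly explored node costs the agent a round trip of length comparable to the current distance from the reporting partner, which the adversary keeps at $\Theta(\sqrt{n})$ by the same stalling trick while occasionally letting the frontier creep forward. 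Since $\Theta(n)$ nodes must be explored and each contributes $\Omega(\sqrt{n})$ rounds of travel/stalling that cannot be parallelized away (with only one effectively-mobile agent, there is no second front as in Lemma \ref{lemma:colocated4agentLBRingExploration}), the total is $\Omega(n \cdot \sqrt{n}) = \Omega(n^{1.5})$ rounds. I would package the bookkeeping as a potential-function / amortized argument: define a potential that measures ``explored nodes minus (active agent's distance from the partner)$/\sqrt{n}$'' and show the adversary can keep its increase to $O(1)$ per $\Omega(\sqrt{n})$ rounds.

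The delicate point — and the main obstacle — is ruling out cleverer strategies than a single shuttling agent: for instance, the algorithm leaving whiteboard breadcrumbs so the agent need not physically return, or using both non-consumed agents as a moving ``relay'' that shortens report trips, or exploiting that the adversary can block only one edge per round on the ring so a sufficiently clever back-and-forth pattern amortizes the stalls. To handle the whiteboard, I would argue that $O(1)$-bit whiteboards cannot encode unbounded distance information, so a returning agent still needs physical contact (or a full re-traversal) to certify safety — this is where the $O(1)$-bit hypothesis is essential, and indeed it is why the whiteboard model gives $n^{1.5}$ rather than the $n^2$ of the face-to-face model. To handle relays, I would show that moving the partner toward the frontier itself requires traversing currently-unexplored-or-recently-explored nodes under adversarial edge removals, so it cannot be done faster than the shuttling it is meant to replace, and crucially two agents moving as a relay leave no third agent to guarantee a survivor if the black hole is encountered. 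Formalizing ``the adversary maintains distance $\Theta(\sqrt{n})$ between frontier and reporter no matter what'' against all these strategies simultaneously is the crux; I would structure it as an explicit adversary that reacts to the agents' moves round by round, maintaining an invariant (explored region is a contiguous arc; the black hole's location is consistent with $\geq 2$ boundary placements; the reporter is $\Theta(\sqrt{n})$ from at least one consistent black-hole site) and checking the invariant survives every possible agent action.
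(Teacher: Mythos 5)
The paper does not actually prove this statement: Theorem~\ref{theorem:colocatedDiLunaRingLB} is imported verbatim from the cited reference \cite{BHSDynRingLuna}, so there is no in-paper proof to compare against. Judged on its own terms, your reconstruction has the right high-level intuition --- the $\Omega(n^{1.5})$ bound does come from a tradeoff between how far an agent explores before it must re-certify survival to a partner and how long the adversary can make each certification trip, with $\sqrt{n}$ as the balance point --- but as written it has a concrete gap in the adversary's budget accounting. In a single $1$-interval connected ring the adversary may remove only \emph{one} edge per round. Your argument simultaneously asks the adversary to (i) keep a second agent permanently stuck so that ``the progress of exploration is driven by a single active agent,'' and (ii) stall that active agent by removing the edge just ahead of it (and later the edge behind it) for $\Theta(\sqrt{n})$-round stretches. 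Unless the stuck agent and the active agent happen to be contending for the same edge, these two demands require two concurrent removals, which the model forbids. This is not a technicality: the reason the whiteboard bound is $\Theta(n^{1.5})$ rather than $\Theta(n^2)$ is precisely that the agents \emph{can} sometimes make progress on two fronts while the adversary's single removal is committed elsewhere, so any correct lower-bound argument must let the frontier advance during the stalls and still extract $\Omega(\sqrt{n})$ amortized cost per explored node. Your potential-function claim (``increase $O(1)$ per $\Omega(\sqrt{n})$ rounds'') is stated but never verified against an adversary that respects the one-edge budget.

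The second gap is one you flag yourself: the invariant that the adversary maintains $\Theta(\sqrt{n})$ separation between the frontier and the reporting agent ``no matter what,'' against relays, breadcrumb strategies, and block-exploration schedules, is the entire content of the theorem, and the proposal leaves it as a declared intention rather than an argument. The auxiliary claim that $O(1)$-bit whiteboards ``cannot encode unbounded distance information, so a returning agent still needs physical contact'' also needs care: cautious-walk marks on ports are exactly how the matching $O(n^{1.5})$ upper bound avoids physical rendezvous for adjacent steps, so the lower bound cannot rest on agents needing to physically meet after every new node --- it must rest on the adversary's ability to create \emph{ambiguity} (missing edge vs.\ black hole) that can only be resolved by a long detour around the ring, and on counting how many such ambiguity events the adversary can force. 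I would accept the proposal as a correct statement of the proof's architecture, but not as a proof; to repair it you need an explicit round-by-round adversary that spends its single edge removal per round and still certifies the claimed amortized cost.
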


The following corollary follows from Lemma \ref{lemma:colocated4agentLBRingExploration} and Theorem \ref{theorem:colocatedDiLunaRingLB}.

\begin{blab}\label{corollary:colocated4agentRingBHSLB}
    In a dynamic ring of size $n>3$, any BHS algorithm with at least 4 co-located agents in presence of whiteboard requires $\Omega(n)$ rounds, even if the agents have distinct Ids.
\end{blab}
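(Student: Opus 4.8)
The plan is to obtain the corollary directly from the two cited results by observing that black hole search on a dynamic ring necessarily subsumes exploration of almost the whole ring. The key point is that, in the worst case over the adversary's choices (of dynamics and of where to hide the black hole), any BHS algorithm must cause $n-1$ of the $n$ nodes — every node other than the one hosting the black hole — to be visited by some agent before any surviving agent can correctly announce the black hole's location: if two or more nodes were still unvisited, the black hole could sit at either of them and no agent could distinguish the two scenarios. Hence a BHS algorithm is in particular an exploration algorithm for $n-1$ nodes, and the lower bounds of Lemma \ref{lemma:colocated4agentLBRingExploration} and Theorem \ref{theorem:colocatedDiLunaRingLB} apply.

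Concretely, I would argue as follows. Fix a BHS algorithm $\mathcal{A}$ with $l\geq 4$ co-located agents starting at a home node $h$. Since missing edges can only delay agents, after $t$ rounds every node that any agent has ever visited lies within ring-distance $t-1$ of $h$, so at most $2t-1$ distinct nodes can have been visited. Consequently, if $\mathcal{A}$ always terminated within $o(n)$ rounds, then on some run at least two nodes would never be visited by any agent; the adversary places the black hole at one of them, and the terminating agent is then unable to tell which of the two unvisited nodes is the black hole, contradicting correctness. Hence $\mathcal{A}$ requires $\Omega(n)$ rounds. This is precisely the mechanism behind Lemma \ref{lemma:colocated4agentLBRingExploration}, here instantiated with $l\geq 4\geq 2$; Theorem \ref{theorem:colocatedDiLunaRingLB} complements it by recording that with only $3$ agents the bound is in fact the stronger $\Omega(n^{1.5})$, which is why $\Omega(n)$ is the appropriate statement precisely in the regime $l\geq 4$ (and, indeed, it is matched by the $O(n)$-round algorithms developed later in the paper).

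The part that needs care — and which I expect to be the main obstacle — is justifying that the black hole does not make this embedded exploration task "free", for two reasons. First, the black hole removes agents; but this only decreases the explorers' collective reach, so it can only strengthen the lower bound, and the distance-counting argument never refers to the number of surviving agents. Second, the whiteboards let an agent infer that a node is safe without standing on it — however, any whiteboard content concerning a node $v$ must have been written by an agent that physically reached a port incident to $v$, and in particular one that had already traveled to distance at least $\mathrm{dist}(h,v)-1$ from $h$; so the set of nodes that can ever be certified safe is still confined to a $t$-radius interval around $h$, and the counting goes through. The remaining points are routine: a single adversary strategy suffices for a lower bound (so we need not analyse $\mathcal{A}$ under all possible dynamics), and distinct agent Ids do not help, since the obstruction is purely informational — there is simply no agent that has ever been near the hidden node.
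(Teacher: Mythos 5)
Your proposal is correct and follows essentially the same route as the paper, which derives the corollary from Lemma \ref{lemma:colocated4agentLBRingExploration} by implicitly observing that BHS subsumes exploring all but the black-hole node, so the ``at most $O(1)$ new nodes per round from a co-located start'' counting gives $\Omega(n)$. Your write-up merely makes the reduction (the indistinguishability of two unvisited candidate nodes, and the irrelevance of whiteboards for nodes no agent has approached) explicit, which if anything is more rigorous than the paper's informal exploration lemma; the minor off-by-one in the radius bound ($t$ versus $t-1$) is harmless.
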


The next theorem gives a lower bound on the round complexity for any BHS algorithm operating on a dynamic torus with $k$ co-located agents.
\begin{theorem}\label{theorem:colocatedLBComplexity}
     Any BHS algorithm with $k=n+c$ co-located agents, where $c\in \mathbb{Z^{+}}$ and $c\geq 2$, on a $n\times m$ dynamic torus requires at least $\Omega(m\log n)$ rounds.
\end{theorem}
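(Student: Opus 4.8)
The plan has two parts: an easy $\Omega(m)$ bound, and an amplification to $\Omega(m\log n)$ via an adversary that forces progress to be geometric in the row direction while charging $\Omega(m)$ per geometric step. The $\Omega(m)$ part is immediate: the footprint has diameter $\Theta(m)$, all agents start at one node, and standard indistinguishability arguments show that before any agent may terminate every safe node must have been visited (an unvisited safe node and the real black-hole node form two indistinguishable candidates for a survivor), so in particular some node at distance $\Omega(m)$ from home must be reached. Observe that plugging the dynamic-ring lower bounds of Theorem~\ref{theorem:colocatedDiLunaRingLB} and Corollary~\ref{corollary:colocated4agentRingBHSLB} into a single row ring only reproduces $\Omega(m)$ for general $c$, so the extra $\log n$ factor has to come from an argument that is global to the torus.

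For the amplification, fix home at $v_{0,0}$ and let $\Phi(r)$ denote the number of row rings all of whose safe nodes are visited by round $r$; call such rings \emph{certified}. By the indistinguishability argument, $\Phi$ must reach $n-1$ before termination, whereas $\Phi(0)=0$ and certifying even the home ring already costs $\Omega(m)$ rounds. The core claim is that there is an adversary under which $\Phi$ increases by at most a factor of $2$ over any window of $\epsilon m$ consecutive rounds, for a fixed constant $\epsilon>0$; granting this, raising $\Phi$ from $1$ to $n-1$ requires $\Omega(\log n)$ such windows, hence $\Omega(m\log n)$ rounds, which is the theorem.

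To build such an adversary I would combine two observations. First, a team of at most two agents cannot safely make progress along a $1$-interval-connected ring of length $m$: a well-placed missing edge makes ``my partner is beyond the gap'' indistinguishable from ``my partner entered the black hole'', so certifying a row that is not yet ruled out needs at least three agents operating inside it. Since only $n+c$ agents are available for $n$ rows, at essentially every moment all of them are committed as the teams that certified the already-certified rows, and certifying a fresh row requires disbanding and relocating such a team. Second, the adversary uses its one missing column edge per column ring to keep, for each certified row $R_i$, the only usable crossings out of $R_i$ at distance $\Omega(m)$ around $R_i$ from that row's team, re-choosing this crossing adaptively every round; it likewise forces an $\Omega(m)$ traversal inside any ring before that ring is certified. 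Then every relocation of a team costs $\Omega(m)$ rounds, so within one $\epsilon m$ window each certified row spawns at most one new team and the initial pile of agents at home launches only $O(1)$ teams, which is exactly the at-most-doubling of $\Phi$.

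The main obstacle, and where most of the formal work lies, is showing that this adversary is realizable: the ``one far crossing per certified row'' pattern of missing column edges, together with the missing row edges needed to force the in-ring traversals, must be maintainable all at once while every row ring and every column ring stays $1$-interval connected, must be re-selectable every round, and must be robust against the agents exploiting whiteboard contents --- written one round before the adversary commits its next move --- to find a crossing cheaply or short-cut a forced traversal. Secondary points are pinning down the exact geometric growth rate (a weaker rate only changes the hidden constant, but one must rule out a super-geometric escape) and the precise value of $\Phi$ at which correct termination first becomes possible.
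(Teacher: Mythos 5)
There is a genuine gap, and it sits exactly at the step you flag as the ``core claim'': that $\Phi$ can at most double over any window of $\epsilon m$ rounds. Your justification rests on the premise that new teams can only be spawned by the teams sitting in already-certified rows (plus $O(1)$ from home), but nothing in the model supports this. With $n+c$ agents initially co-located, the algorithm can disperse them along the home column in $O(n)=O(m)$ rounds (the adversary's single missing edge per column ring cannot stop a two-directional dispersal) and field $\Theta(n)$ three- or four-agent teams in $\Theta(n)$ distinct rows simultaneously. Under your accounting each such team certifies its row in $O(m)$ rounds and then relocates at a further cost of $O(m)$, so after $O(1)$ windows all $n$ rows are certified and the total cost is $O(m)$ --- your argument produces no obstruction to this schedule, and hence no $\log n$ factor. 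The two readings of ``committed'' both fail: if certified rows permanently trap their whole team the process stalls forever (too strong, contradicted by the paper's own $O(nm)$ upper bound), and if teams merely pay $\Omega(m)$ to escape then $\Theta(n)$ parallel teams finish in $O(1)$ relocations.

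The ingredient you are missing, and the one the paper's proof is built on, is that the adversary does not just delay teams --- it \emph{permanently strands one agent in every ring that is being concurrently explored}, using the 1-interval connectivity of that ring. This turns the free-agent pool into a geometrically decaying quantity: if teams have size $l\ge 4$ (the paper's Corollary~\ref{corollary:colocated4agentRingBHSLB} gives the $\Omega(m)$ cost per phase for such teams), a phase with $k'/l$ active rings loses $k'/l$ agents, so $k'\mapsto k'(1-\tfrac{1}{l})$, yielding the recursion $T(\alpha)=T(\alpha-1)(1-\tfrac{1}{l})$. Since $k=n+c$ exceeds $n$ only by a constant, $\Omega(\log n)$ such phases, each costing $\Omega(m)$ rounds, are needed before either the torus is covered or too few agents remain to work concurrently. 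Your proposal never identifies this loss-of-an-agent-per-ring mechanism; the ``keep the usable crossings $\Omega(m)$ away'' adversary you describe only charges relocation time, which is amortized away by parallelism. The correct statement is not that $\Phi$ grows at most geometrically, but that the workforce shrinks geometrically, and that is what forces $\Omega(\log n)$ rounds of $\Omega(m)$-cost exploration.
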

\begin{proof}
Given a dynamic torus of size $n\times m$ (with $3\leq n \leq m$) and $k=n+c$ agents are initially co-located at a safe node, observe by Corollary \ref{corollary:colocated4agentRingBHSLB}, $l$ (where $l\geq 4$) agents can perform BHS in presence of whiteboard along a row ring of size $m$ in at least $\Omega(m)$ rounds. Now, let us consider there exists an algorithm $\mathcal{H}$ which is tasked to perform BHS along the dynamic torus $\mathcal{G}$, so concurrently exploring a set of rings by a set of $l$ agents is always a better strategy rather than exploring a ring one at a time by a set of agents. Hence, we consider $\mathcal{H}$ instructs a set of $l$ agents to explore a set of rings concurrently. So, if $t$ (where $t\leq \frac{k}{l}$) rings are concurrently explored by the set of $k$ agents, then as each ring in $\mathcal{G}$ is 1-interval connected, so the adversary has the ability to block an agent each in every $t$ such rings (refer the agents $a_1,~a_2,\ldots,~a_6$ in Fig. \ref{fig:initial_configuration}). This means the remaining agents left to explore for the next concurrent exploration is at least $k-\frac{k}{l}$, where each of these concurrent exploration requires $\Omega(m)$ rounds and the number of rings till now explored is $\frac{k}{l}$. In the next concurrent exploration, at least $\frac{k-\frac{k}{l}}{l}$ row rings can be explored in $\Omega(m)$ rounds, which further blocks this many agents, and the remaining agents left to explore remaining graph is $k-\frac{k}{l}-\frac{k-\frac{k}{l}}{l}$, whereas the total number of row rings explored yet is $\frac{k}{l}+\frac{k-\frac{k}{l}}{l}$. Continuing this way, we can define a recursion relation on the remaining number of agents, $T(\alpha)=T(\alpha-1)(1-\frac{1}{l})$, where $T(\alpha)$ resembles that at the $\alpha$-th iteration this many agents are left to explore the remaining part of $\mathcal{G}$ and each such concurrent exploration for black hole requires $\Omega(m)$ rounds. So, for $\alpha$ many iterations $\mathcal{H}$ requires $\alpha\Omega(m)=\Omega(\alpha m)$ rounds. Now, we try to approximate the value of $\alpha$. Observe, when $T(\alpha)\leq 7$, then either the whole torus is explored in the worst case for the black hole or there is no further concurrency possible because in order to concurrently explore at least two rings in $\Omega(m)$ rounds, a minimum of 8 agents (as 4 agents are at least required to explore a ring in $\Omega(m)$ rounds) are required to be left available, so if at most 7 agents are remaining that means no concurrency is possible for any BHS algorithm. Hence, for $T(\alpha) \leq 7$, we approximate the value of $\alpha$. 
% \begin{align*}
%     k-T(\alpha)\leq 7\\
%     k-(\frac{k-T(\alpha-1)}{l})\leq 7\\
%     k-\frac{k}{l}+\frac{T(\alpha-1)}{l}\leq 7\\
%      \vdots\\
%      [1-\frac{1}{l}+\frac{1}{l^2}+\cdots+{(\frac{-1}{l})}^{\alpha}]\leq \frac{7}{k}\\
%      1-{(\frac{-1}{l})}^{\alpha}\leq \frac{7(t+1)}{kt}\\
%      1-\frac{7(t+1)}{kt} \leq {(\frac{-1}{l})}^{\alpha}\\
%      -\frac{7(t+1)}{kt} < {(\frac{-1}{l})}^{\alpha}
%      \intertext{\centering If $\alpha$ is even take $\alpha+1$ }\\
%      \frac{7(t+1)}{kt} > \frac{1}{l^{\alpha}}\\
%      l^\alpha > \frac{kt}{7(t+1)}\\
%      \alpha > \frac{1}{\log l} \log {(\frac{kt}{7(t+1)})}
     
% \end{align*}

% \begin{align*}
%     k - T(\alpha) &\leq 7
%     & \implies k - \left(\frac{k - T(\alpha - 1)}{l}\right) &\leq 7\\
%     k - \frac{k}{l} + \frac{T(\alpha - 1)}{l} &\leq 7\\
%     &\vdots\\
%     \left[1 - \frac{1}{l} + \frac{1}{l^2} + \cdots + \left(\frac{-1}{l}\right)^{\alpha}\right] &\leq \frac{7}{k}\\
%     1 - \left(\frac{-1}{l}\right)^{\alpha} &\leq \frac{7(t + 1)}{kt}\\
%     1 - \frac{7(t + 1)}{kt} &\leq \left(\frac{-1}{l}\right)^{\alpha}\\
%     -\frac{7(t + 1)}{kt} &< \left(\frac{-1}{l}\right)^{\alpha}\\
%     \intertext{If $\alpha$ is even, take $\alpha=\alpha+1$}\\
%     \frac{7(t + 1)}{kt} &> \frac{1}{l^{\alpha}}\\
%     l^{\alpha} &> \frac{kt}{7(t + 1)}\\
%     \alpha &> \frac{\log\left(\frac{kt}{7(t + 1)}\right)}{\log l}
% \end{align*}

{
  $ \displaystyle
    \begin{aligned} 
        T(\alpha) \leq 7  \implies T(\alpha-1)\left(1 - \frac{1}{l}\right) \leq 7 \implies T(\alpha-1) \leq \frac{7l}{l-1} \\
        \implies T(\alpha - 2) \left(1 - \frac{1}{l}\right)\leq \frac{7l}{l-1} \leq 7{\left(\frac{l}{l-1}\right)}^2 \cdots \implies T(1)\leq  7{\left(\frac{l}{l-1}\right)}^{\alpha-1}\\
        \implies k\left(1-\frac{1}{l}\right)\leq  7{\left(\frac{l}{l-1}\right)}^{\alpha-1} \implies k \leq 7  {\left(\frac{l}{l-1}\right)}^{\alpha} \implies \frac{\log k - \log 7}{\log \left(\frac{l}{l-1}\right)} \leq \alpha
        \end{aligned}
  $ 
\par}
% Next, if $\alpha$ is even then take $\alpha=\alpha+1$, hence we have,\\

% {
%   $ \displaystyle
%     \begin{aligned} 
%        \frac{7(l + 1)}{kl} > \frac{1}{l^{\alpha}} \implies l^{\alpha} > \frac{kl}{7(l + 1)} \implies \alpha &> \frac{\log\left(\frac{kl}{7(l + 1)}\right)}{\log l}
%         \end{aligned}
%   $ 
% \par}

This implies $\alpha \approx \log n$, as $k=n+c$ and $l\geq 4$. Hence, this means that for any algorithm $\mathcal{H}$, in order to either explore the whole dynamic torus for a black hole or to stop concurrent exploration, at least $\alpha \approx \log n$ many concurrent exploration needs to be performed, where each iteration takes $\Omega(m)$ rounds. This concludes that the total number of rounds at least required by any algorithm with $k=n+c$ co-located agents is $\Omega(m\log n)$.
\end{proof}

\subsection{Scattered Agents}
 \begin{figure}
  	\centering
  	\includegraphics[scale=0.6]{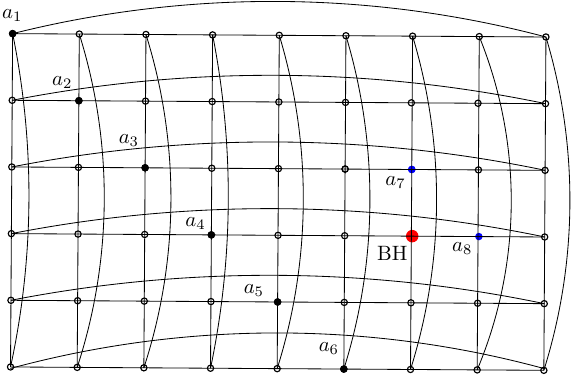}
  	\caption{A initial configuration when the agents are scattered along the dynamic torus}
  	\label{fig:initial_configuration}
  \end{figure}
  The following theorem shows the impossibility to locate the black hole with $k=n+2$ scattered agents. 
\begin{theorem}\label{theorem:scatteredAgentLB}
    Given a dynamic torus $\mathcal{G}$ of size $n\times m$, there does not exist any BHS algorithm which can correctly locate the black hole with $k=n+2$ scattered agents, the result holds as well even if the nodes in $\mathcal{G}$ has a whiteboard.
\end{theorem}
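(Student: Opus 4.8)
The plan is to suppose a BHS algorithm $\mathcal{H}$ works with $k=n+2$ scattered agents and to construct an adversarial temporal torus together with a scattered initial placement on which $\mathcal{H}$ either never terminates or outputs a wrong node. I would place the agents so as to recreate, directly at round $0$, the configuration underlying the proof of Theorem~\ref{theorem:colocatedAgentLB}: one agent at each diagonal node $v_{0,0},\dots,v_{n-1,n-1}$ (all safe), the two remaining agents $a'$ and $a''$ on two further safe nodes, and the black hole at a node $v_{p,q}$ with $p\neq q$. As in that proof, in each round the adversary deletes, for every diagonal agent at $v_{i,i}$, the single edge of $R_i$ or $C_i$ in the direction that agent is about to take; since the diagonal nodes occupy pairwise distinct rows and pairwise distinct columns these deletions never contend for the same ring's budget, so the $n$ diagonal agents stay frozen forever and never certify any node other than their own. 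A frozen agent can moreover never be the terminating agent: it sits at a fixed node, so any fact it learns about a remote node must be relayed by $a'$ or $a''$, and the adversary can stall such a relay indefinitely by keeping one fixed edge missing (a ring minus an edge is still a path, so $1$-interval connectivity is preserved); hence it can never tell a messenger delayed by a missing edge from one consumed by the black hole. So the whole burden of exploration and of termination falls on $a'$ and $a''$.

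The heart of the argument is then that two agents cannot finish, and the scattered start is precisely what forces this. Since $a'$ and $a''$ begin at mutually unknown nodes and at round $1$ no node outside the start nodes has been visited, any correct $\mathcal{H}$ must eventually let a free agent that has not yet met a partner enter a still-uncertified node. I would use this to drive the execution to a state with two still-unexplored candidate nodes $x,y$ such that neither free agent can determine which is the black hole: the adversary lets $a'$ commit to entering $x$ (writing that intent on every whiteboard it passes) and then realizes, indistinguishably to $a''$, either (a) $x=v_{p,q}$, so $a'$ is consumed and $y$ is left untouched, or (b) $x$ is safe but the last edge into $x$ stays missing forever, so $a'$ is merely stuck one step short of $x$ while $y=v_{p,q}$. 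In case (a), if $a''$ waits it never terminates; in case (b), if $a''$ instead enters $y$ it is consumed, and with every other agent frozen $\mathcal{H}$ hangs with $v_{p,q}$ never located. Either way $\mathcal{H}$ fails. A co-located pair escapes this because it starts already together at a common safe base and can run a coordinated cautious walk from round $1$, which is why the co-located impossibility stops at $n+1$ agents while here it reaches $n+2$.

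The step I expect to be the main obstacle is formalizing that drive into the ``two candidates, no resolver'' state. In particular I must block the obvious escape in which $a''$ simply walks to $a'$'s stuck location and sees that $a'$ is alive: the adversary has to keep $a''$ away from that node, and from the whiteboard trail leading to it, until $a''$ has itself been pushed into an analogous stuck-or-consumed state; and I must rule out the two free agents jointly imitating a $3$-agent dynamic-ring strategy with a frozen diagonal agent serving as a stationary token. I would handle this by a coupling of two executions (black hole at $v_{p,q}$ versus a symmetric alternative) together with an inductive bookkeeping of the per-ring missing-edge budgets, showing that with only $n+2$ agents the adversary can always simultaneously keep the $n$ diagonal agents frozen and keep each of $a'$ and $a''$ either confined or perpetually one node short of the node it is trying to certify, until at least two candidate nodes remain.
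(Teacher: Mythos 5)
Your setup (freezing $n$ agents on the diagonal and reducing everything to the two remaining agents) matches the paper's, but from there you take a genuinely different route, and the route you choose has a gap. The paper never needs an indistinguishability or perpetual-stalling argument on the two free agents. Instead it exploits the adversary's control over the \emph{initial scattered placement} together with its choice of black hole location: since the algorithm is deterministic, the adversary precomputes where the free agent $a_{n+2}$ makes its first ``turn'' (its first vertical move after a horizontal prefix, or vice versa), places the black hole at exactly that node $v_{i,j}$, and places the other free agent $a_{n+1}$ at the neighbour of $v_{i,j}$ into which $a_{n+1}$'s own first move leads (the second case, where the two first moves are in different orientations, is handled by putting the black hole at the junction). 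Then $a_{n+1}$ is consumed on its very first step and $a_{n+2}$ on its $(\alpha+1)$-st: \emph{both} free agents die, and the $n$ frozen agents trivially can never learn or report anything. No surviving mobile agent has to be fooled, because none survives.

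The gap in your version is exactly the step you yourself flag as the main obstacle: you never establish that the adversary can reach and then maintain the ``two candidates, no resolver'' state, and this is not a routine formalization. Freezing the diagonal already spends, at each round, the single missing-edge budget of whichever ring ($R_i$ or $C_i$) the agent at $v_{i,i}$ is about to cross. If $a'$ or $a''$ travels in one of those rings at a round when that ring's budget is committed to the diagonal agent, the adversary cannot also stall the free agent there; and your case (b), where ``the last edge into $x$ stays missing forever,'' permanently ties up one ring's budget, which can collide with the freezing of the diagonal agent in that same ring. Your proposed remedy --- a coupling of two executions plus ``inductive bookkeeping of the per-ring missing-edge budgets'' --- is precisely the argument that would have to be carried out, and it is left entirely open; the same budget conflict also threatens your claim that a frozen agent can never be reported to, since stalling a messenger again requires holding an edge missing in a specific ring. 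The paper's construction sidesteps all of this by ensuring that no free agent survives to need stalling, so if you want a complete proof along the paper's lines you should replace the indistinguishability core with the ``choreograph both free agents into the black hole'' argument.
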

\begin{proof}
Consider a dynamic torus $\mathcal{G}$ of size $n\times m$ (where $m=n+2$), along which the adversary places $n$ among $n+2$ agents at the nodes of the form $v_{i,i}$ (where $0\leq i \leq n-1$), whereas the remaining two agents $a_{n+1}$ and $a_{n+2}$ are initially placed along the nodes of $C_t$ (where $t \in \{n,n+1\}$). Now, as per the placement of first $n$ agents, the adversary has the ability to keep them fixed at their initial position by removing a horizontal and vertical edge, restricting any possible movement (refer the agents $a_1,\ldots,a_6$ in Fig. \ref{fig:initial_configuration}). Based on the movement of the remaining two agents (i.e., $a_{n+1}$ and $a_{n+2}$) while executing some BHS algorithm $\mathcal{H}$, we have the following cases.

\begin{itemize}
    \item Let both the agents move horizontally (resp, vertically) until one among them after $\alpha$ steps, say, first moves vertically (resp, horizontally). Without loss of generality let that agent be $a_{n+2}$ and that node in which it moves after its first vertical (resp, horizontal) move be, $v_{i,j}$. In this situation, the adversary itself places the black hole at $v_{i,j}$ and the other agent $a_{n+1}$ is according placed at either $v_{i,j-1}$ (resp, $v_{i-1,j}$)  or $v_{i,j+1}$ (resp, $v_{i+1,j}$) based on its first horizontal (resp, vertical) move. So, the black hole consumes $a_{n+1}$ after its first step whereas $a_{n+2}$ gets consumed at $\alpha+1$-th step since its execution. In this scenario, both the agents are consumed, whereas the remaining $n$ stuck agents have no knowledge about the black hole position, and cannot ever terminate the algorithm.
    \item If one agent's (say, $a_{n+1}$) first move is along horizontal direction and the other agent's first move is along vertical direction, then also placing the black hole at the junction will invariably consume both these agents, whereas the remaining $n$ agents neither have any knowledge about the black hole position nor they can move from their current position.
\end{itemize}
So, in each possible case we show that with $k=n+2$ agents it is not possible to correctly locate the black hole.\end{proof}

\begin{blab}\label{corollary:scatteredAgentLB}
    Any BHS algorithm on a dynamic torus $\mathcal{G}$ of size $n\times m$ requires at least $k=n+3$ scattered agents to correctly locate the black hole when each node in $\mathcal{G}$ has a whiteboard of $O(1)$ bits.
\end{blab}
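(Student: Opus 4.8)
The plan is to derive this corollary directly from Theorem~\ref{theorem:scatteredAgentLB}, using a simple monotonicity observation: deploying more agents can never hurt a BHS algorithm, so the impossibility for exactly $n+2$ scattered agents propagates down to every $k \le n+2$. Concretely, I would first recall that Theorem~\ref{theorem:scatteredAgentLB} exhibits a dynamic torus (there, with $m = n+2$) together with an adversarial schedule of missing edges and a placement of the agents on safe nodes for which \emph{no} BHS algorithm can have an agent that survives, terminates, and correctly reports the black hole when exactly $n+2$ scattered agents are used. Since a lower bound only requires a single bad instance, this same torus serves as the witness for the corollary.

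Next I would spell out the reduction. Suppose, for contradiction, that some algorithm $\mathcal{A}$ solves BHS on this torus with $k \le n+2$ scattered agents. I construct an algorithm $\mathcal{A}'$ for $n+2$ scattered agents in which $k$ of the agents run $\mathcal{A}$ verbatim, while the remaining $n+2-k$ agents execute the trivial program that never moves and never writes to any whiteboard. By the scattered-configuration assumption every agent starts on a safe node, so the idle agents never enter the black hole; and since they never move and never touch a whiteboard, they do not change the snapshots, whiteboard contents, or inter-agent communications perceived by the $k$ active agents (the model lets an agent distinguish and ignore co-located agents by Id). Hence the execution of the $k$ active agents under $\mathcal{A}'$ is, round by round, identical to their execution under $\mathcal{A}$, so the same active agent survives, terminates, and correctly identifies the black hole. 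Thus $\mathcal{A}'$ solves BHS with $n+2$ scattered agents, contradicting Theorem~\ref{theorem:scatteredAgentLB}. Therefore $k \ge n+3$, which is exactly the claim.

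I do not expect a genuine obstacle: the corollary is essentially a repackaging of the theorem. The only point meriting a sentence of care is the non-interference of the padding agents — one checks that the adversary can indeed place them on safe nodes (immediate, as in the scattered model the adversary itself chooses safe starting positions, e.g.\ along $C_n$ or $C_{n+1}$ as in the proof of Theorem~\ref{theorem:scatteredAgentLB}) and that their presence at a node perturbs neither the $O(1)$-bit whiteboards nor the behaviour of passing active agents. A direct alternative, namely rerunning the adversarial argument of Theorem~\ref{theorem:scatteredAgentLB} while simply omitting some of the ``stuck diagonal'' agents, also works, but it is slightly more delicate when $k < n$ because then not every diagonal node $v_{i,i}$ receives an agent; the idle-padding reduction avoids this case analysis entirely and is the route I would take.
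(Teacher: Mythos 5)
Your proof is correct and takes essentially the same route as the paper, which states the corollary as an immediate consequence of Theorem~\ref{theorem:scatteredAgentLB} without further argument; your idle-padding reduction simply makes explicit the monotonicity step (impossibility for $n+2$ scattered agents implies impossibility for every $k\le n+2$) that the paper leaves implicit. The non-interference check for the padding agents is the right point to flag, and your handling of it is sound.
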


Following theorem is inspired from Theorem 4.2 in \cite{DynTorusExpGotoh}, which gives the lower bound on the round complexity for any BHS algorithm with $k$ scattered agents along $\mathcal{G}$.
\begin{theorem}\label{theorem:scatteredRoundLB}
    Any BHS algorithm with $k=n+c$ scattered agents, where $c\in \mathbb{Z^{+}}$ and $c\geq 3$, on a $n\times m$ dynamic torus $\mathcal{G}$ requires at least $\Omega(mn)$ rounds.
\end{theorem}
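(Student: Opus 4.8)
The plan is to mimic the argument behind Theorem 4.2 of \cite{DynTorusExpGotoh} for dynamic torus exploration, adapting it to the BHS setting where the agents are scattered and a whiteboard is available at each node. The core idea is a standard adversarial "delaying" argument: the adversary will exploit the $1$-interval connectivity of each ring to keep agents from making fast progress, forcing any correct BHS algorithm to spend $\Omega(mn)$ rounds before some agent can safely terminate with the black hole's location. Since a correct BHS algorithm must, in the worst case, explore all $nm$ nodes (every unexplored node could be the black hole), and since no agent may enter the black hole and return, the bound will come from showing that the adversary can restrict the "exploration rate" to $O(1)$ new nodes per round while simultaneously hiding the black hole until almost the entire torus has been explored.

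\textbf{Key steps.} First I would fix a worst-case initial placement of the $k = n+c$ scattered agents, analogously to the placements used in Theorem~\ref{theorem:scatteredAgentLB} and Fig.~\ref{fig:initial_configuration}: put many agents on "diagonal" nodes $v_{i,i}$ that the adversary can freeze by removing one horizontal and one vertical edge per such node, leaving only a small constant number of agents free to roam. Second, I would argue that at every round the adversary can choose the missing edge in each row ring (and each column ring) so that the set of agents that are genuinely making progress toward a not-yet-explored node is small — in fact, that at most a constant number of new nodes of the torus get visited per round. The adversary's power here comes from the fact that at any round it may remove, in each of the $n$ row rings, the single edge that an advancing agent in that ring wants to cross, and similarly for column rings; whenever an agent is close to the frontier of explored territory, the adversary blocks it. Third, I would show the black hole can be kept "ambiguous": the adversary maintains a set of at least two candidate locations among the unexplored nodes until all but a constant number of nodes have been visited, so no agent can terminate with certainty earlier. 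Combining, since there are $\Theta(nm)$ nodes and the exploration rate is $O(1)$ per round, any BHS algorithm needs $\Omega(nm)$ rounds.

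\textbf{Main obstacle.} The delicate point is that we have $c \geq 3$ free agents and a whiteboard, so the naive "one agent, one node per round" bound does not immediately apply; the adversary must handle several simultaneously-advancing agents spread across different rings. The argument has to show that even when $\Theta(1)$ agents advance in parallel, and even when they leave information on whiteboards to coordinate, the adversary can still cap the number of newly explored nodes per round by a constant — this requires carefully pairing each advancing agent with a missing edge in its current ring, and checking that the $1$-interval-connectivity constraint (at most one missing edge per ring per round) is never violated because distinct advancing agents can be confined to distinct rings or blocked by a shared missing edge. A secondary subtlety is ruling out the scenario where an agent "shortcuts" across the torus using a column ring while the adversary is busy blocking row rings; I would handle this by noting the column rings are also $1$-interval connected, so the adversary has an independent missing edge per column ring to spend, and by arguing that the frontier of explored nodes grows slowly in both dimensions. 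Once the constant-rate exploration claim is established, deriving $\Omega(mn)$ is immediate, and the theorem follows.
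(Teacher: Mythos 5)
Your proposal takes essentially the same approach as the paper: the paper also freezes the $n$ agents placed on the diagonal nodes $v_{i,i}$ by deleting one row edge and one column edge per such agent, and then observes that the remaining $c$ agents must visit the $nm-n$ unexplored nodes, which takes at least $\frac{nm-n}{c}=\Omega(nm)$ rounds. The only difference is that your second and third steps (adversarially capping the exploration rate of the free agents, and maintaining ambiguity of the black hole's location) are unnecessary elaborations --- once only $c=O(1)$ agents remain mobile, the bound of at most $c$ newly visited nodes per round is immediate, and that is all the paper uses.
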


\begin{proof}
    Let us consider a configuration where the initial position of each $a_i$ is along $v_{i,i}$, $\forall ~0\leq i \leq n-1$ (refer Fig. \ref{fig:initial_configuration}), as these agents are located in separate rows and columns, so the adversary can delete an edge along a row and a column, restricting the agents from moving from its initial position. So, the remaining $c$ agents needs to explore the remaining nodes in order to locate the black hole, and this requires at least $\frac{nm-n}{c}=\Omega(nm)$ rounds.
\end{proof}

\section{Preliminaries}\label{section-preliminaries}

 In this section, we explain all the subroutines, definitions and ideas used in our BHS algorithm, but first, we explain the contents maintained by the agents in the whiteboard.\\

\noindent\textbf{Whiteboard:} The following data is stored and maintained in the whiteboard by the agents. For each $dir \in \{east, west, north, south\}$ with respect to each $v_{i,j}\in \mathcal{G}$ we define the function $f:\{east,west,north,south\}\rightarrow \{\bot,0,1\}$,

\[f(dir) = 
     \begin{cases}
       \text{$\bot$,} &\quad\text{if an agent is yet to visit the port $dir$}\\
       \text{$0$,} &\quad\text{if no agent has marked the port $dir$ as safe}\\
       \text{$1$,}&\quad\text{if the port $dir$ is marked safe}\\
     \end{cases}\]

\noindent\textbf{Cautious Walk:} This is a fundamental movement strategy used in a network with black hole and it is used as a building block of all our algorithms. In this strategy, if two agents are together, then this strategy ensures that only one among them enters the black hole, while the other survives. On the contrary, if only a single agent is present, then whenever it visits a new node, it leaves some mark behind in the whiteboard, so that whenever another agent tries to visit this node along the same edge, it finds the mark and does not enter the black hole. 

This walk is performed in three rounds, where if an agent $a_1$ (say) is alone (resp, with another agent $a_2$, say) then in the first round $a_1$ decides to move one step along $e=(u,v)$ from $u$ to $v$ by marking $f(e)=0$ (while $a_2$ waits) and if it is safe, i.e., does not contain the black hole, then in the next round $a_1$ returns to $u$ and marks the edge $e$ safe by writing $f(e)=1$, then in the third step $a_1$ (resp, $a_2$) moves to $v$. This strategy ensures that no two agent enters the black hole along the edge $e$.\\

\noindent\textbf{Stuck:} An agent $a_1$ is defined to be \textit{stuck} while exploring a 1-interval connected ring for two reasons. 
    \begin{itemize}
        \item First, if while performing \textit{cautious} walk along an edge $e=(u,v)$, $a_1$ at round $r$ marks $f(e)=0$ at $u$ and moves to $v$, while $v$ is safe and $a_1$ tries to return to $u$ at round $r+1$ to mark $f(e)=1$, finds $e$ to be missing, in this situation $a_1$ is \textit{stuck} at $v$ until $e$ reappears.
        \item Second, if while moving along $dir$, $a_1$ finds $e$ to be missing. In this situation, if more than one agent is simultaneously trying to move along $dir$ at the same round and if $a_1$ is the lowest Id among them, then $a_1$ is \textit{stuck} until $e$ reappears, or, if $a_1$ is alone, then in that case also $a_1$ is \textit{stuck} until $e$ reappears.
    \end{itemize}

% \noindent\textbf{Partition:} A \textit{partition} is defined to be the set of nodes between a given node and a given edge. A partition can be sub-divided in to \textit{clockwise partition} and \textit{counter-clockwise partition}. A clockwise (resp, counter-clockwise) partition is defined to be the set of nodes in the clockwise (resp, counter-clockwise) direction lying between the given edge and the given node, and it is denoted by $P_c[e,v]$ (resp, $P_{cc}[e,v]$), refer Fig. \ref{fig:partition}. 

\subsection{Subroutines}
In this section, we will discuss the sub-routines used as a building block in our BHS algorithms for both the co-located and scattered initial configurations. We have followed some of the pseudocode convention from the papers \cite{BHSDynRingLuna} and \cite{DynTorusExpGotoh}. In this paper, we use three kinds of \textsc{Move} procedure in our algorithms, first, \textsc{Move}$(dir ~|~p_1:s_1;p_2:s_2;\ldots ; p_k:s_k)$, second, \textsc{Move}$(dir \rightarrow f(dir)~|~p_1:s_1;p_2:s_2;\ldots ; p_k:s_k)$, and lastly, \textsc{Move}$(dir\rightarrow f(dir) \rightarrow s_i~|~p_1:s_1;p_2:s_2;\ldots ; p_k:s_k)$,  where $p_i$ is the predicate corresponding to the state $s_i$ and $f(dir)$ represents the value with respect to $dir$ (where $dir\in \{east, west, north,south\}$) in the whiteboard, so depending on the algorithm we use either of these \textsc{Move} procedures. The agent at each round, first takes a snapshot at its current location, and thereafter checks the predicates $p_1,\ldots,p_k$ one after another. If no predicate is satisfied, then in the first \textsc{Move} procedure, the agent moves along the direction $dir$, in the second \textsc{Move} procedure the agent moves along $dir$ while updating the whiteboard of the current node along $dir$ to $f(dir)$, and lastly, in the third \textsc{Move} procedure, in addition to moving along $dir$ and updating the whiteboard, it also moves directly in to the state $s_i$. On the otherhand, if some predicates are satisfied, then the agent chooses the first satisfied predicate (say) $p_i$, and the procedure stops, and the agent moves in to state $s_i$ corresponding to $p_i$. The predicate and state of the form $p_j:time+i \rightarrow s_j$ indicates that if $p_j$ is satisfied then the agent enters the state $s_j$ after $time+i$ rounds, whereas the predicate and the state of the form $p_j:f(dir)\rightarrow s_j$, indicates that if $p_j$ is satisfied then the agent performs the action $f(dir)$ and then moves to the state $s_j$. Further, all this procedure is again executed in the subsequent rounds. The list of all the basic predicates are explained in table \ref{table:explanation-predicates} whose compositions are used as predicates in our algorithms.

\begin{table}

\begin{tabular}{|c|c|}
  \hline
  Basic Predicates & Explanation \\
  \hline
  $time$ & The number of rounds since the start of the algorithm \\
  \hline
  $read[f(dir)]$ & Represents the data read by the agent \\
  & in whiteboard along $dir$ of current node\\
  \hline
  $MEdir$ & Indicates the edge along $dir$ is missing\\
  \hline
  $\overline{ME}dir$ & Indicates the edge along $dir$ exists \\
  \hline
  $catches$ & Indicates that the agent finds another agent, either moving along\\ 
  & the same direction or stuck at the same node \\
  \hline
  $catches[i]$ & Implies that the agent $catches$ another agent with Id $i$\\
  \hline
  $catches-waiting$ & Represents that the agent $catches$ another agent waiting \\
  \hline
  $lowestId$ & Indicates the agent is having lowest Id in its current node \\
  \hline
  $\neg lowestId$ & Indicates the agent is not the lowest Id in the current node \\
  \hline
  $Enodes$ & Stores the number of nodes traversed since the last call of \textsc{Move}\\
  \hline
\end{tabular}
\caption{Explains the list of basic predicates}
\label{table:explanation-predicates}
\end{table}

% \begin{itemize}
%     \item $time$: indicates the number of rounds since the execution of the algorithm by any agent, initially $time=0$.
%     \item $read[f(dir)]$: the agent reads the data $f(dir)$ stored along $dir$ in the whiteboard.
%     \item $current$: the node where the agent is currently residing.
%     \item $MEdir$: indicates that the edge along $dir$ is missing.
%     \item $\overline{ME}dir$: indicates that the edge along $dir$ exists.
%     \item $catches$: this predicate holds when the agent finds another agent, either moving along the same direction or waiting or stuck, at the current node.
%     \item $catches[i]$: indicates if the agent $catches$ another agent with Id $i$.
%     \item $catches-waiting$: this predicate implies that the agent $catches$ another agent waiting.
%     \item $lowestId$: indicates the agent with lowest Id in the current node.
%     \item $Enodes$: indicates the number of nodes moved since the last call of \textsc{Move}.

% \end{itemize}

In the following part we define the algorithm \textsc{Cautious-WaitMoveWest}().\\

\noindent\underline{\textsc{Cautious-WaitMoveWest}$(j,l)$}: This algorithm works on 1-interval connected ring $R_i$ (say), where the main purpose is to make a certain number of agents reach the node $v_{i,j}$ along the $C_j$-th column from any initial configuration. Further, whenever an agent reaches the desired node and it is not stuck, it waits at that node until further instruction is provided.

The algorithm works as follows: for the first $4(l-1)m$ rounds, if an agent $a_1$ is instructed to perform \textsc{Cautious-WaitMoveWest}$(j,l)$ along $R_i$, then it starts the following procedure, if the agent $a_1$ (say) is initially with another agent $a_2$ (say) and since $a_1$ is the lowest Id among them, $a_1$ starts cautious walk along $west$ until it either gets stuck or has reached the desired node. On the other hand the task of $a_2$ is to follow $a_1$ until $a_1$ is stuck. While $a_1$ is stuck, $a_2$ performs the following action:
\begin{itemize}
    \item If $a_1$ is stuck due to a missing edge along $west$, then $a_2$ instead of waiting reverses its direction to $east$ and continues to perform \textit{cautious} walk.
    \item If $a_1$ is stuck while returning back to mark a port safe along $west$ which it has in the last round marked unsafe while exploring and, then $a_2$ waits for at most $3m$ rounds since the round it encountered this situation, and then reverses its direction and continues to perform \textit{cautious} walk. 
\end{itemize}
On the otherhand, if $a_1$ is alone, then it performs \textit{cautious} walk until it either reaches the desired node or it is stuck. If $a_1$ catches another agent stuck, and if it is not the lowest Id among them, then it performs the similar action, as explained earlier in case of $a_2$.

After $4(l-1)m$ rounds has passed, each agent not stuck due to a missing edge tries to reach the node $v_{i,j}$. 

 The pseudocode of \textsc{Cautious-WaitMoveWest}$(j,l)$ is explained in Algorithm \ref{alg-1}. We have used 11 normal states of the form $s_i$ and 9 negation states of the form $\neg s_i$. A negation state $\neg s_i$ is the opposite of the normal state $s_i$. More precisely, if $p_i$ is the predicate for $s_i$, then $\neg p_i$ is the predicate for the state $\neg s_i$. For example, the negation of the predicate $read[f(west)=1]\wedge \overline{ME}west\wedge LowestId$ will be $read[f(east)=1]\wedge \overline{ME}east\wedge LowestId$, i.e., negation of a predicate happens only in terms of direction, so if the predicate contains a basic predicate of the form of $lowestId$ or $\neg lowestId$, then they remain same in $\neg p_i$ as well. On the other hand the negation of a state is explained with the help of this example, there exists a state \textit{Init'} with \textsc{Move}$(west|Enodes>0:\textbf{Init})$ in Algorithm \ref{alg-1}, so the corresponding $\neg$\textit{Init'} will be \textsc{Move}$(east|Enodes>0:\neg\textbf{Init})$. So, while the negation of a predicate means only the change in direction, the negation of a state on the contrary means $\neg p_i:\neg s_i$, i.e., both negation of a predicate as well as negation of the corresponding state. In the following part, we give a detailed explanation of some of the states in Algorithm \ref{alg-1}. 

\begin{itemize}
    \item \textbf{Init}: resembles first step of \textit{cautious} walk, i.e., the first round of \textit{cautious} walk when an agent tries to explore an unexplored node. In this situation, the agent performs the action $west \rightarrow f(west)=0 \rightarrow \text{Backtrack}^{0}$, i.e., if no predicate is satisfied, then the agent moves along $west$ by updating $f(west)=0$ from $\bot$ and then enters in to state $\textit{Backtrack}^{0}$.
    \item \textbf{Backtrack$^{0}$}: signifies the second step of \textit{cautious} walk, where the agent performs $east \rightarrow f(east)=1 \rightarrow \text{Init}$, i.e., if no predicate is satisfied then the agent moves along $east$ while marking $f(east)=1$ and enters the state \textit{Init}.
    \item \textbf{Init''}: this state signifies the last step of cautious walk, where the agent performs $west \rightarrow f(west)=1 \rightarrow \text{Init}$, i.e., if no predicate is satisfied then the agent moves along $west$ to the new node while updating $f(west)=1$ and enters \textit{Init} state.
    \item \textbf{Init'}: this state signifies the fact that the agent is instructed to move along $west$ without performing any other action, when it finds that the edge along $west$ exists and it is marked safe.
    \item \textbf{Wait$^{01}$}: an agent enters this state when it finds a forward missing edge along $west$, which is either marked as $\bot$ or $1$ and it is the lowest Id agent at the current node. In this case, the state instructs the agent to wait until the missing edge reappears. 
    \item \textbf{Wait$^{02}$}: this state instructs the lowest Id agent to wait for at most $3m$ rounds, if while it is waiting the predicate $time\geq g(l)$ ($g(l)=4(l-1)m$ refer Algorithm \ref{alg-1}) is satisfied then it enters the state \textit{Return}, or if it has already waited for $3m$ rounds then it enters the state $\neg\textit{Init}$, or if the missing edge reappears while it is waiting then it enters the state $\textit{Wait}^{04}$.
    \item \textbf{Wait$^{03}$}: this state instructs the agent to stay at the current node for one round, and after one round, if the edge along $west$ still exists and no agent returns to mark it safe, then it enters the state \textit{TerminateW}, which terminates the algorithm by declaring that the next node along $west$ is the black hole. Otherwise, if an agent returns to mark it safe or the edge again goes missing then in either case it enters the state \textit{Init}.
    \item \textbf{Wait$^{04}$}: instructs the agent to wait until the edge reappears and whenever the edge reappears the state instructs the agent to enter the state \textit{Backtrack}$^{0}$, in which it directly performs the action $east \rightarrow f(east)=1 \rightarrow \text{Init}$ as instructed in state \textit{Backtrack}$^{0}$.
    \item \textbf{Wait}: signifies the state when the agent has reached its desired node and it is not stuck, in this state the agent is instructed to wait until further instruction.
    \item \textbf{Return}: this state arises when $time>g(l)$, and in this state the agents which are waiting and are not stuck are instructed to move and reach the desired node, i.e., $v_{i,j}$ for at most $3m$ rounds. After $time>g(l)+3m$, only if the agent is stuck while in state \textit{Backtrack}$^0$ and finds the edge reappear then it backtracks and marks the corresponding edge safe and moves to the next node. Otherwise, all the other agents remain at their current position until they receive further instruction. This state signifies the end of the algorithm.
\end{itemize}
Note that the pseudocode in Algorithm \ref{alg-1} explains only for $west$ direction, \textsc{Cautious-WaitMoveSouth}$(j,l)$ is similar, only $east$ and $west$ are replaced by $south$ and $north$, respectively. Also note that, all the lemmas, theorems and corollaries are explained for \textsc{Cautious-WaitMoveWest}, but they hold for other directions as well.

\begin{algorithm2e}[!ht]\footnotesize
\caption{\sc{Cautious-WaitMoveWest}$(j,l)$}\label{alg-1}
States:\{\textbf{Init, Init', Init''  $\neg$Init, $\neg$Init', $\neg$Init'' $\text{Backtrack}^{0}$, $\neg \text{Backtrack}^{0}$, $\text{Wait}^{01}$, $\text{Wait}^{02}$, $\text{Wait}^{03}$, $\text{Wait}^{04}$, $\neg \text{Wait}^{01}$, $\neg \text{Wait}^{02}$, $\neg \text{Wait}^{03}$, $\neg \text{Wait}^{04}$, Wait, Return, TerminateW, $\neg$ TerminateW }\}.\\
$time$ is defined as the number of rounds since the start of the current algorithm.\\
Define $g(l)=4(l-1)m$\\
In state \textbf{Init}: \\
\textsc{Move}$(west \rightarrow f(west)=0\rightarrow \text{\textbf{Backtrack}}^{0}~|~time\geq g(l): \textbf{Return};current\in C_j:\textbf{Wait};read[f(west)=1]\wedge \overline{ME}west:\textbf{Init'};read[f(west)=\bot]\wedge \overline{ME}west\wedge \neg LowestId:time+1\rightarrow\textbf{Init};read[f(west)=\bot \lor f(west)=0 \lor f(west)=1]\wedge MEwest\wedge catches-waiting:\neg\textbf{Init};read[f(west)=1 \lor f(west)=0 \lor f(west)=\bot]\wedge MEwest\wedge \neg LowestId:\neg\textbf{ Init};read[f(west)=1 \lor f(west)=\bot]\wedge MEwest\wedge LowestId:\text{\textbf{Wait}}^{01};read[f(west)=0]\wedge MEwest \wedge LowestId:\text{\textbf{Wait}}^{02}; read[f(west)=0]\wedge \overline{ME}west:\text{\textbf{Wait}}^{03})$\\

In state \textbf{$\neg$Init}: \\
\textsc{Move}$(east \rightarrow f(east)=1\rightarrow \neg \text{\textbf{Backtrack}}^{0}~|~time\geq g(l); \textbf{Exit};current\in C_j:\textbf{Wait};read[f(east)=1]\wedge \overline{ME}east:\neg\textbf{Init'};read[f(east)=\bot]\wedge \overline{ME}east\wedge \neg LowestId:time+1\rightarrow \neg\textbf{Init};read[f(east)=\bot \lor f(east)=0 \lor f(east)=1]\wedge MEeast\wedge catches-waiting:\textbf{Init};read[f(east)=\bot \lor f(east)=0 \lor f(east)=1]\wedge MEeast\wedge \neg LowestId:\textbf{Init};read[f(east)=\bot \lor f(east)=1]\wedge MEeast\wedge LowestId:\neg\text{\textbf{Wait}}^{01};read[f(east)=0]\wedge MEeast \wedge LowestId :\neg\text{\textbf{Wait}}^{02}; read[f(east)=0]\wedge \overline{ME}east:\neg\text{\textbf{Wait}}^{03})$\\

In state \textbf{Backtrack$^{0}$}: \textsc{Move}$(east \rightarrow f(east)=1 \rightarrow \textbf{Init''}|MEeast:f(east)=1\rightarrow\text{\textbf{Wait}}^{04})$\\

In state \textbf{Init'}:\textsc{Move}$(west|Enodes>0:\textbf{Init})$\\

In state \textbf{Init''}: \textsc{Move}$(west\rightarrow f(west)=1\rightarrow \textbf{Init}~|~Enodes>0:\textbf{Init}; MEwest:f(west)=1\rightarrow \textbf{Init}$)\\

In state \textbf{Wait$^{01}$}: \textsc{Move}$(nil|\overline{ME}west:\textbf{Init})$\\

In state \textbf{Wait$^{02}$}:\\
\If{$time\mod 3m =0$}
{
\textsc{Move}$(nil|time\geq g(l): \textbf{Return}; time>time+3m:\neg\textbf{ Init};\overline{ME}west:\textbf{Wait}^{03})$\\
}
\Else
{
\textsc{Move}$(nil|time\geq g(l): \textbf{Return}; time \mod 3m=0:\neg\textbf{ Init};\overline{ME}west:\textbf{Wait}^{03})$\\

}
In state \textbf{Wait$^{03}$}:\\
\Comment{$time2$ is the number of rounds since the agent has encountered an edge along $west$ with $f(west)=0$}\\
\textsc{Move}$(nil| time2>1 \wedge read[f(west)=0] \wedge \overline{ME}west: \textbf{TerminateW}; time2>1 \wedge read[f(west)=0] \wedge MEwest: \textbf{Init}; time2>1 \wedge read[f(west)=1]: \textbf{Init})$\\

In state \textbf{Wait$^{04}$}: \textsc{Move}$(nil|\overline{ME}east:\text{\textbf{Backtrack}}^{0})$\\

In state \textbf{Wait}: Wait at the current node, until further instructed.\\

In state \textbf{Return}:\\
\While{$time\leq g(l)+3m$}
{
\If{an agent is waiting for another agent}
{
Change its direction of movement, until $current\in C_j$, while moving if it encounters an edge marked with $\bot$ then move cautiously. If while moving encounters a missing edge, wait until it reappears.\\ 
}
\ElseIf{more than one agent in current node}
{
Lowest Id agent follows current direction, whereas remaining agents change direction and moves until $current\in C_j$ and if it encounters an edge marked with $\bot$ then move cautiously. If while moving encounters a missing edge, wait until it reappears.\\ 
}
}
If last in state \textbf{Backtrack}$^0$ or $\neg$\textbf{Backtrack}$^0$ and the edge reappears then perform one step of backtrack, mark the port 1 and return to the previous node. Otherwise, remain at the current node until further instruction.\\
In state \textbf{TerminateW}: Terminate, BH is next node towards $west$.\\
\end{algorithm2e}

\begin{obs}[\cite{BHSDynRingLuna}]
    Given a dynamic ring $R$ and a cut $U$, where $|U|>1$, if its footprint is connected by edges $e_c$ and $e_{cc}$ (where $e_c$ and $e_{cc}$ are the clockwise and counter-clockwise edges, respectively) to nodes in $V\backslash U$ (where $V$ is the set of vertices not in $U$). If all the agents at a round $r$ are at $U$, does not try to cross along $e_c$, whereas there exists an agent which tries to cross along $e_{cc}$, then the adversary has the ability to prevent any agent from crossing $U$.
\end{obs}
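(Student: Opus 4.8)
The plan is to exploit the fact that, in a 1-interval connected ring, the adversary may suppress exactly one edge per round, and to show that a single well-chosen suppression already suffices to seal the cut $U$. The structural starting point is that, since $U$ is a connected arc of the ring $R$, its footprint is joined to $V\setminus U$ by \emph{exactly} two edges, namely the clockwise boundary edge $e_c$ and the counter-clockwise boundary edge $e_{cc}$; consequently, any agent that leaves $U$ during a given round must traverse either $e_c$ or $e_{cc}$ in that round.

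Given this, I would let the adversary, at round $r$, make $e_{cc}$ the missing edge of $R$; this is admissible because $R$ minus $e_{cc}$ is still connected, so the 1-interval connectivity of $R$ is preserved. Now consider an arbitrary agent $a$ that is in $U$ at round $r$ and look at the move it performs. If it stays put or traverses an internal edge of $U$, it remains in $U$. If it traverses $e_c$, this contradicts the hypothesis that no agent tries to cross $e_c$ at round $r$. The only remaining possibility is that $a$ attempts $e_{cc}$, but that edge is absent, so $a$ remains stuck at its endpoint and does not cross. Since $a$ was arbitrary, no agent crosses out of $U$ at round $r$; note also that if several agents attempt $e_{cc}$ simultaneously, this same single suppression blocks all of them at once, so one deletion always suffices. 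If, in addition, the algorithm guarantees that no agent ever commits to $e_c$ as long as all agents are confined to $U$, then the adversary repeats this choice at every round and, by induction on the round number, keeps every agent inside $U$ indefinitely.

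The argument is short, so the only point needing care is the adversary's timing: in the synchronous Look-Compute-Move model the edge presences for round $r$ are fixed before the agents' Look step, hence "an agent tries to cross $e_{cc}$" must be understood robustly, i.e., deleting $e_{cc}$ must not prompt some agent to re-route toward $e_c$. In the situations where this observation is invoked, that is guaranteed by the ambient invariant (typically a whiteboard marking, or the fact that crossing $e_c$ would move an agent toward the already-located dangerous region), which makes $e_c$ never a move the algorithm would choose; under that invariant the hypothesis is preserved from round to round and the one-round argument above, iterated, yields the claim. The conceptual reason the one-directional hypothesis is essential is precisely that a single edge deletion can defeat arbitrarily much concurrency \emph{in one direction}, but cannot simultaneously block attempts on both $e_c$ and $e_{cc}$.
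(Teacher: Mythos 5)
Your proof is correct. Note that the paper itself gives no proof of this observation --- it is imported verbatim from the cited dynamic-ring paper of Di Luna et al. and used only as a black box (the text immediately after it merely argues that \textsc{Cautious-WaitMoveWest} avoids the hypothesis of the observation). Your argument --- delete $e_{cc}$, which is admissible since a ring minus one edge stays connected, observe that a single deletion blocks arbitrarily many simultaneous attempts in one direction while any crossing of $e_c$ would contradict the hypothesis, and iterate as long as the one-sided hypothesis persists --- is exactly the standard justification behind that result, and your closing remark that one deletion cannot block both $e_c$ and $e_{cc}$ simultaneously is precisely why the paper's subroutines are designed to guarantee opposite-direction attempts.
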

Our algorithm \textsc{Cautious-WaitMoveWest}() ensures that this situation does not arise, as when an agent is stuck on $e_c$ (or $e_{cc}$), another agent after finding this situation waits for at most $3m$ rounds (depending on the fact that whether the earlier agent is stuck while backtracking or it is stuck because it has encountered a missing edge along $west$), and then reverses its movement towards $e_{cc}$ (or $e_c$), while the other agent remains stuck. Hence, there exists a round $r$ where an agent each is trying to cross $e_c$ and another agent is trying to cross $e_{cc}$.

\begin{lemm}
    If an agent executing Algorithm \ref{alg-1} terminates while moving along a certain direction, then it correctly locates the black hole.
\end{lemm}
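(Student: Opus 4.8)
The plan is to show that an agent terminating via the state \textbf{TerminateW} (the only state in Algorithm~\ref{alg-1} in which an agent terminates while moving along a direction) does so only when the node immediately west of its current node is genuinely the black hole. The key structural fact to exploit is the \emph{cautious walk} discipline: an agent only ever declares ``the next node west is the black hole'' after having observed an edge marked $f(west)=0$ that is \emph{present} and over which no agent has returned to write $f(west)=1$ within the allotted waiting window. So the proof reduces to arguing that such a marking $f(west)=0$ on a present, never-upgraded edge can only be produced by an agent that entered the node across that edge and was destroyed there.

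First I would trace the provenance of the mark $f(west)=0$. By inspection of the states, the only place an agent writes $f(west)=0$ is in state \textbf{Init}, which is the first step of cautious walk: the agent at node $u$ marks $f(west)=0$ on the edge $e=(u,v)$ and moves to $v$. If $v$ is safe, the agent (in state \textbf{Backtrack}$^0$, possibly after being \emph{stuck} and passing through \textbf{Wait}$^{04}$) returns to $u$ and overwrites $f(east)=1$ on the same edge --- here I must use the fact (stated in the paper's whiteboard description and the port naming) that the mark on the port $west$ at $u$ and the port $east$ at $v$ describe the same edge, so upgrading to $1$ is visible to any later agent checking $f(west)$ at $u$. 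Hence, as long as the agent that created the mark is alive, it is committed to eventually returning and performing the upgrade; the only way the upgrade never happens is if $v$ is the black hole and the agent was consumed. I would also need to rule out that a \emph{different} agent wrote $1$ and then the mark somehow reverted to $0$: the whiteboard functions are monotone in the sense that $\bot \to 0 \to 1$ and no state writes $0$ over a $1$, so once $1$ is written it stays.

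Next I would verify that the terminating agent actually waits long enough to be sure. An agent reaches \textbf{TerminateW} only from \textbf{Wait}$^{03}$, which is entered from \textbf{Init} (or via \textbf{Wait}$^{02}$) precisely when the agent sees $f(west)=0$ together with $\overline{ME}west$ (the edge present). State \textbf{Wait}$^{03}$ holds the agent for at least one extra round ($time2 > 1$) and only then, if the edge is \emph{still present} and \emph{still} reads $0$, transitions to \textbf{TerminateW}; if in the meantime the edge goes missing, or someone writes $1$, it returns to \textbf{Init} instead. I would argue that if the node $v$ west of $u$ were safe, then the unique agent $a$ that wrote $f(west)=0$ and walked to $v$ is alive, is in state \textbf{Backtrack}$^0$ at $v$, and by the cautious-walk protocol will return to $u$ at the very next round the edge $e$ is present, writing $f(west)=1$ before any other agent can observe $e$ present and still marked $0$ for two consecutive rounds --- this is exactly the purpose of the one-round delay in \textbf{Wait}$^{03}$, and the reason \textbf{Wait}$^{04}$ forces a stuck backtracking agent to resume its backtrack the instant the edge reappears. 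Combining: no safe $v$ can ever satisfy the \textbf{TerminateW} predicate, so the node the agent points to must be the black hole.

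The main obstacle I expect is the interleaving/timing argument in the last step: one must carefully show that between the round the terminating agent sees $(f(west)=0,\ \overline{ME}west)$ and the round it commits in \textbf{TerminateW}, there is no adversarial edge schedule that keeps $e$ present long enough for the terminating agent to confirm but somehow prevents the (alive) marker agent from performing its backtrack upgrade. The resolution hinges on the synchrony of the model (all agents act in the same round), on the fact that a stuck backtracking agent in \textbf{Wait}$^{04}$ has top priority to traverse $e$ the moment it reappears, and on the Observation preceding this lemma (the cut argument) guaranteeing no agent is left indefinitely unable to cross while another keeps trying --- so if $e$ is present for two consecutive rounds as required by \textbf{Wait}$^{03}$, the marker agent, if alive, necessarily completes its return in the first of those two rounds. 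A secondary, more bookkeeping-heavy obstacle is enumerating all entry paths into \textbf{TerminateW} (directly from \textbf{Init}, and via \textbf{Wait}$^{02}\to\textbf{Wait}^{03}$) and checking each preserves the invariant ``the edge currently west, if marked $0$ and present for two rounds, leads to the black hole''; I would package this as a short invariant maintained across the relevant states rather than a case explosion.
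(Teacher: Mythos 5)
Your proposal is correct and follows essentially the same route as the paper's own proof: a contradiction argument resting on the cautious-walk invariant that the unique agent which wrote $f(west)=0$ must, if alive, return and upgrade the mark to $1$ whenever the edge is present, so that a port reading $0$ on a present edge for the two consecutive rounds enforced by \textbf{Wait}$^{03}$ can only mean that agent was consumed. Your version is in fact more careful than the paper's on the two points it glosses over --- the monotonicity $\bot\to 0\to 1$ of the whiteboard marks and the round-level timing of the marker agent's return via \textbf{Wait}$^{04}$/\textbf{Backtrack}$^{0}$ --- and you correctly identify \textbf{Wait}$^{03}$ as the state preceding termination, where the paper's prose mistakenly names \textbf{Wait}$^{04}$.
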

\begin{proof}
    Suppose an agent $a_i$ (say) terminates while moving along $west$ (i.e., enters the state $TerminateW$), but the next node along $west$ is not the black hole. The reason $a_i$ has terminated because it has found $read[f(west)=0]$ (i.e., the edge along $west$ is marked unsafe by some other agent) and $\overline{ME}west$ (i.e., the corresponding edge along $west$ exists), so in which case after encountering this situation the agent directly enters state \textit{Wait $^{04}$}. In state \textit{Wait $^{04}$}, the agent is instructed to wait for one round, after which it still finds $f(west)=0$ and the edge exists, i.e., no agent has returned and marked $f(west)=1$ while the edge still exists. This guarantees the next node along $west$ to be the black hole because, according to the principle of \textit{cautious} walk, the agent which at round $r$ (say), first traversed along this edge to mark it unsafe (i.e., by writing $f(west)=0$) must eventually return to mark it safe whenever the corresponding edge exists (refer states \textit{Init} and $\textit{Backtrack}^0$), if not already consumed by the black hole. So, our algorithm correctly locates the black hole.
\end{proof}
\begin{blab}
    \textsc{Cautious-WaitMoveWest}() ensures that at most two agents enters the black hole.
\end{blab}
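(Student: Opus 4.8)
The plan is to argue that the black hole can be entered along each of its incident edges used by \textsc{Cautious-WaitMoveWest}() at most once, and that the algorithm only ever uses one forward direction (namely $west$) while exploring a given ring, so that at most one fresh edge incident to the black hole is ever traversed by an agent that is not already warned — yielding at most two casualties (one from each side, since the reversing agent of a pair may approach from $east$). First I would recall the principle of \emph{cautious} walk as implemented in states \textbf{Init} and $\textbf{Backtrack}^0$: before an agent crosses an unexplored edge $e=(u,v)$ along $west$ it writes $f(west)=0$ at $u$; if it survives it returns and writes $f(west)=1$. Hence whenever a second agent at $u$ reads $f(west)=0$ (edge not yet confirmed safe), it does not cross: by inspection of \textbf{Init} the relevant predicates $read[f(west)=0]\wedge MEwest\wedge LowestId:\textbf{Wait}^{02}$ and $read[f(west)=0]\wedge \overline{ME}west:\textbf{Wait}^{03}$ divert it to a waiting state rather than to a $west$-move; and in \textbf{Init$'$} / \textbf{Init$''$} the agent only moves when $f(west)=1$. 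Thus at most one agent ever enters the node beyond $u$ along the $west$ port of $u$.

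Next I would handle the two structural sources of a ``second'' entry. The first is the reversing behaviour: when the leading low-Id agent $a_1$ of a pair gets \emph{stuck}, its partner $a_2$ reverses to $east$ (states $\neg\textbf{Init}$, etc.) and performs \emph{cautious} walk in the opposite direction. So the black hole could be reached once from the $west$-side (by the unique agent crossing an unconfirmed $west$ edge) and once from the $east$-side (by a reversed agent crossing an unconfirmed $east$ edge). That is two entries, which is exactly the bound claimed, and the corollary does not promise fewer. I would argue no third entry is possible: once an edge incident to the black hole has been marked $f(\cdot)=0$ and never upgraded to $1$, every later agent arriving at that endpoint along that port is sent to a \textbf{Wait} state (and in fact to \textbf{Wait}$^{03}$/\textbf{TerminateW}, which terminates the algorithm correctly by the preceding lemma). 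The second source to rule out is the \textbf{Return} phase ($time>g(l)$): there the reversing/cautious discipline is preserved — ``if it encounters an edge marked with $\bot$ then move cautiously'' — and stuck-in-\textbf{Backtrack}$^0$ agents only complete a backtrack step, which marks a port $1$ and moves to an already-safe node, so no fresh edge into the black hole is traversed without a warning mark.

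The proof then assembles these observations: the black hole has (at most) two ``approach ports'' that matter, one used by a forward ($west$) explorer and one by a reversed ($east$) explorer; on each such port the \emph{cautious}-walk invariant (maintained in \textbf{Init}, $\textbf{Backtrack}^0$, \textbf{Init$'$}, \textbf{Init$''$}, and re-asserted in \textbf{Return}) guarantees that the very first traversal writes a $0$ that is never cleared, and that this $0$ deflects every subsequent agent at that endpoint into a non-crossing \textbf{Wait}/\textbf{TerminateW} state. Counting at most one victim per approach port gives at most two agents consumed.

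The main obstacle I anticipate is a careful case analysis of the \emph{stuck-while-backtracking} scenario, i.e. when an agent has written $f(west)=0$ at $u$, crossed into $v$, found $v$ safe, but the edge $(u,v)$ disappeared before it could return to write $f(west)=1$. In that window, $f(west)=0$ persists at $u$, an edge along $west$ from $u$ may or may not be present at a given round, and the partner $a_2$ waits up to $3m$ rounds (state \textbf{Wait}$^{02}$) before reversing. One must check that during this entire interval no other agent ever re-enters along that port toward $v$: this is where the predicates $read[f(west)=0]\wedge \overline{ME}west:\textbf{Wait}^{03}$ and $read[f(west)=0]\wedge MEwest\wedge LowestId:\textbf{Wait}^{02}$ in \textbf{Init}, together with the guarantee that the surviving explorer of $(u,v)$ eventually returns and marks it $1$ (or is the one consumed), do the work — and that the \textbf{Return}-phase handling of a ``last in state \textbf{Backtrack}$^0$'' agent is consistent with this. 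Everything else is a direct reading of the state table.
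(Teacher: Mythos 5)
Your argument is correct and is essentially the one the paper intends: the corollary is stated without its own proof, as an immediate consequence of the cautious-walk invariant established in the preceding lemma (a port marked $f(\cdot)=0$ is never crossed again until upgraded to $1$, and the upgrade never happens if the explorer was consumed), so the black hole can swallow at most one agent through its $east$-neighbour's $west$ port and at most one through its $west$-neighbour's $east$ port after a reversal. Your more detailed case analysis (simultaneous arrivals resolved by the $LowestId$ predicates, the stuck-while-backtracking window, and the \textbf{Return} phase preserving the cautious discipline) fills in exactly the checks the paper leaves implicit; the only slight imprecision is that \textbf{Init$''$} does not test $f(west)=1$ before moving --- it is reachable only by an explorer that has already verified the target node is safe --- but this does not affect the count.
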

%The next lemma shows the complexity and correctness of Algorithm \ref{alg-1}.
\begin{lemm}\label{lemma:CorrAlg1}
    If two agents along a safe row ring $R_i$ of size $m$ ($m\geq 3$) executes Algorithm \ref{alg-1}, then at least one agent reaches the desired node within $7m$ rounds.
\end{lemm}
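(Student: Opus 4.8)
The plan is to track two agents $a_1$ (lower Id) and $a_2$ through the execution of Algorithm \ref{alg-1} on a safe ring $R_i$, and argue that within $7m$ rounds one of them has entered the target column $C_j$. The key structural fact to isolate first is that the ring is safe, so no agent is ever consumed, and the only way an agent can fail to make progress is by being \emph{stuck} (in one of the two senses defined in the Preliminaries). I would split the argument according to whether $a_1$ ever gets stuck during the first $g(l)=4(l-1)m$ rounds of the execution.

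\textbf{Case 1: $a_1$ is never stuck.} Here $a_1$ performs cautious walk along $west$ uninterrupted. Each new node costs $a_1$ three rounds (the Init / Backtrack$^0$ / Init'' cycle), except that edges already marked safe (encountered via state Init') cost only one round; in the worst case all $m$ edges of the ring must be traversed freshly, but since $R_i$ has only $m$ edges and $a_1$ starts somewhere on the ring, it needs at most $m-1$ fresh edge-traversals to reach $v_{i,j}$, i.e. at most $3(m-1) < 3m$ rounds. Actually I would be slightly more careful: the adversary can keep placing a missing edge just ahead of $a_1$, but that is precisely a "stuck" event, handled in Case 2; if $a_1$ is genuinely never stuck then each forward step succeeds and $3m$ rounds suffice, well inside $7m$.

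\textbf{Case 2: $a_1$ gets stuck at some round $r \le g(l)$.} This is the main obstacle and the heart of the lemma. When $a_1$ is stuck on an edge $e$ of the ring, $e$ is the unique missing edge (1-interval connectivity), so the rest of the ring is a path. Agent $a_2$, upon detecting that $a_1$ is stuck, either immediately reverses to $east$ (if $a_1$ is stuck on a forward missing edge, state $\neg$Init and friends) or waits at most $3m$ rounds and then reverses (if $a_1$ is stuck mid-backtrack, state Wait$^{02}$). After reversing, $a_2$ traverses the complementary arc of the ring by cautious walk; since that arc has at most $m-1$ edges and $e$ remains missing only as long as $a_1$ stays stuck — and the moment $e$ reappears $a_1$ resumes and one of the two reaches $v_{i,j}$ — the adversary is forced into a dilemma: keeping $e$ missing lets $a_2$ sweep the whole complementary path unobstructed (at most $3(m-1)$ rounds of cautious walk) and reach $v_{i,j}$, while removing $e$ frees $a_1$. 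Either way $v_{i,j}$, which lies on the ring, is reached. Adding the waiting time: at most $3m$ rounds of $a_2$ waiting, plus at most $3m$ rounds for $a_2$'s cautious sweep of the complementary arc (or $a_1$'s resumed walk), plus the $O(m)$ slack before the first stuck event, totals at most $7m$ rounds. I would double-check that the "stuck mid-backtrack" sub-case does not stack two independent $3m$ waits, and invoke the Observation preceding the lemma to rule out the pathological configuration where every agent is on one side of the cut trying only the $e_{cc}$ direction — the Observation's discussion already argues Algorithm \ref{alg-1} avoids this, so after $a_2$ reverses there is genuinely an agent pushing on $e$ from each side and the adversary cannot indefinitely block the ring.

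The remaining bookkeeping is to verify that these case bounds are tight enough: $3m$ (pre-stuck slack) $+\,3m$ (wait) is already $6m$, and the final sweep or resumption must be absorbed into the remaining $m$; I would tighten Case 1's bound to note that if $a_1$ travels far before getting stuck, $a_2$'s complementary arc is correspondingly short, so the sum of $a_1$'s progress and $a_2$'s complementary sweep is bounded by the ring's circumference times the cautious-walk factor, and the $3m$ wait is the only genuinely additive overhead — giving $7m$ with a little room to spare. The one delicate point worth spelling out in full is that after $a_2$ reverses, it does not itself get permanently stuck: any missing edge it meets on the complementary arc would be a \emph{second} simultaneously missing edge on $R_i$, contradicting 1-interval connectivity as long as $e$ is still missing, and once $e$ reappears $a_1$ moves — so $a_2$'s sweep is never blocked for more than the instant it takes the adversary to swap which edge is absent, and such swaps only help the pair make progress.
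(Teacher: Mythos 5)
Your proposal is correct and follows essentially the same route as the paper's proof: the stuck leader, the follower's wait of at most $3m$ rounds before reversing, and the $1$-interval-connectivity dilemma forcing the adversary to free one of the two agents are exactly the ingredients the paper uses, with the paper simply instantiating them on a single claimed worst-case configuration (both agents adjacent to $v_{i,j}$, leader stuck mid-backtrack) rather than running your exhaustive case split. Your amortization remark --- that the leader's forward progress and the follower's complementary sweep jointly cover the ring once, so only the $3m$ wait is genuinely additive --- is the bookkeeping the paper leaves implicit, and it is the right way to see that the general case does not exceed $7m$.
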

\begin{proof}
    Suppose, without loss of generality, $a_1$ and $a_2$ are initially co-located at a node $v_{i,j+1}$ along $R_i$ and they start executing \textsc{Cautious-WaitMoveWest}$(j,2)$. We define the worst scenario, when in the first round, $a_1$ moves one step towards $west$, i.e., to the the node $v_{i,j}$, while updating $f(west)=0$ at $v_{i,j+1}$. In this round, the other agent $a_2$ after finding that it is not the lowest Id and $f(west)=\bot$ waits for one round. In the next round, suppose the adversary disappears the edge $(v_{i,j}, v_{i,j+1})$, so $a_1$ after finding this, remains stuck at $v_{i,j}$ in state $\textit{Backtrack}^0$ until the edge reappears. On the other hand, $a_2$ finds that in the second round $f(west)=0$ and the edge along $west$ is also missing, so it waits till $time=3m$ rounds at $v_{i,j+1}$, and still if the edge remains missing, then it enters state \textit{Return}, in which since $a_1$ is stuck but $a_2$ is not stuck and waiting for $a_1$, so $a_2$ changes its direction towards $east$ and starts moving \textit{cautiously}, as the nodes along $east$ are marked with $\bot$. Now, if $a_2$ finds a missing edge along $east$ at round $r$ (where $r<7m$), then $a_1$ finds the edge reappeared, so at round $r$ it returns to $v_{i,j+1}$ marks $f(west)=1$ and at round $r+1$ reaches $v_{i,j}$. Otherwise, if $a_2$ never encounters a missing edge, then along $east$ in at most $3m$ rounds it reaches the desired node $v_{i,j}$ and the ring is also explored, which proves our claim. \end{proof}

\begin{lemm}\label{lemma:Impossibile3agentAlg1}
    If three agents are executing \textsc{Cautious-WaitMoveWest}$(j,3)$ along $R_i$ and $v_{i,j}$ is the black hole, then at most 2 agents enter the black hole whereas the adversary has the ability to stop the third agent from detecting the black hole location.
\end{lemm}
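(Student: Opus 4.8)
I would split the claim into its two assertions. For ``at most two agents enter the black hole'', I argue from the cautious walk: an agent steps onto an unexplored edge $e$ of $R_i$ only while $f(e)=\bot$, writing $f(e)=0$ at the near endpoint just before crossing; hence, once the first agent walking toward $v_{i,j}$ along one of its two incident edges $(v_{i,j-1},v_{i,j}),(v_{i,j},v_{i,j+1})$ is swallowed, the $0$ it left behind diverts every later agent reaching that endpoint into a $\textbf{Wait}$ state (one of $\textbf{Wait}^{01},\textbf{Wait}^{02},\textbf{Wait}^{03}$ or a negation) rather than into $\textbf{Init}$, so no further agent re-enters $v_{i,j}$ along that edge; thus at most one agent is lost per incident edge, at most two in all, and at least one of the three survives. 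For the second assertion I describe an adversary that never blocks a \emph{fresh} approach to $v_{i,j}$. Then the execution of Algorithm~\ref{alg-1}---first during the exploration phase, and after round $g(l)=8m$ through the \textbf{Return} state, which sends every non-stuck agent toward $v_{i,j}$---loses exactly two agents: one enters $v_{i,j}$ from $v_{i,j+1}$, leaving $f(west)=0$ there, and one enters from $v_{i,j-1}$, leaving $f(east)=0$ there, so a single \emph{survivor} remains. (Allowing fewer losses does not help the adversary: with two agents still alive it would eventually be forced to expose, at some incident edge of $v_{i,j}$, the view ``mark $0$ with the edge present'', which triggers termination at the next step.)

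From then on I fix the adversary's invariant: at every round exactly one edge of $R_i$ is missing, and it is the incident edge of $v_{i,j}$ on the side the survivor currently occupies or is moving toward ($(v_{i,j},v_{i,j+1})$ when the survivor is at $v_{i,j+1}$, $(v_{i,j-1},v_{i,j})$ when it is at $v_{i,j-1}$, and either of the two while the survivor is elsewhere). This obeys $1$-interval connectivity, since a ring minus one edge is connected, and it is realizable because the survivor is a single agent moving one step per round---adjacent to at most one side of $v_{i,j}$ in any round and needing at least one round to circle the long way between the two sides. Under this invariant the survivor never terminates: inspecting Algorithm~\ref{alg-1}, the only state declaring a black hole is $\textbf{TerminateW}$ (resp.\ $\neg\textbf{TerminateW}$), reached only from $\textbf{Wait}^{03}$ (resp.\ $\neg\textbf{Wait}^{03}$) upon reading $f(west)=0$ (resp.\ $f(east)=0$) with \emph{that edge present}; but whenever the survivor sits at $v_{i,j+1}$ (resp.\ $v_{i,j-1}$) the adversary has made exactly that edge missing, so, finding the mark $0$ (not $\bot$) together with a missing edge, the survivor enters $\textbf{Wait}^{02}$ (resp.\ $\neg\textbf{Wait}^{02}$), waits, then reverses and circles to the other side, where the same situation recurs; and once $time\ge g(l)$ it passes to \textbf{Return}, where it still cannot cross into $v_{i,j}$ (both incident edges bear mark $0$ and the adversary keeps the relevant one missing) and simply remains in place. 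Hence $\textbf{TerminateW}$ is never reached; moreover the survivor's entire observation history is equally consistent with the alternative in which $v_{i,j}$ is \emph{safe} and the two vanished agents are merely \emph{stuck} behind a persistently missing incident edge, so no alternative halting rule would let it correctly output $v_{i,j}$ either.

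I expect the main obstacle to be the fine bookkeeping of the transitions of Algorithm~\ref{alg-1} in the immediate neighbourhood of $v_{i,j}$: one must verify that a survivor facing the pair (mark $0$, missing edge) is always routed into a $\textbf{Wait}^{02}$-type loop---and, after round $g(l)$, into the passive tail of \textbf{Return}---and never into an actual cautious step onto the edge toward $v_{i,j}$, so that no third agent is lost and no termination is triggered. The second delicate point is the corner case $m=3$, where $v_{i,j-1}$ and $v_{i,j+1}$ are themselves adjacent: there one must check that the adversary can still toggle its single missing edge between the two incident edges of $v_{i,j}$ in step with the survivor hopping between the two sides, without ever exposing the ``mark $0$ with the edge present'' configuration.
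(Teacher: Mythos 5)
Your proposal is correct and follows essentially the same route as the paper: one agent is consumed from each of the two incident edges of $v_{i,j}$ in $R_i$, and the adversary then perpetually toggles the ring's single missing edge between those two incident edges in step with the survivor's position, so the survivor always observes the pair (mark $0$, edge missing) and is routed into a $\textbf{Wait}^{02}$-type loop rather than the terminating configuration (mark $0$, edge present). Your explicit ``at most one loss per incident edge'' bookkeeping and the indistinguishability remark are elaborations the paper leaves implicit, not a different argument.
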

\begin{proof}
    Suppose at round $r$ (where $r\leq 3m$) an agent $a_1$ enters the black hole node $v_{i,j}$ from $v_{i,j+1}$, then suppose at round $r+t$ (where $t\in\mathbb{Z}^+$ and $t>0$) another agent $a_2$ reaches the adjacent node $v_{i,j+1}$, only to find $(v_{i,j},v_{i,j+1})$ is missing and $f(west)=0$. In this situation, $a_2$ waits at most $3m$ rounds and then moves towards $east$ while it enters the state $\neg\textit{Init}$. In the meantime, while $a_2$ is already waiting at $v_{i,j+1}$ for the first $3m$ rounds, $a_3$ at some point $catches$ $a_2$, then finding the $a_2$ is already waiting, $a_3$ immediately moves towards $east$ by entering the state $\neg\textit{Init}$ (as the predicate $read[f(west)=\bot \lor f(west)=0 \lor f(west)=1]\wedge MEwest \wedge catches-waiting$ in state \textit{Init} is satisfied) and in at most $3m$ rounds enters the black hole node $v_{i,j}$ from $v_{i,j-1}$ along $east$. Next, whenever $a_2$ reaches $v_{i,j-1}$, adversary again disappears $(v_{i,j-1},v_{i,j})$ and whenever it reaches $v_{i,j+1}$, adversary reappears $(v_{i,j-1},v_{i,j})$ and disappears $(v_{i,j},v_{i,j+1})$, restricting $a_2$ to locate the black hole, whereas both $a_1$ and $a_3$ has been consumed by the black hole.\end{proof}

    \begin{blab}\label{corollary:Possibile4agentAlg1}
        A set of 4 agents, executing \textsc{Cautious-WaitMoveWest}$(j,l)$ (where $l\geq 4$) along $R_i$ can correctly locate the black hole in at most $15m$ rounds, where $v_{i,j}$ is the black hole node. 
    \end{blab}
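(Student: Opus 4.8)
My plan is to combine two ingredients: the black hole $v_{i,j}$ can swallow at most two agents, so among the four at least two survive; and since each ring of the torus is only 1-interval connected, in any single round the adversary can hide at most one of the two edges incident to $v_{i,j}$. I would first invoke the corollary stating that \textsc{Cautious-WaitMoveWest} lets at most two agents enter the black hole, so at least two of the four agents running \textsc{Cautious-WaitMoveWest}$(j,l)$ survive the whole execution. As in the proof of Lemma~\ref{lemma:Impossibile3agentAlg1} I would record the trace a consumed agent leaves: an agent that steps into $v_{i,j}$ from $v_{i,j+1}$ does so by a \textbf{west} move out of state \textbf{Init}, hence it first writes $f(west)=0$ at $v_{i,j+1}$; symmetrically, an agent consumed from $v_{i,j-1}$ leaves $f(east)=0$ at $v_{i,j-1}$. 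Once one side of $v_{i,j}$ carries such a $0$-mark, the guards of state \textbf{Init} that react to $read[f(west)=0]$ (namely \textbf{Wait$^{02}$} if the incident edge is missing and \textbf{Wait$^{03}$} if it is present) stop every later agent from entering $v_{i,j}$ from that side; hence at most one agent is consumed from each of the two endpoints of the line $R_i\setminus\{v_{i,j}\}$.

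Next I would argue that, taking $l=4$ (the natural choice with four agents, giving $g(l)=4(l-1)m=12m$), by the end of the cautious-walk phase either some agent has already reached \textbf{TerminateW} (or its mirror $\neg$\textbf{TerminateW}), or both edges incident to $v_{i,j}$ carry the mark $0$ and exactly two agents remain alive. For the second alternative I would reuse the exploration-progress reasoning behind Lemma~\ref{lemma:CorrAlg1}: the line $R_i\setminus\{v_{i,j}\}$ is explored from both endpoints, and since $R_i$ is 1-interval connected the adversary cannot keep both $(v_{i,j},v_{i,j+1})$ and $(v_{i,j-1},v_{i,j})$ missing in the same round, so whenever an agent sits at an endpoint whose incident black-hole edge is still $\bot$, the opposite incident edge is present and some agent is consumed there. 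Applying this once on each side drives the mark $0$ onto both incident edges, and since at most two agents are consumed overall, exactly two survive.

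To finish I would analyse the \textbf{Return} phase (rounds $g(l)$ through $g(l)+3m$, i.e.\ up to round $15m$). In state \textbf{Return} the surviving agents head toward $v_{i,j}$, and using the redirection rule of that state I would reach a round at which one survivor $X$ sits at $v_{i,j+1}$ and the other survivor $Y$ at $v_{i,j-1}$. In that round at most one of the edges $(v_{i,j},v_{i,j+1})$, $(v_{i,j-1},v_{i,j})$ is missing; say $(v_{i,j-1},v_{i,j})$ is missing, so $(v_{i,j},v_{i,j+1})$ is present, and $X$, being in state \textbf{Init} with $read[f(west)=0]$ and the incident edge present, enters \textbf{Wait$^{03}$} and two rounds later \textbf{TerminateW}; by the lemma guaranteeing that an agent which terminates while moving reports correctly, $X$ outputs $v_{i,j}$. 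The adversary can spoil this only by making $(v_{i,j},v_{i,j+1})$ missing within those two rounds, which un-hides $(v_{i,j-1},v_{i,j})$ and puts $Y$ on the symmetric winning track through states $\neg$\textbf{Wait$^{03}$} and $\neg$\textbf{TerminateW}; and each time it forces a survivor's check to fail it pays a passive \textbf{Wait$^{02}$} delay of at most $3m$ rounds on that survivor while leaving the other incident edge present. A short bookkeeping argument over these alternations then shows that some survivor terminates before the two survivors (which circulate the line with period $O(m)$) exhaust the $3m$-round window, so the total is at most $15m$ rounds.

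I expect the main obstacle to be exactly this last bookkeeping: converting the informal ``the adversary can only postpone'' into a quantitative bound that fits inside the $3m$-round \textbf{Return} window, i.e.\ ruling out an indefinite ping-pong in which both survivors are bounced between $v_{i,j+1}$ and $v_{i,j-1}$ so that neither ever observes its incident black-hole edge present for the two consecutive rounds required by \textbf{Wait$^{03}$} — the key leverage being that the adversary cannot hide both incident edges in a round, so it can fail at most one of two simultaneously-camped survivors. A secondary, more routine difficulty is making precise the claim of the second step, that the cautious-walk phase stamps $0$ on both edges incident to $v_{i,j}$ within $g(l)$ rounds; this should follow the template of Lemma~\ref{lemma:CorrAlg1} together with the 1-interval-connectivity observation used throughout.
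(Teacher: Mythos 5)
The paper states this corollary without proof, so there is no official argument to match against; your reconstruction (two agents consumed, two survivors camped at $v_{i,j+1}$ and $v_{i,j-1}$, 1-interval connectivity prevents the adversary from hiding both incident edges, hence one survivor's \textbf{Wait}$^{03}$ check succeeds) is clearly the route the authors intend, as it is the exact negation of the adversary strategy in Lemma~\ref{lemma:Impossibile3agentAlg1}. Your assembly of the ingredients (Corollary on two consumed agents, the $0$-marks left at both neighbours of the black hole, the arithmetic $g(4)+3m=15m$) is faithful to the paper.

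However, the step you defer as ``a short bookkeeping argument'' is not routine, and as you yourself suspect, it is where the entire proof lives; as written, it is a genuine gap. The difficulty is that ``the adversary cannot hide both incident edges in the same round'' does not by itself imply that some survivor observes its incident edge present at \emph{both} of the rounds its \textbf{Wait}$^{03}$ check requires. Concretely: a survivor $X$ that enters \textbf{Wait}$^{03}$ upon seeing $e_W=(v_{i,j},v_{i,j+1})$ present at round $t$ only terminates if $e_W$ is still present when $time2>1$ fires; the adversary's obligation is therefore to hide $e_W$ at that later round, and symmetrically to hide $e_E=(v_{i,j-1},v_{i,j})$ at $Y$'s check rounds. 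If $time2$ is reset at each entry into \textbf{Wait}$^{03}$, one can exhibit adversary schedules in which $X$'s obligations and $Y$'s obligations fall on disjoint, interleaved sets of rounds (e.g.\ the two survivors cycle through $\textbf{Init}\to\textbf{Wait}^{03}\to\textbf{Init}\to\textbf{Wait}^{02}\to\textbf{Wait}^{03}$ with a phase offset), so the two constraints never collide and the ping-pong continues; if instead $time2$ counts from the \emph{first} encounter, a strict period-two alternation of the missing edge keeps each incident edge from ever being present for two consecutive rounds while still never hiding both at once. Closing the gap requires fixing the semantics of $time2$ and then showing that the adversary's hiding obligations for $e_W$ and $e_E$ must coincide in some round before $15m$; your proposal does not do this. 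A secondary issue: you route the final detection through the \textbf{Return} phase, i.e.\ rounds $g(l)$ to $g(l)+3m$, which equals $[12m,15m]$ only for $l=4$; since the corollary claims the same $15m$ bound for all $l\ge 4$ (where $g(5)=16m$ already exceeds $15m$), the detection must be argued to occur during the main cautious-walk phase, so the \textbf{Return}-phase framing of your last step cannot be the intended mechanism.
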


 \begin{blab}\label{corollary:2agentsBHAlg1}
     \textsc{Cautious-WaitMoveWest}$(j,l)$ ensures that exactly 2 agents can be consumed by the black hole when the desired node $v_{i,j}$ is also the black hole node.
     \end{blab}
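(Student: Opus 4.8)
The claim bundles two inequalities, which I would prove separately: the upper bound, that no execution can send a third agent into $v_{i,j}$, and the lower bound, that some adversarial schedule does send two agents in (this direction needs $l\ge 2$, which always holds when the subroutine is invoked). Together they give ``exactly two''.

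For the upper bound I would invoke the preceding corollary (\textsc{Cautious-WaitMoveWest}() sends at most two agents into the black hole), but since the lower bound reuses the same mechanism it is worth isolating the reason. Inside $R_i$ the node $v_{i,j}$ has precisely two entry edges: it is reached by a $west$ step from $v_{i,j+1}$ (an agent in state \textbf{Init}) or by an $east$ step from $v_{i,j-1}$ (an agent in state $\neg$\textbf{Init}). On the first edge, by the specification of \textbf{Init} the first agent to cross writes $f(west)=0$ at $v_{i,j+1}$ \emph{before} moving, and being consumed it never returns to raise this mark to $1$; hence every later agent arriving at $v_{i,j+1}$ reads $f(west)=0$ and is routed either into \textbf{Wait$^{03}$} (edge present: it eventually declares the black hole via \textbf{TerminateW}, without crossing) or into \textbf{Wait$^{02}$} / $\neg$\textbf{Init} (edge missing: it waits at most $3m$ rounds and reverses, again without crossing). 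The symmetric statement holds on the $east$ edge with state $\neg$\textbf{Init} and the mark $f(east)=0$ at $v_{i,j-1}$. Since an agent consumed on one of the two edges cannot also be consumed on the other, at most two distinct agents are lost.

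For the lower bound I would exhibit an adversarial schedule that achieves two losses. For $l=3$ this is exactly the construction in the proof of Lemma~\ref{lemma:Impossibile3agentAlg1}: the adversary lets $a_1$ cross the $west$ edge into $v_{i,j}$, keeps $(v_{i,j},v_{i,j+1})$ missing so that $a_2$ waits and then reverses, and lets $a_3$ --- which, finding $a_2$ already waiting, switches at once to $\neg$\textbf{Init} --- travel around $R_i$ and cross the $east$ edge into $v_{i,j}$; thus $a_1$ and $a_3$ are both consumed. For $l>3$ the same schedule applies after a short preprocessing step: using $1$-interval connectivity of $R_i$, the adversary first pins the surplus $l-3$ agents behind a missing edge far from $v_{i,j}$, as in the proof of Theorem~\ref{theorem:colocatedAgentLB}, so that they never touch the ports at $v_{i,j\pm1}$ and cannot disturb the two critical crossings. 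For $l=2$ the worst case already analysed inside the proof of Lemma~\ref{lemma:CorrAlg1} does it directly: $a_1$ crosses $west$ into $v_{i,j}$, the adversary keeps that edge missing so that $a_2$ eventually reverses, and $a_2$ then walks east around $R_i$, reaches $v_{i,j-1}$, and crosses $east$ into $v_{i,j}$. Combined with the upper bound, the adversary attains two losses and never more, i.e.\ exactly two can be consumed.

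The main obstacle I anticipate is not the lower bound --- a brief replay of arguments already on the page --- but keeping the upper-bound bookkeeping watertight across all the states of Algorithm~\ref{alg-1} and their negations: one must rule out every indirect way for a third agent to step onto $v_{i,j}$, for instance an agent that reversed in \textbf{Wait$^{02}$} and later returns to a neighbour of $v_{i,j}$ from the opposite side, or an agent in \textbf{Backtrack$^0$} / $\neg$\textbf{Backtrack$^0$} adjacent to $v_{i,j}$. I would close each such case with the same invariant: to be in one of these states adjacent to $v_{i,j}$ an agent would have had to occupy $v_{i,j}$ one round earlier (impossible), and on any fresh approach to $v_{i,j\pm1}$ the agent reads the $0$-mark on the port toward $v_{i,j}$, never upgraded to $1$ because its author has been consumed, so the cautious-walk guard blocks the crossing once more. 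With that invariant the remaining cases are routine.
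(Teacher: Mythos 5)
Your proposal is correct and matches what the paper intends: the corollary is stated without proof, as an immediate combination of the earlier ``at most two agents enter the black hole'' corollary (your upper-bound invariant on the $0$-marks at $v_{i,j+1}$ and $v_{i,j-1}$ is exactly the reason behind it) with the explicit two-consumption schedules already exhibited in the proofs of Lemma~\ref{lemma:CorrAlg1} and Lemma~\ref{lemma:Impossibile3agentAlg1}. One small caveat on your $l>3$ case: ``pinning'' the $l-3$ surplus agents behind a second missing edge of $R_i$ while the schedule near $v_{i,j}$ also uses a missing edge would violate $1$-interval connectivity of the row ring; but the pinning is unnecessary, since your own upper-bound invariant already shows that any surplus agent reaching $v_{i,j\pm1}$ reads the $0$-mark and cannot cross, so the $l=3$ schedule goes through unchanged for all $l\ge 3$.
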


\begin{figure}
  	\centering
  	\includegraphics[scale=0.6]{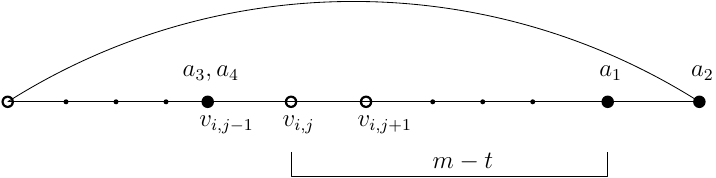}
  	\caption{An instance where 4 agents are operating along a ring while executing Algorithm \ref{alg-1}}
  	\label{fig:partition}
  \end{figure}

\begin{lemm}\label{lemma:4mRound1agentDesiredNodeAlg1}
    After the first $4m$ rounds has elapsed executing \textsc{Cautious-WaitMoveWest}$(j,l)$ (where $l>2$), if at least 3 agents are still present along $R_i$ then it takes at most $4m$ rounds for an agent among them to reach the desired node, since the last agent has reached the desired node.
\end{lemm}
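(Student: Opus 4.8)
The plan is to argue that once the first $4m$ rounds have elapsed, every surviving agent is in a ``settled'' configuration: it is either already parked at the desired node $v_{i,j}$ (in state \textbf{Wait}), or it is stuck at some node because of a missing edge, or it is in the process of a cautious walk that will shortly bring it to the desired node. I would first invoke the structure of Algorithm~\ref{alg-1}: the threshold $g(l)=4(l-1)m \ge 4m$ (since $l>2$), so after round $4m$ we are in the regime where the predicate $time\geq g(l)$ governs behaviour only for those $l$ with $g(l)$ not yet reached; but the key observation I want is instead about the \emph{ring geometry}. Since $R_i$ is $1$-interval connected, at each round at most one edge of the ring is missing. If at least three agents remain, then at most one of them can be blocked by the unique missing edge at any given round; the others are free to move. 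By Corollary~\ref{corollary:2agentsBHAlg1} at most two of the surviving agents can ever fall into the black hole, so if we started with at least $3$ agents on $R_i$ after round $4m$, at least one will reach $v_{i,j}$ — the content of the lemma is the \emph{timing} refinement that this happens within $4m$ rounds of the previous arrival.

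The core argument I would carry out is a ``last agent arrives'' analysis analogous to Lemma~\ref{lemma:CorrAlg1} but localized in time. Suppose an agent $a$ reached $v_{i,j}$ at round $r\ge 4m$ and entered state \textbf{Wait}. Consider the next agent $a'$ that is still moving on $R_i$. Since $a'$ is performing cautious walk (or is in the \textbf{Return} phase, in which waiting agents reverse direction and move toward $v_{i,j}$), it traverses the ring toward $v_{i,j}$ from one side. A cautious walk over an arc of $t$ unexplored edges costs $3t$ rounds in the worst case, but here I would exploit the fact that by round $4m$ most of the ring is \emph{already marked}: the edges $a$ traversed on its way to $v_{i,j}$ are all marked $1$ (safe), so $a'$, upon reaching such an edge, uses state \textbf{Init'} and moves in a single round per edge rather than three. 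Hence the only edges costing the triple price are the yet-unexplored ones, of which there are at most $O(1)$-worth of arc not covered — and crucially, the \emph{entire} unexplored portion forms a single contiguous arc adjacent to $v_{i,j}$ (everything else has been explored), so $a'$ pays at most $3\cdot(\text{length of that arc}) \le 3m$ rounds of cautious walking, plus at most $m$ rounds of single-step travel along already-safe edges, plus the handling of at most one missing-edge stall. Combining, $a'$ reaches $v_{i,j}$ within $4m$ rounds of round $r$.

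The main obstacle will be pinning down precisely why the unexplored region is a single contiguous arc adjacent to $v_{i,j}$ after round $4m$, and why a stalled agent (stuck on the unique missing edge) does not inflate the bound: if $a'$ is blocked, some \emph{other} surviving agent (there are at least three, so at least one besides $a$ and the blocked one) is free and approaching $v_{i,j}$ from the other direction, and by the Observation following Algorithm~\ref{alg-1} plus the $3m$-round reversal rule, the adversary cannot indefinitely block both sides of the unexplored arc — so within $3m$ rounds the reversal kicks in and an agent completes the traversal from the other side. I would therefore split into cases on whether the next agent to arrive is blocked or not, and in the blocked case transfer the argument to the agent coming from the opposite direction; in both cases the $3m$-round reversal window plus the at most $m$ rounds of single-step travel over safe edges yields the claimed $4m$ bound. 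The bookkeeping of which edges are marked $0$ versus $1$ versus $\bot$ at round $4m$, and ruling out a pathological interleaving where the adversary keeps resetting an edge to force repeated cautious re-traversal, is the delicate part, and I would lean on Corollary~\ref{corollary:2agentsBHAlg1} and the cautious-walk invariant (an edge once marked $1$ stays $1$, an edge marked $0$ is either the black-hole edge or will be upgraded to $1$ by the returning explorer) to close it.
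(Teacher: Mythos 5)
Your proposal assembles the right ingredients (unit cost on safe edges, triple cost on unexplored edges, at most one blocked agent per round, reversal after a bounded wait), but it does not close, and you flag the decisive steps yourself as "the main obstacle" and "the delicate part" rather than proving them. The concrete gap is arithmetic: your own budget is $3\cdot(\text{unexplored arc}) \le 3m$ for cautious walking plus $\le m$ for travel over safe edges, which already exhausts $4m$ and leaves nothing for "the handling of at most one missing-edge stall" — yet a stall handled through state $\text{Wait}^{02}$ costs up to $3m$ further rounds of pure waiting before the reversal rule fires. You never show that this waiting is absorbed (e.g.\ because another agent makes progress meanwhile, or because the wait is cut short by an arriving agent or a reappearing edge), so as written the bound you can actually justify is closer to $7m$ than $4m$. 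A second unsupported step is the claim that "by round $4m$ most of the ring is already marked": the adversary can keep agents stalled so that a long arc remains unexplored; the correct anchor is not the elapsed $4m$ rounds but the hypothesis that some agent has just \emph{reached} $v_{i,j}$, which certifies one fully safe path around to the desired node.

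The paper's proof takes a different and more concrete route: it exhibits an explicit worst-case adversarial schedule and exploits a trade-off your decomposition discards. If the unexplored arc from the leading westbound agent to $v_{i,j}$ has length $m-t$, then the agent that reverses must first retrace $t$ safe edges, then (after a further block at $v_{i,j+1}$) detour roughly $m-1$ safe edges around to the east side; the cautious-walk term is $3(m-t)$, and the total is $t+3(m-t)-2+(m-1)+2=4m-2t-1\le 4m$. The point is that the cautious-walk cost \emph{shrinks} as the safe detour grows, so the two never both attain their worst case simultaneously; bounding each by its maximum separately, as you do, cannot recover $4m$. To repair your argument you would need to (i) prove the contiguity of the unexplored arc and its adjacency to $v_{i,j}$ at the moment the previous agent arrives, and (ii) replace the independent worst-case bounds by a parametrized accounting in the arc length that exhibits this cancellation, which is essentially what the paper's schedule-based computation does.
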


\begin{proof}
    Suppose at round $r-1$ there exists two agents $a_3$ and $a_4$ which encounters a missing edge along $east$ at $v_{i,j-1}$ before reaching the desired node $v_{i,j}$. In this situation $a_3$ satisfies the predicate $read[f(east)=1 \lor f(east)=\bot]\wedge MEeast \wedge LowestId$ in state $\neg Init$ hence it waits at the node $v_{i,j-1}$, whereas $a_4$ satisfies the predicate $read[f(east)=\bot \lor f(east)=0 \lor f(east)=1]\wedge MEeast \wedge \neg LowestId$ and hence moves in to state \textit{Init} by reversing its direction to $west$. On the otherhand, $a_1$ and $a_2$ be the other two agents which are currently trying to reach $v_{i,j}$ while walking \textit{cautiously} from $west$. Next, in round $r$ suppose the adversary reappears the edge $(v_{i,j-1},v_{i,j})$ whereas disappears the edge between $a_1$ and $a_2$, which leads $a_3$ to reach the desired node in at most 3 rounds (if $f(east)=\bot$), whereas $a_4$ has already started to move $west$ while in state $Init$. Now, at this point suppose there are $m-t$ (where $t\in \mathbb{Z}^+$ and $1<t<m$) nodes left to be explored along $west$ from the current position of $a_1$ to the desired node (refer Fig. \ref{fig:partition}). So, in at least $t$ rounds $a_4$ must reach $a_2$ which is currently waiting for at most $3m$ as the predicate $read[f(west)=0]\wedge MEwest \wedge LowestId$ in state $Init$ is satisfied. The moment $a_4$ reaches the node containing $a_2$, adversary reappears the edge between $a_1$ and $a_2$, so $a_2$ which is in state $\textit{Bactrack}^0$ returns to $a_1$ and marks the corresponding port to 1. So after this all three agents starts to move cautiously along $west$ for $3(m-t)-2$ rounds, i.e., the point at which $a_1$ first visits the desired node from $v_{i,j+1}$, and the remaining agents $a_2$ and $a_4$ are at $v_{i,j+1}$. At this point, adversary again disappears the edge $(v_{i,j},v_{i,j+1})$, so this means $a_1$ remains stuck in state $\textit{Backtrack}^0$ until the edge reappears, $a_2$ starts waiting for at most $3m$ rounds, but $a_4$ satisfying the predicate $read[f(west)=1 \lor f(west)=0 \lor f(west)=\bot]\wedge MEwest \wedge LowestId$ reverses its direction of movement by moving in to state $\neg Init$. So, in another $m-t-1+t$ steps, it reaches $v_{i,j-1}$, i.e., the other side of $v_{i,j}$ and in the next 2 rounds, at least one of the agents between $a_1$, $a_2$ and $a_4$ reaches the desired node. So, the total number of rounds required since $a_3$ has reached the desired node is: $t+3(m-t)-2+m-1+2=4m-2t-1\leq 4m$.  
\end{proof}

The following corollary follows from Lemmas \ref{lemma:CorrAlg1} and \ref{lemma:4mRound1agentDesiredNodeAlg1}.  
\begin{blab}\label{corollary:4(l-2)mDesiredNodeAlg1}
    Our algorithm ensures that among $l$ agents operating along $R_i$ at least $l-2$ agents reach the desired node within $4(l-1)m$ rounds.
\end{blab}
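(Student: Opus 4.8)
The plan is to derive Corollary \ref{corollary:4(l-2)mDesiredNodeAlg1} by iterating Lemma \ref{lemma:4mRound1agentDesiredNodeAlg1}, using Lemma \ref{lemma:CorrAlg1} (together with the remark that extra agents cannot delay a first arrival) to supply the base case, and then checking that the accumulated time stays under $g(l)=4(l-1)m$. Throughout, as in Lemma \ref{lemma:CorrAlg1}, the $l$ agents run \textsc{Cautious-WaitMoveWest}$(j,l)$ on a safe ring $R_i$; the key structural fact I would exploit is that once an agent reaches $v_{i,j}$ it enters state \textbf{Wait} and stays there, so it neither leaves $R_i$ nor can be used by the adversary afterwards.

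If $l=2$ then $l-2=0$ and there is nothing to prove, so assume $l\ge 3$. For the base case I would argue that some agent reaches $v_{i,j}$ within the first $4m$ rounds — this is exactly the fact implicit in the hypothesis ``since the last agent has reached the desired node'' of Lemma \ref{lemma:4mRound1agentDesiredNodeAlg1}, and it follows from the two-agent analysis of Lemma \ref{lemma:CorrAlg1} made faster by the presence of a third agent (with three or more agents there is always an agent sweeping $R_i$ from the side of $v_{i,j}$ not obstructed by the current missing edge). Then I would run the induction: suppose that after round $t_k$ exactly $k$ agents sit at $v_{i,j}$, with $1\le k\le l-3$; then $l-k\ge 3$ agents are still roaming on $R_i$, so Lemma \ref{lemma:4mRound1agentDesiredNodeAlg1} applies and a further agent reaches $v_{i,j}$ within $4m$ more rounds, giving $t_{k+1}\le t_k+4m$. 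Starting from $t_1\le 4m$ and iterating up to $k=l-3$ yields $t_{l-2}\le 4(l-2)m<4(l-1)m=g(l)$, so at least $l-2$ agents are already parked at $v_{i,j}$ when the transition to state \textbf{Return} takes place, which is the assertion. Even the weaker base case $t_1\le 7m$ coming directly from Lemma \ref{lemma:CorrAlg1} would suffice, since then $t_{l-2}\le 7m+4(l-3)m=g(l)-m<g(l)$.

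The step I expect to be most delicate is squaring the base case with the ``at least three agents still present'' hypothesis of Lemma \ref{lemma:4mRound1agentDesiredNodeAlg1}: one must verify that each successive invocation genuinely has three not-yet-arrived agents available, and that the $3m$-wait-then-reverse mechanism of the algorithm (together with the observation recalled after Algorithm \ref{alg-1} about the edges $e_c$ and $e_{cc}$, which rules out the bad cut configuration) keeps every epoch to at most $4m$ rounds even when the adversary repeatedly blocks the two edges incident to $v_{i,j}$. It is worth stressing that $l-2$ is the natural stopping point rather than a loose bound: once only two roaming agents remain, Lemma \ref{lemma:4mRound1agentDesiredNodeAlg1} is no longer available, and a single missing edge alternated between the two edges at $v_{i,j}$ lets the adversary shuttle those last two agents indefinitely, so this argument cannot guarantee more than $l-2$ arrivals.
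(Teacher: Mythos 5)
Your proposal is correct and follows essentially the same route as the paper, which derives Corollary \ref{corollary:4(l-2)mDesiredNodeAlg1} directly from Lemma \ref{lemma:CorrAlg1} (first arrival) and iterated applications of Lemma \ref{lemma:4mRound1agentDesiredNodeAlg1} (each subsequent arrival within $4m$ rounds while at least three agents remain roaming); your bookkeeping $7m+4(l-3)m\le 4(l-1)m$ makes explicit the arithmetic the paper leaves implicit. The additional care you take with the base case and with why the process must stop at $l-2$ is consistent with, and slightly more detailed than, the paper's treatment.
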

     
\begin{lemm}\label{lemma:StateReturnAlg1}
    Among the remaining two agents which enter state \textit{Return} after $4(l-1)m$ rounds has elapsed, at least one among them reaches the desired node.
\end{lemm}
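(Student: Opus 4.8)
The starting point is Corollary~\ref{corollary:4(l-2)mDesiredNodeAlg1}: by round $g(l)=4(l-1)m$ at least $l-2$ of the still‑alive agents are parked at $v_{i,j}$ in state \textit{Wait}, so \emph{at most two} agents — call them $a$ and $b$ — are still elsewhere on $R_i$ when $time$ reaches $g(l)$, and these are exactly the agents that enter state \textit{Return}. I would first classify the configuration of $a$ and $b$ at round $g(l)$ via the ``Stuck'' definition: each of them is either (i) stuck at a missing edge lying in its current direction of travel, (ii) stuck while backtracking in state $\textit{Backtrack}^{0}$ (holding a half‑explored, $f{=}0$ edge), or (iii) in transit, i.e.\ not stuck.

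The engine of the proof is the fact that $R_i$ is $1$-interval connected: in every round at most one edge of $R_i$ is absent. I would argue that the direction‑reversal rules built into state \textit{Return} — an agent that is ``waiting for another agent'' reverses and heads for $C_j$, and when several agents share a node the lowest‑Id one keeps its direction while the rest reverse — force $a$ and $b$ onto the two \emph{complementary arcs} of $R_i$ ending at $v_{i,j}$: one agent walks clockwise to $v_{i,j}$ and the other counter‑clockwise, and these two routes are edge‑disjoint (they meet only at $v_{i,j}$, and in the collision sub‑case also at the node where they split). Since only one edge can be missing at a time, the adversary's single missing edge lies on at most one of these two arcs, so in every round at least one of $a,b$ can advance toward $v_{i,j}$ (cautiously across $\bot$-edges, directly across $1$-edges). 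Charging three rounds to each cautious advance and noting the two arcs together contain at most $m$ edges, an amortized count shows that within fewer than $3m$ rounds one of the two agents advances all the way to $v_{i,j}$ — i.e.\ it reaches the desired node while still inside the $3m$-round active window of state \textit{Return}.

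It remains to dispatch the boundary configurations so that they all reduce to the situation above. If both $a$ and $b$ are in transit and happen to move the same way around $R_i$, then either the leading one reaches $v_{i,j}$ unobstructed, or it is blocked, the trailing one catches it, they collide, and the collision rule splits them into opposite directions — putting us back in the complementary‑arc setting. If one agent, say $a$, is stuck in $\textit{Backtrack}^{0}$, then the adversary can keep $a$ frozen only by keeping exactly the edge $a$ is holding absent, which spends its unique missing edge; $b$'s route is then intact and $b$ reaches $v_{i,j}$ in at most $3(m-1)<3m$ rounds. Conversely, the instant the adversary lifts that edge to obstruct $b$ instead, $a$ performs the backtracking step, marks the port safe, and resumes its walk. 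In every case at least one of $a,b$ reaches $v_{i,j}$.

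The step I expect to be the main obstacle is the round accounting in the ``toggled missing edge'' regime: I must show the adversary cannot, by repeatedly moving its single missing edge between the two complementary arcs, keep \emph{both} agents short of $v_{i,j}$ for the whole $3m$-round window, and I must also rule out that repeated collisions keep splitting and re‑merging $a$ and $b$ in a way that resets progress. The plan is to mimic the adversary‑versus‑agents bookkeeping of Lemmas~\ref{lemma:CorrAlg1} and~\ref{lemma:4mRound1agentDesiredNodeAlg1}: use a potential equal to the sum of the two agents' remaining distances along their arcs, observe it strictly drops whenever an agent is unobstructed, observe at most one agent is obstructed per round, and note each obstruction‑free round costs the adversary at most a factor $3$ (the cautious‑walk overhead), so the potential — which starts below $m$ — reaches $0$ within $3m$ rounds. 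Making the constants land exactly on $3m$ rather than on a looser $O(m)$ bound will require care about where $a$, $b$, and $v_{i,j}$ sit relative to the missing edge at round $g(l)$, and about blocks that fall in the middle of a cautious‑walk triple.
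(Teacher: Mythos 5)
Your proposal is correct and follows essentially the same route as the paper's proof: a case analysis showing that the \textit{Return}-state rules force the two remaining agents onto opposite arcs toward $v_{i,j}$, after which 1-interval connectivity guarantees at least one of them arrives within $3m$ rounds. The paper simply asserts the $3m$ bound once the agents move in opposite directions, whereas you additionally sketch the potential-function bookkeeping that justifies it; this is a welcome elaboration, not a different approach.
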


\begin{proof}
    Suppose two agents $a_1$ and $a_2$ enter state \textit{Return}, then there are the following scenarios.
    \begin{itemize}
        \item Without loss of generality, if $a_1$ is trying to move along $west$ and $a_2$ is trying to move along $east$, then since the adversary can disappear at most one edge at any round, so either of the agent reaches the desired node in at most $3m$ rounds.
        \item If both the agents move along the same direction, then also we have following scenarios.
        \begin{itemize}
            \item If without loss of generality, while both agents enter state \textit{Return}, $a_1$ and $a_2$ are along the adjacent nodes, where $a_2$ is waiting for $a_1$ to return, whereas $a_1$ in state $\textit{Bactrack}^0$ is waiting for the missing edge to reappear. In this scenario, our algorithm instructs the waiting agent to change its direction, and move until either $3m$ rounds has passed, or the agent reaches its desired node. In this case also, since both the agents start moving along opposite direction, so either of them reaches the desired node within $3m$ rounds.
            \item If both the agents are together and moving in same direction, in this situation, the algorithm instructs the lowest Id agent, i.e., $a_1$ to continue moving along its direction, whereas $a_2$ must reverse its direction. So, by the earlier argument again, one among these two agents reach the desired node in at most $3m$ rounds.
        \end{itemize}
    \end{itemize}
    So, in each case, we show that in at most $3m$ rounds, at least one among the remaining two agents reach the desired node.
\end{proof}

The following theorem follows from Corollary \ref{corollary:4(l-2)mDesiredNodeAlg1} and Lemma \ref{lemma:StateReturnAlg1}.
\begin{theorem}\label{theorem:corrAlg1}
    If $l$ agents ($l\ge 2$) agents are in a safe ring $R_i$ and they perform \textsc{Cautious-WaitMoveWest}$(j,l)$, then at least $l-1$ agents reach and stay on $v_{i,j}$ within $4(l-1)m+3m$ rounds, since the start of execution of Algorithm \ref{alg-1}.
    \end{theorem}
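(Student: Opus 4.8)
The plan is to obtain the statement as a direct composition of Corollary \ref{corollary:4(l-2)mDesiredNodeAlg1} and Lemma \ref{lemma:StateReturnAlg1}, with careful bookkeeping of when agents arrive at $v_{i,j}$ and of the fact that, once an agent arrives there and is not stuck, it never leaves. The round budget splits naturally as $4(l-1)m + 3m$: the first summand is the window supplied by Corollary \ref{corollary:4(l-2)mDesiredNodeAlg1} and the second is the window supplied by Lemma \ref{lemma:StateReturnAlg1}.

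First I would invoke Corollary \ref{corollary:4(l-2)mDesiredNodeAlg1} to guarantee that within the first $g(l)=4(l-1)m$ rounds at least $l-2$ of the $l$ agents have reached $v_{i,j}$. Each such agent, upon entering a node of $C_j$ while not stuck, satisfies the predicate $current\in C_j:\textbf{Wait}$ in states \textbf{Init}/$\neg$\textbf{Init} and moves into state \textbf{Wait}, where it remains idle until further instruction; so these $l-2$ agents reach \emph{and stay on} $v_{i,j}$. Next I would handle the at most two agents that have not reached $v_{i,j}$ by round $g(l)$: as soon as $time\ge g(l)$, the predicate $time\ge g(l):\textbf{Return}$ (present in \textbf{Init}, \textbf{Wait}$^{02}$, etc.) drives each such agent into state \textbf{Return}, so the hypothesis of Lemma \ref{lemma:StateReturnAlg1} is met. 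That lemma gives that at least one of these two reaches $v_{i,j}$ within a further $3m$ rounds, after which it too enters \textbf{Wait} and stays. Summing the windows, at least $(l-2)+1=l-1$ agents sit on $v_{i,j}$ by round $4(l-1)m+3m$. For the base case $l=2$ the count from the corollary is vacuous, but Lemma \ref{lemma:StateReturnAlg1} still applies to the two agents and yields $l-1=1$ agent on $v_{i,j}$ within $4m+3m=7m$ rounds, consistent with Lemma \ref{lemma:CorrAlg1}.

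The main obstacle I anticipate is not the arithmetic but the \emph{non-interference} argument. I must verify from the state machine of Algorithm \ref{alg-1} that when the two leftover agents execute the \textbf{Return} protocol (reversing direction, walking cautiously through $\bot$-marked ports, waiting on missing edges) they neither dislodge the already-parked agents in state \textbf{Wait} nor declare a safe node to be the black hole because of whiteboard marks left during the first $g(l)$ rounds. I would also need to check that every leftover agent genuinely ends up in state \textbf{Return}: an agent stuck on a missing edge in a state such as \textbf{Wait}$^{01}$ or \textbf{Backtrack}$^{0}$ must either transition to \textbf{Return} once $time\ge g(l)$ or have reached $v_{i,j}$ beforehand, so that ``the remaining two agents which enter state \textbf{Return}'' in Lemma \ref{lemma:StateReturnAlg1} really does exhaust all agents not yet at $v_{i,j}$. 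These are local checks on the transitions of Algorithm \ref{alg-1}, but they are precisely where the care lies; once they are in place, the theorem follows by the two-window accounting above.
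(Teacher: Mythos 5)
Your proposal is correct and follows essentially the same route as the paper, which derives the theorem directly as the composition of Corollary \ref{corollary:4(l-2)mDesiredNodeAlg1} (at least $l-2$ agents arrive within $4(l-1)m$ rounds) and Lemma \ref{lemma:StateReturnAlg1} (one of the remaining two arrives within a further $3m$ rounds). The additional checks you flag --- that arriving agents park in state \textbf{Wait}, that leftover agents genuinely enter \textbf{Return}, and the vacuous base case $l=2$ --- are sensible elaborations of the same two-window accounting the paper leaves implicit.
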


 \noindent\underline{\textsc{Cautious-Move}}$(west,j)$: This algorithm is a special version of Algorithm \ref{alg-1}, it has two stages, and requires at least 2 agent. The first stage is exploration and works for $3m$ rounds, and the second stage is Exit. The algorithm works as follows, the lowest Id agent becomes the $Leader$ whereas the other agents becomes the $Follower$. The $Leader$ follows Algorithm \ref{alg-1a}, whereas the $Follower$ follows Algorithm \ref{alg-1b}. The $Leader$ explores new nodes in first stage and $Follower$ follows the $Leader$ until it either finds the $Leader$ to be stuck or $Leader$ stops reporting either due to a missing edge or it has entered the black hole. Whenever, the $Follower$ finds the edge is missing and $Leader$ is not reporting and $time<3m$ it waits until the missing edge reappears or till $time=3m$, whereas if it finds that the edge exists and $Leader$ is not reporting then it terminates the algorithm, by declaring the node in which $Leader$ has explored is the black hole node. On the other hand, whenever the $Leader$ is also stuck due to a missing edge along its moving direction, then both $Leader$ and $Follower$ waits until $time=3m$. Whenever $time>3m$, both $Leader$ and $Follower$ enter the second stage, i.e., state \textit{Exit}, in which, irrespective of the fact that they are stuck or not, they try to return to their desired node, i.e., the node along $C_j$-th column, while returning each agent irrespective of $Leader$ or $Follower$ is instructed to mark the port of each node along their movement to 1 if not already marked so. Whenever the agents while returning back encounters a missing edge, the lowest Id agent waits and other agents changes direction.  

 The pseudocode of \textsc{Cautious-Move}$(west,j)$ is explained in algorithm \ref{alg-1a} and \ref{alg-1b}. Also, note that the algorithm is similar for other directions as well only change happens in the direction. The lemmas and corollaries explained for \textsc{Cautious-Move}$(west,j)$, also holds for other directions as well.

\begin{algorithm2e}\footnotesize
 \caption{\textsc{Cautious-MoveLeader}$(west,j)$}\label{alg-1a}
 In state \textbf{NewNode}:\\
 \textsc{Move}$(west\rightarrow f(west)=0\rightarrow \textbf{Backtrack}|~time>3m:\textbf{Exit};MEwest: \textbf{Wait};\overline{ME}west \wedge read[f(west)=1]:\textbf{Init};\overline{ME}west \wedge read[f(west)=0]:\textbf{Wait}^{01})$.\\
 In state \textbf{Backtrack}: \textsc{Move}$(east \rightarrow f(east)=1 \rightarrow \textbf{Next}|~time>3m:\textbf{Exit};MEeast:f(east)=1\rightarrow\textbf{Wait}^{02})$\\
 In state \textbf{Next}: \textsc{Move}$(west\rightarrow f(west)=1 \rightarrow \textbf{NewNode}|~time>3m:\textbf{Exit};MEwest:f(west)=1\rightarrow\textbf{Wait}^{03})$\\
 In state \textbf{Init}: \textsc{Move}$(west\rightarrow \textbf{NewNode}~|~time>3m:\textbf{Exit}; MEwest:\textbf{Wait})$\\
 In state \textbf{Wait}:\textsc{Move}$(nil|~time>3m:\textbf{Exit};\overline{ME}west \wedge read[f(west)=1]:\textbf{Init};~\overline{ME}west \wedge read[f(west)=0]:\textbf{Wait}^{01}$)\\
 In state \textbf{Wait}$^{01}$:\\
 \Comment{$time1$ is defined as the number of rounds since the agent has encountered an edge along $west$ with $f(west)=0$.}\\
 \textsc{Move}$(nil|~time1>1 \wedge read[f(west)=0]\wedge \overline{ME}west:\textbf{TerminateW};MEwest: \textbf{Wait};time>3m:\textbf{Exit}$)\\
 In state \textbf{Wait}$^{02}$:\textsc{Move}$(nil|~time>3m:\textbf{Exit};\overline{ME}east:\textbf{Backtrack})$\\
 In state \textbf{Wait}$^{03}$:\textsc{Move}$(nil|~time>3m:\textbf{Exit};\overline{ME}west :\textbf{Next})$\\
 In state \textbf{TerminateW}:Terminate, BH is the next node towards $west$.\\
 In state \textbf{Exit}:\\
 Each agent moves until reaches the node in $j$-th column, if a node is marked $\bot$ or $0$ , mark it 1 and continue its movement.\\
 \If{Encounters a Missing Edge}
 {
 If no agent is already stuck, then lowest Id agent gets stuck for the edge to reappear, others change direction.\\
 }
 \end{algorithm2e}

 \begin{algorithm2e}\footnotesize
 \caption{\textsc{Cautious-MoveFollower}$(west,j)$}\label{alg-1b}
 In state \textbf{Follow}:\\
 \textsc{Move}$(west |~time>3m:\textbf{Exit}; read[f(west)=0]\wedge catches[Leader]:\textbf{Wait}^{01};~read[f(west)=0]\wedge \neg catches[Leader]\wedge \overline{ME}west:\textbf{Wait}; MEwest:\textbf{Wait}^{01};\overline{ME}west\wedge catches[Leader]\wedge read[f(west)=0]:\textbf{Wait}^{01})$\\
 In state \textbf{Wait}$^{01}$:\\
 \textsc{Move}$(nil|~time>3m:\textbf{Exit};\overline{ME}west \wedge read[f(west)=1]:\textbf{Follow})$\\
 In state \textbf{Wait}:\\
 \Comment{$time1$ is defined as the number of rounds since the agent has encountered an edge along $west$ with $f(west)=0$.}\\
  \textsc{Move}$(west|~time1>1\wedge read[f(west)=0]\wedge \overline{ME}west:\textbf{TerminateW};MEwest:\textbf{Follow})$\\
  In state \textbf{TerminateW}: Terminate, BH is the next node towards $west$.\\
  In state \textbf{Exit}:\\
 Each agent moves until reaches the node in $j$-th column, if a node is marked $\bot$ or $0$ , mark it 1 and continue its movement.\\
 \If{Encounters a Missing Edge}
 {
 If no agent is already stuck, then lowest Id agent gets stuck for the edge to reappear, others change direction.\\
 }
 \end{algorithm2e}

\begin{lemm}\label{lemma:complexityAlg1a1b}
    If $l$ ($l\geq 2$) agents execute \textsc{Cautious-Move}$(west,j)$ along a safe ring $R_i$, then at least $l-1$ agents reach $v_{i,j}$ within $3lm$ rounds.
\end{lemm}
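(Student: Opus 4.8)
The $3lm$-round budget splits naturally into the $3m$ rounds of the exploration stage and the remaining $3(l-1)m$ rounds spent in state \textit{Exit}, and I would handle the two stages separately, reusing the machinery already built for Algorithm~\ref{alg-1}. For the exploration stage the decisive point is that $R_i$ is safe: no agent is ever consumed, so all $l$ agents stay alive; in particular the $Leader$ (the lowest-Id agent, running Algorithm~\ref{alg-1a}) can only be delayed by a missing edge, never destroyed, so it eventually backtracks from every node it explores and writes $f(\cdot)=1$ there. Consequently no $Follower$ (Algorithm~\ref{alg-1b}) can ever satisfy its \textit{TerminateW} predicate during exploration, so the exploration stage runs for its full $3m$ rounds and, by the end of round $3m$, every one of the $l$ agents is in state \textit{Exit}. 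This settles the first $3m$ rounds, and it remains to show that within the next $3(l-1)m$ rounds at least $l-1$ agents gather at $v_{i,j}$.

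For the \textit{Exit} stage, recall that each agent repeatedly steps toward $v_{i,j}$ (marking every port it leaves with $1$), and that whenever the unique missing edge blocks an agent, the globally-lowest-Id agent present there becomes the single \emph{stuck} agent while every other agent reverses its direction. I would exploit two structural facts: (i) $R_i$ is $1$-interval connected, so at each round the adversary deletes at most one edge; and (ii) $v_{i,j}$ is incident to exactly two ring edges, so the adversary can obstruct at most one of the two ``approaches'' to $v_{i,j}$ at any round. The core of the argument is a single-window claim: if at the start of some window of $3m$ consecutive \textit{Exit}-rounds fewer than $l-1$ agents are at $v_{i,j}$, then by the end of that window at least one additional agent has reached $v_{i,j}$. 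The proof of this claim is the iterated analogue of Lemma~\ref{lemma:StateReturnAlg1} (and of the Observation following Algorithm~\ref{alg-1}): since fewer than $l-1$ of the $l$ agents sit at $v_{i,j}$, at least two are elsewhere; pick one such agent $a$. Within $m$ rounds $a$ either reaches $v_{i,j}$ or is blocked by the missing edge. If $a$ becomes the stuck agent, then keeping $a$ blocked costs the adversary its one missing edge, so the path from a second away-agent to $v_{i,j}$ is clear on at least one side and that agent arrives within $m$ further rounds; if instead $a$ reverses, then after at most one more reversal the adversary -- having only one missing edge to place on the two edges incident to $v_{i,j}$ -- must let either $a$ or the current stuck agent through, again within $3m$ rounds in total.

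Granting the single-window claim, a one-line induction on $k$ shows that after $3km$ \textit{Exit}-rounds at least $\min(k,\,l-1)$ agents sit at $v_{i,j}$: once $l-1$ have arrived they remain, and while fewer than $l-1$ are present every window delivers at least one more. Taking $k=l-1$ yields $l-1$ agents at $v_{i,j}$ within $3(l-1)m$ \textit{Exit}-rounds, i.e.\ within $3m+3(l-1)m=3lm$ rounds of the start of \textsc{Cautious-Move}$(west,j)$, which is the claimed bound.

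The main obstacle is the single-window claim: one must carefully track which agent currently holds the ``stuck'' status, deal with the configuration in which both away-from-$v_{i,j}$ agents lie on the same side of the missing edge (so that reversing sends them the long way around $R_i$), and rule out adversarial schedules in which a cluster of agents oscillates along $R_i$ with no net progress toward $v_{i,j}$. Showing that every such schedule nonetheless delivers a fresh agent to $v_{i,j}$ every $3m$ rounds -- and that the constant $3m$, rather than some larger multiple of $m$, already suffices -- is exactly where facts (i) and (ii) must be used in full strength.
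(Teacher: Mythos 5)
Your decomposition is exactly the paper's: $3m$ rounds for the exploration stage (during which, as you note, no agent can terminate or die on a safe ring, so all $l$ agents enter \textit{Exit} at round $3m$), followed by an amortized argument that one further agent reaches $v_{i,j}$ every $3m$ rounds of \textit{Exit}. The paper phrases the second part as an induction on $l$ (hypothesis: $l-2$ of $l$ agents arrive within $3(l-1)m$ rounds; step: the last needed agent arrives within $3m$ more), which is your single-window claim in different clothing. The one substantive difference is that the paper actually closes the step you flag as the main obstacle, and it does so with an observation you are missing: in state \textit{Exit} the rule is ``lowest-Id agent at a missing edge waits, every other agent reverses,'' so two agents that are \emph{not} co-located cannot both be held away from $v_{i,j}$ --- once one of them is designated stuck, the adversary's single missing edge is spoken for, the other keeps moving, and after at most one forced reversal the two approach $v_{i,j}$ from opposite sides, where only one of the two incident edges can be missing (this is Lemma~\ref{lemma:StateReturnAlg1} restated). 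Hence the only configuration in which a $3m$ window could fail to deliver a new arrival is a \emph{co-located} pair blocked together, and that case is dispatched by the explicit three-phase chase of the base case ($a_1$ stuck for up to $m$ rounds, $a_2$ reverses and travels up to $m$ rounds before being re-blocked, at which point $a_1$ is released and arrives within $m$ more rounds, totalling $3m$). This reduction is precisely what eliminates the ``both agents on the same side'' and ``oscillating cluster'' schedules you were worried about, and it is why the constant $3m$ suffices. With that observation inserted, your plan is the paper's proof; note that the paper itself argues the window step only via a worst-case scenario walk-through, so your sketch is otherwise at a comparable level of rigor.
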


\begin{proof}
   We claim the above statement is true with the help of induction.

   \noindent\textit{Base Case}: Let two agents $a_1$ and $a_2$ perform \textsc{Cautious-Move}$(west,j)$ starting from the node $v_{i,j}$ along $R_i$. So, initially $a_1$ becomes the $Leader$ and $a_2$ becomes the $Follower$ and they start moving \textit{cautiously} along $west$. Now, suppose $a_1$ encounters a missing edge, then in this scenario, according to the algorithm both the agent waits until $time=3m$, after which they try to return to $v_{i,j}$. Now, while the edge along $west$ is still missing, they together start moving towards $east$ to reach back to $v_{i,j}$, and this takes at most $m$ rounds. In this situation, the adversary reappears the earlier edge, and disappears one edge along $east$ further blocking both $a_1$ and $a_2$. Now according to the algorithm, $a_1$ waits whereas $a_2$ changes its direction to $west$ and tries to reach $v_{i,j}$ in at most $m$ rounds, in the meantime, suppose the adversary again stops $a_2$ by disappearing an edge along $west$ whereas reappears the earlier edge unblocking $a_1$, so in additional $m$ rounds, either $a_1$ or $a_2$ reaches $v_{i,j}$. Hence in at most $6m$ rounds from the start of the algorithm, at least $a_1$ or $a_2$, reaches the desired node $v_{i,j}$.

   \noindent\textit{Induction Hypothesis}: Let us suppose $l-2$ agents reach $v_{i,j}$ within $3(l-1)m$ rounds.

   \noindent\textit{Inductive Case}: So, in this situation, the only possibility is that the remaining two agents $a_{l-1}$ and $a_{l}$ (say) are together and encounters a missing edge along $east$ (or $west$) while the last of the other $l-2$ agent, reach $v_{i,j}$ within $3(l-1)m$ rounds along $west$ (or $east$). It is because, in state \textit{Exit}, the adversary can make only the lowest Id agents wait with respect to a missing edge, whereas the other agents start moving in the opposite direction. This means that if, these two remaining agents is not together then either of these two agents will have eventually reached $v_{i,j}$ within $3(l-1)m$ rounds. 
   
   So, suppose these two agents while moving towards $west$ (or $east$) for at most $m$ rounds, encounters a missing edge, which makes the agent $a_{l-1}$ wait, whereas $a_l$ changes its direction to $east$ (or $west$) and tries to reach $v_{i,j}$ in at most $m$ rounds, and again encounters another missing edge along $east$ (or $west$), so in another $m$ rounds either of these two agents reach $v_{i,j}$. This implies in at most $3(l-1)m+3m$ rounds $l-1$ agents reach the desired node. Hence, the inductive step also holds.\end{proof}

\begin{blab}\label{corollary:complexityExitAlg1a1b}
If $l$ agents enter the state \textit{Exit}, then at least $l-1$ agents reach $v_{i,j}$ by at most $3(l-1)m$ rounds.
\end{blab}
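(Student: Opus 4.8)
The plan is to read the statement off the proof of Lemma~\ref{lemma:complexityAlg1a1b} by restarting the clock at the moment state \textit{Exit} is entered. On a safe ring $R_i$ no agent is consumed, so all $l$ agents leave the exploration stage together after the fixed $3m$ rounds and enter state \textit{Exit} simultaneously; this is exactly where the hypothesis ``$l$ agents enter \textit{Exit}'' is used, and it guarantees that the count $l$ is not already eroded by the black hole. From that round on, the behaviour prescribed for state \textit{Exit} in Algorithms~\ref{alg-1a} and~\ref{alg-1b} is identical for \textsc{Leader} and \textsc{Follower} alike — walk towards $v_{i,j}$ along $R_i$, re-marking each visited port to $1$, and on a missing edge let only the lowest-Id agent at that node stay put while the others reverse direction. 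So the problem reduces to a pure ``gather at $v_{i,j}$'' task on a $1$-interval connected ring, whose duration I would bound directly.

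The key step is the following claim, proved by induction on the number $l'$ of agents that are in state \textit{Exit} and not yet at $v_{i,j}$: from any such configuration with $l'\geq 2$, within $3m$ further rounds at least one more agent reaches $v_{i,j}$. To see this, delete $v_{i,j}$ from $R_i$; the two edges incident to $v_{i,j}$, say $e_L$ and $e_R$, are the only ways in, and by $1$-interval connectivity the adversary can suppress at most one of them per round. As long as no agent has yet entered $v_{i,j}$, an agent that is not blocked reaches an endpoint of the deleted path within $m$ rounds; and when an agent is blocked, the tie-breaking rule sends the other agents sharing that node back the other way, so after at most $m$ more rounds some agent presses the opposite endpoint while the blocked agent still presses its side. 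At that round both $e_L$ and $e_R$ are being attempted, the adversary can kill only one of them, and the other agent steps into $v_{i,j}$; the at most two ``block-and-reverse'' windows plus the final crossing step are capped at $3m$ by the same bookkeeping as in the proof of Lemma~\ref{lemma:complexityAlg1a1b}. Iterating the claim exactly $l-1$ times (the last application still has $l'=2$) yields $l-1$ agents at $v_{i,j}$ within $3(l-1)m$ rounds of entering \textit{Exit}, which is the corollary; adding the $3m$ exploration rounds recovers the $3lm$ bound of Lemma~\ref{lemma:complexityAlg1a1b}.

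The main obstacle is making the ``$3m$ per newly-gathered agent'' accounting tight against an adaptive adversary: one must check that a reversing agent cannot be re-blocked indefinitely before it either reaches $v_{i,j}$ or forces the simultaneous double-press, and that the lowest-Id-waits rule never leaves a configuration in which every surviving agent is simultaneously stuck on the same side of $v_{i,j}$. These are precisely the worst-case adversarial schedules already dissected case-by-case in the proof of Lemma~\ref{lemma:complexityAlg1a1b}, so once they are in hand the induction here is routine; the only genuinely new remark is the reindexing of time by $-3m$ so that the conclusion is phrased relative to entering \textit{Exit}.
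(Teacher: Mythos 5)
Your proposal is correct and follows essentially the same route as the paper: the corollary is read off the proof of Lemma~\ref{lemma:complexityAlg1a1b} by discarding the $3m$ exploration rounds and charging at most $3m$ rounds of the \textit{Exit} phase per newly arrived agent (one block, one reversal, one final crossing), which is exactly the paper's inductive accounting. Your reformulation as ``one arrival per $3m$-window, iterated $l-1$ times'' is a cosmetic repackaging of the paper's induction on $l$, not a different argument.
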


% \textcolor{red}{Required where we have used this strategy}
%      \begin{lemm}\label{lemma:CorraAlg3}
%          If there exists a black hole along the ring $R_i$, then 4 agents executing algorithm \ref{alg-3}, correctly detects the black hole in at most $3m$ rounds.
%      \end{lemm}
%      \begin{proof}
%      Suppose $a_1$, $a_2$, $a_3$ and $a_4$ agents perform algorithm \ref{alg-3}, among which $a_1$ is the $Leader$ of the first group whereas $a_2$ is the $Leader$ of the second group. Observe, $Leader$ always explores a new node whereas $Follower$ follows, after it finds the port safe, i.e., $f(dir)=1$. So, suppose without loss of generality $a_1$ traveling along $west$ be the first agent to enter black hole, whereas the adversary removes the adjacent edge. So, in this situation the $Follower$ (i.e. $a_2$) finds $f(west)=0$ but also $MEwest$, so currently it has no idea about $a_1$ (i.e., whether it is just stuck due to the missing edge or it has entered the black hole). So, the current task of $a_2$ according to algorithm \ref{alg-1b} is to wait until the missing edge reappears. On the other hand, since $R_i$ is 1-interval connected, so in every 3 rounds $a_3$ and $a_4$ explores a new node along $east$, and eventually $a_3$ being the $Leader$ in this group enters the black hole from $east$ direction. So, after waiting 1 round, when $a_4$ finds that $a_3$ has not returned yet, whereas the edge along $east$ exists, it correctly concludes terminates the algorithm by concluding that the black hole is the next node along $east$.
%      \end{proof}
\noindent\underline{\textsc{CautiousDoubleOscillation}\cite{BHSDynRingLuna}} We have used this BHS ring exploration algorithm as a sub-routine in our BHS Torus exploration algorithm. The only difference is that both \textsc{Avanguard} and \textsc{Retroguard} while exploring a new node marks the corresponding ports safe or unsafe in whiteboard, so this means if \textsc{Retroguard} enters the black hole while exploring a sector of $\sqrt{m}$ nodes along $R_i$ (or $\sqrt{n}$ nodes along $C_j$) then using the whiteboard instead of a pebble, \textsc{Leader} can detect its location. 

As stated in \cite{BHSDynRingLuna}, three agents executing \textsc{CautiousDoubleOscillation} requires $O(m^{1.5})$ rounds to detect the black hole along a 1-interval connected ring of size $m$.
     
\section{Co-located Agents}\label{section-colocated}
In this section, we propose two BHS algorithms on $n \times m$ dynamic torus. First algorithm requires $n+3$ agents and works in $O(nm^{1.5})$ rounds, whereas the second algorithm requires $n+4$ agents and works in $O(nm)$ rounds.

\subsection{BHS with $n+3$ agents}
The set of $n+3$ agents, $A=\{a_1,a_2,\ldots,a_{n+3}\}$ are initially located at a safe node $v_{i,j}$, also termed as $home$. Initially from $home$, $a_1$ and $a_2$ executes the algorithm \textsc{Cautious-Move}$(north,i)$, whereas $a_3$ and $a_4$ executes \textsc{Cautious-Move}$(south,i)$. Once $12n$ rounds have passed, at least $3$ out of these 4  agents return to $v_{i,j}$ (refer corollary \ref{corollary:complexityExitAlg1a1b}). Whenever 3 among 4 agents return back to $home$, the first three lowest Id agents become \textsc{Leader}, \textsc{Avanguard} and \textsc{Retroguard} and and they are instructed to perform \textsc{CautiousDoubleOscillation} along $R_{i}$. Now, as per Lemmas 14 and 15 in \cite{BHSDynRingLuna} it takes at most $T=19m^{1.5}+7(m+\sqrt{m})$ rounds to locate the black hole along $R_i$. So, after $T$ rounds since the start of \textsc{CautiousDoubleOscillation}, if the algorithm hasn't terminated (or the black hole is not detected) then these agents are instructed to return to $v_{i,j}$ which is the desired node, irrespective of the fact, whether they are stuck or not. While returning, if an agent encounters a missing edge, then the lowest Id agent waits, whereas the other agent changes direction. Using this strategy, in at most $6m$ rounds, at least 2 among 3 agents return to $v_{i,j}$ (as this is similar to state \textit{Exit} in algorithm \textsc{Cautious-Move}(), hence by corollary \ref{corollary:complexityExitAlg1a1b} this bound holds). After which they all together start executing \textsc{Cautious-WaitMoveSouth}$(i-1,4)$, which enables at least 3 among $n+3$ agents reach the node $v_{i-1,j}$ and continue the same process. This process iterates for each $R_i$, where $0\le i \le n-1$. The pseudocode is explained in Algorithm \ref{alg-4}.

\begin{algorithm2e}[!ht]\footnotesize
\caption{BHS with $n+3$ agents}\label{alg-4}
Instruct $a_1$ and $a_2$ to perform \textsc{Cautious-Move}($north,i$).\\
Instruct $a_3$ and $a_4$ to perform \textsc{Cautious-Move}($south,i$).\\
\If{$time>12n$}
{
\For{$t=i;~t\leq i+1; t--$}
{
$3$ lowest Id agents at $v_{t,j}$ perform \textsc{CautiousDoubleOscillation}.\\
\Comment{$time1$ is defined as the number of rounds since the last call of \textsc{CautiousDoubleOscillation}}.\\
\If{$time1>T$}
{
Exit \textsc{CautiousDoubleOscillation}.\\
Return to $v_{i,j}$, whether stuck or not.\\
\If{Encounters a missing edge}
{
Lowest Id agent waits for the edge to reappear, whereas the remaining agents change direction.\\
}
}
\If{$time1>T+6m$}
{
Each agent along $C_j$ is instructed to perform \textsc{Cautious-WaitMoveSouth}$(t-1,4)$.\\
}
}
}
\end{algorithm2e}

\begin{lemm}
    Algorithm \ref{alg-4}, ensures that there always exist 3 agents to perform \textsc{CautiousDoubleOscillation} along $R_i$, where $0\le i \le n-1$.
\end{lemm}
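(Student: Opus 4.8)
The plan is to track, across the iterations of the outer loop of Algorithm~\ref{alg-4}, an invariant asserting that at the start of each call to \textsc{CautiousDoubleOscillation} at least three agents are co-located at the relevant node $v_{t,j}$, and to verify that this invariant is preserved through one full iteration (one row ring explored, then a column move). I would argue by induction on the row index $i$ (equivalently, on the number of completed iterations).

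\emph{Base case.} Initially all $n+3$ agents are at $home = v_{i,j}$, and the pairs $\{a_1,a_2\}$ and $\{a_3,a_4\}$ run \textsc{Cautious-Move} along $north$ and $south$ respectively. Since the home node is safe, none of these four agents enters the black hole during this phase (the black hole lies in some row ring, and at worst one of the column directions hits it, but \textsc{Cautious-Move} with two agents guarantees the surviving agent returns by Corollary~\ref{corollary:complexityExitAlg1a1b}; in fact the column rings $C_i$ through $home$ contain the black hole only if the black hole is in that column, and even then at most one agent of the affected pair is consumed). So after the $12n = 3\cdot 4 \cdot n$ rounds allotted (invoking Corollary~\ref{corollary:complexityExitAlg1a1b} with $l=4$, $m$ replaced by $n$: $3(l-1)n = 9n \le 12n$), at least $3$ of the $4$ agents are back at $v_{i,j}$, while the remaining $n-1$ agents have been idle at $home$ throughout. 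Hence at least $3+ (n-1) \ge 3$ agents are present, and three of them take the roles \textsc{Leader}, \textsc{Avanguard}, \textsc{Retroguard}.

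\emph{Inductive step.} Assume at the start of the iteration for $R_t$ there are at least three agents at $v_{t,j}$ (plus possibly others). They run \textsc{CautiousDoubleOscillation} for $T = 19m^{1.5}+7(m+\sqrt m)$ rounds. If the black hole lies on $R_t$, the algorithm terminates correctly (by Lemmas~14--15 of \cite{BHSDynRingLuna}) and there is nothing more to prove. Otherwise $R_t$ is safe, so no agent is consumed; after $T$ rounds all three are ordered back to $v_{t,j}$, and the ``return-with-lowest-Id-waits, others-reverse'' rule is exactly state \textit{Exit} of \textsc{Cautious-Move}, so Corollary~\ref{corollary:complexityExitAlg1a1b} (with $l=3$) gives at least $2$ of the $3$ back at $v_{t,j}$ within $6m$ rounds. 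Now \emph{all} agents currently on $C_j$ run \textsc{Cautious-WaitMoveSouth}$(t-1,4)$ to move to $v_{t-1,j}$. The key point is that the total pool of agents is $n+3$ and at most two have ever been consumed (Corollary~\ref{corollary:2agentsBHAlg1} bounds the black hole's intake along the single ring where it resides, and only that one ring is ever dangerous), and at most one agent per row ring has been left behind stuck; across the at most $n$ rows traversed so far this accounts for at most $n$ stranded agents, leaving at least $(n+3) - 2 - ?$ — here I need the sharper bookkeeping that the stranded agents are distributed so that at least $4$ viable agents always reach the column $C_j$, whence Theorem~\ref{theorem:corrAlg1} (applied with $l=4$ to the safe ring $R_{t-1}$, or rather the column-move version of it) delivers at least $l-1 = 3$ agents to $v_{t-1,j}$ within $4(l-1)m+3m$ rounds. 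This re-establishes the invariant for $R_{t-1}$.

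\emph{Main obstacle.} The delicate part is the global agent-accounting: showing that, at every iteration, enough live and unstuck agents reassemble on the column $C_j$ so that \textsc{Cautious-WaitMoveSouth}$(\cdot,4)$ can actually run with its claimed parameter $l=4$ and therefore deposit three agents at the next row. One must argue that agents left ``stuck'' in a previous row ring do not permanently leak out of the usable pool — i.e., that the adversary cannot strand so many agents that fewer than four ever gather in $C_j$ — and simultaneously that at most two agents total are lost to the black hole (only one row is dangerous, and on that row \textsc{CautiousDoubleOscillation} terminates before the column-move ever happens, so the dangerous row never contributes a "stuck" agent to a later iteration). Formalizing this requires carefully distinguishing the one iteration in which the black hole is found (and the algorithm halts) from all the others (in which the ring is safe, no agent dies, and the only losses are temporary ``stuck'' agents that are re-collected by the exploration/return subroutines), and then invoking the $l-1$ guarantees of Theorem~\ref{theorem:corrAlg1} and Corollary~\ref{corollary:complexityExitAlg1a1b} with $l \ge 4$ throughout. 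I would present this as a short invariant-maintenance argument, deferring the precise round-count bookkeeping (which is routine given the cited lemmas) and emphasizing the survivor-count inequality $n+3 - 2 \ge n+1 \ge 4$ that keeps the process alive.
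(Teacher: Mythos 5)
Your overall strategy --- an invariant maintained by induction over the row iterations, supported by the per-ring ``at most one agent fails to return'' guarantees of Theorem~\ref{theorem:corrAlg1} and Corollary~\ref{corollary:complexityExitAlg1a1b} --- is the same skeleton the paper uses. But the decisive step is exactly the one you leave as a placeholder: you write ``$(n+3)-2-?$'' and defer the ``sharper bookkeeping,'' and the inequality you end on, $n+3-2\ge n+1\ge 4$, does not close the argument, because it ignores the up-to-$n-1$ agents stranded one-per-row across the previously explored rings. If you subtract both the $2$ consumed agents and the $n-1$ stranded ones you get only $2$, which is not enough for the invariant; so the count as you have set it up genuinely fails rather than being merely unfinished.

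The paper resolves this with a contradiction argument concentrated on the worst case, namely the last ring $R_{i+1}$ to be explored (the only ring that can contain the black hole without the algorithm having already terminated). Assume only $2$ agents reach $v_{i+1,j}$; then the other $n+1$ agents lie in the $n-1$ remaining rows; since at most one agent per row fails to return to $C_j$, at least $2$ of them are on $C_j$; and Theorem~\ref{theorem:corrAlg1} (at most one agent stuck during \textsc{Cautious-WaitMoveSouth}$(i+1,4)$) forces at least one of those two to reach $v_{i+1,j}$, giving $3$ arrivals and a contradiction. Note that the ``$-2$ for the black hole'' term never appears in this count: before the dangerous ring is explored no agent has been consumed (all other rings and $C_j$ are safe up to termination), and once the dangerous ring is explored the algorithm terminates, so consumption and the need for the invariant never coexist. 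You gesture at exactly this point in your ``main obstacle'' paragraph, but since that observation plus the $n-1$ stranded-agent bound \emph{is} the proof, leaving it unformalized means the essential content of the lemma is missing from your write-up.
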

\begin{proof}
   Suppose the black hole is somewhere along the ring $R_{i+1}$, and since $home=v_{i,j}$, so according to Algorithm \ref{alg-4}, $R_{i+1}$ is the last ring to be explored for black hole. Moreover, since $\mathcal{G}$ has $n$ many rows and there are $n+3$ agents. We prove the above statement by contradiction, suppose two agents reach $v_{i+1,j}$ for exploration of $R_{i+1}$ instead of three at the end of \textsc{Cautious-WaitMoveSouth}$(i+1,4)$ at the $(n-1)$-th iteration of Algorithm \ref{alg-4}. This implies that leaving these two agents which has reached $v_{i+1,j}$, there are $n+1$ agents along $n-1$ rows, i.e., either there exists one row with three agents or there exists two rows with two agents each. In either case, after the execution of \textsc{CautiousDoubleOscillation}, the adversary can restrict at most one among 3 agents from returning to a node along $C_j$. So, in each execution along $R_{t}$, $\forall t\in \{0,\ldots,n-1\}\backslash \{i+1\}$ at most one agent is unable to return to $C_j$. This means, $n-1$ agents are unable to return to $C_j$, so the remaining two agents are somewhere along $C_j$. Again, by Theorem \ref{theorem:corrAlg1} at most one agent can be stuck while executing \textsc{Cautious-WaitMoveSouth}$(i+1,4)$, that means at least one among these two agents must have reached $v_{i+1,j}$ as each agent along $C_j$ performs this execution while in the $(n-1)$-th iteration, which leads to a contradiction to the fact that 2 agents reach $R_{i+1}$ at the end of $(n-1)$-th iteration.\end{proof}

   \begin{lemm}\label{lemma:R_iExploreAlg4}
       It takes at most $T+6m+15n$ rounds to perform one iteration of the for loop in Algorithm \ref{alg-4}.
   \end{lemm}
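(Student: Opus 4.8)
The plan is to observe that one iteration of the for loop of Algorithm~\ref{alg-4} amounts to processing a single row ring $R_t$, and that this processing is the concatenation of three consecutive sub-phases, each governed by a sub-routine whose running time is already bounded earlier in the excerpt. Summing the three bounds will give exactly $T+6m+15n$.

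First I would bound the \emph{oscillation phase}: the three lowest-Id agents currently at $v_{t,j}$ run \textsc{CautiousDoubleOscillation} along $R_t$. As recalled just after the description of that sub-routine (Lemmas~14 and~15 of~\cite{BHSDynRingLuna}), if the black hole lies on $R_t$ then it is located — so the whole algorithm terminates — within $T=19m^{1.5}+7(m+\sqrt{m})$ rounds; otherwise the guard $time1>T$ in Algorithm~\ref{alg-4} forces the agents out of the procedure after exactly $T$ rounds. Hence this phase lasts at most $T$ rounds. Next I would bound the \emph{return phase}: assuming the black hole is not on $R_t$, the at most three surviving agents walk back to the node $v_{t,j}$ on column $C_j$, with the lowest-Id agent waiting on a missing edge while the others reverse direction — exactly the behaviour of state \textit{Exit} of \textsc{Cautious-Move}. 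By Corollary~\ref{corollary:complexityExitAlg1a1b} applied with $l\le 3$ agents entering state \textit{Exit}, at least $l-1$ of them reach $v_{t,j}$ within $3(l-1)m\le 6m$ rounds, so this phase contributes at most $6m$ rounds regardless of whether an agent was consumed during oscillation.

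Finally I would bound the \emph{row-change phase}, in which every agent currently on $C_j$ executes \textsc{Cautious-WaitMoveSouth}$(t-1,4)$ along the column rings, each of which is $1$-interval connected and of size $n$. Applying the column analogue of Theorem~\ref{theorem:corrAlg1} (valid since, as noted, the sub-routine lemmas hold for every direction) with parameter $l=4$, every such execution halts within $4(l-1)n+3n=15n$ rounds. Adding the three bounds $T$, $6m$, $15n$ yields the claimed $T+6m+15n$; in the remaining case, where \textsc{CautiousDoubleOscillation} already locates the black hole, the iteration stops within $T\le T+6m+15n$ rounds and the bound holds trivially.

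The step I expect to be the real obstacle is the justification that these three time bounds may be invoked with the fixed parameters $l=3$ and $l=4$, rather than with the actual — a priori larger and varying — number of live, unstuck agents. This is legitimate because the running times of the \textit{Exit} stage and of \textsc{Cautious-WaitMoveSouth} are determined solely by the parameter $l$ handed to them (through $g(l)=4(l-1)\cdot(\text{ring size})$ followed by the $3\cdot(\text{ring size})$ rounds of state \textit{Return}), so the time bounds hold uniformly in the number of agents actually present. One must also check that the $6m$ bound for the return phase does not require the oscillation agents to be contiguous along $R_t$: only the facts that at most three of them lie on a $1$-interval connected ring of size $m$ and that they run the \textit{Exit} logic are used, which is precisely the hypothesis of Corollary~\ref{corollary:complexityExitAlg1a1b}.
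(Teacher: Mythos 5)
Your proposal is correct and follows essentially the same route as the paper's own proof: the same three-phase decomposition into $T$ rounds for \textsc{CautiousDoubleOscillation}, at most $6m$ rounds for the return to $v_{t,j}$ justified via Corollary~\ref{corollary:complexityExitAlg1a1b}, and at most $15n=g(4)+3n$ rounds for \textsc{Cautious-WaitMoveSouth}$(t-1,4)$ justified via Theorem~\ref{theorem:corrAlg1}. The extra care you take about why the bounds depend only on the parameter $l$ is a welcome clarification but does not change the argument.
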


\begin{proof}
    Observe that \textsc{CautiousDoubleOscillation} takes at most $T= 19m^{1.5}+7(m+\sqrt{m})$ rounds on $R_t$ (for any $t\in \{0,1,\ldots,n-1\}$) to detect the black hole (by Lemmas 14 and 15 of algorithm \cite{BHSDynRingLuna}), and if there is no black hole along $R_t$ then within this many rounds the ring is explored. So, whenever $T$ rounds has elapsed since the start of \textsc{CautiousDoubleOscillation} on $R_t$, Algorithm \ref{alg-4} instructs these 3 agents to return to $v_{t,j}$ irrespective of the fact that they are stuck or not, so in at most $6m$ rounds at least 2 agents return to $v_{t,j}$. Lastly, all the agents not stuck along $C_j$ are instructed to perform \textsc{Cautious-MoveSouth}$(t-1,4)$, and by Theorem \ref{theorem:corrAlg1}, it takes at most $15n$ ($g(4)+3n=12n+3n$) rounds for at least 3 agents to reach $v_{t-1,j}$. Hence, in at most $T+6m+15n$ rounds an iteration of for loop of algorithm \ref{alg-4} is executed.\end{proof}
 \begin{lemm}\label{lemma:CorrAlg4}
     Our algorithm correctly locates the black hole.
 \end{lemm}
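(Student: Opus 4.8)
The plan is to prove correctness in two parts: first that no agent ever halts with a \emph{wrong} node, and then that some agent does halt having identified the black hole (and survives). For the first part, observe that every terminating action in Algorithm~\ref{alg-4} happens either inside a \textit{TerminateW}/\textit{TerminateS} state of a \textsc{Cautious-Move} or \textsc{Cautious-WaitMove$\cdot$} call, or inside the termination of \textsc{CautiousDoubleOscillation}. The first kind is handled by the already-proven lemma stating that an agent of Algorithm~\ref{alg-1} (and hence of its restriction \textsc{Cautious-Move}) that terminates while moving along a direction correctly locates the black hole; the second is the correctness of \textsc{CautiousDoubleOscillation} from \cite{BHSDynRingLuna}, together with the whiteboard modification that lets \textsc{Leader} recover the exact position even when \textsc{Retroguard} is swallowed inside a sector. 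Moreover, by the cautious-walk principle and the "at most two agents enter the black hole" corollaries (in particular Corollaries~\ref{corollary:2agentsBHAlg1} and~\ref{corollary:complexityExitAlg1a1b}), every single invocation of a subroutine loses at most two agents, so a survivor always remains.

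For liveness, fix the black hole at $v_{a,b}$ and split on whether it lies on the home column $C_j$. If $b\neq j$, then $C_j$ is entirely safe, so every transit step along $C_j$ is harmless; by the lemma guaranteeing that $3$ agents are always available for \textsc{CautiousDoubleOscillation}, and by Theorem~\ref{theorem:corrAlg1} and Corollary~\ref{corollary:complexityExitAlg1a1b} (which bound the return-to-column and \textsc{Cautious-WaitMoveSouth} steps), each iteration of the for-loop delivers at least three agents onto the intended column node within the bound of Lemma~\ref{lemma:R_iExploreAlg4}. Hence the loop genuinely marches through the row rings in the order $R_i,R_{i-1},\dots,R_{i+1}$ without stalling, and — unless it has already halted — it eventually runs \textsc{CautiousDoubleOscillation} on the unique row ring $R_a$ containing $v_{a,b}$; since that is the only unsafe node of $R_a$, the surviving agent there halts having located it.

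If $b=j$, then $v_{a,j}$ lies on both $C_j$ and $R_a$, and I would give two independent sub-arguments, either of which suffices. First, during the initial $12n$ rounds the pairs $\{a_1,a_2\}$ and $\{a_3,a_4\}$ run cautious walk along $C_j$ \emph{from the two sides} toward $v_{a,j}$; since $C_j$ is $1$-interval connected, the adversary can keep missing at most one of the two edges incident to $v_{a,j}$ at any round, and it cannot perpetually stall both leaders (stalling one forces an edge of $C_j$ to stay absent, which frees the opposite leader). Consequently at least one of the followers ends up parked at a node adjacent to $v_{a,j}$ seeing its forward edge marked unsafe and present for two consecutive rounds with nobody returning to clear it, and certifies $v_{a,j}$. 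Second, and alternatively: because no other row ring contains the black hole, the only way the loop can ever call \textsc{Cautious-WaitMoveSouth}$(a,4)$ — the transit step whose target node is exactly $v_{a,j}=bh$ — is after processing $R_{a+1}$, and a simple accounting (at most one agent parked off $C_j$ per already-processed ring, and at most $n-1$ such rings) shows at least $n+3-(n-1)=4$ agents are then available on $C_j$, so Corollary~\ref{corollary:Possibile4agentAlg1} forces that very call to locate $v_{a,j}$. Either way the black hole is found.

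Combining the cases, in every execution some agent halts knowing $v_{a,b}$, and by the paper's global-termination convention the remaining agents stop immediately; together with the "no false positive" argument this gives correctness of Algorithm~\ref{alg-4}. The step I expect to be the main obstacle is the $b=j$ analysis: one must rule out the adversarial strategy of keeping a single column edge missing forever so as to deny certification of a black hole sitting on the shared transit column, which requires simultaneously exploiting the two-sided convergence of the four initial agents \emph{and} the agent-budget bound on how few agents can remain on $C_j$ by the time the loop's transit step points at the black hole; the $b\neq j$ case, by contrast, is essentially bookkeeping on top of the per-ring lemmas already established.
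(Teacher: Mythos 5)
Your proof is correct and follows essentially the same two-case decomposition as the paper: if the black hole lies on the home column $C_j$ it is caught by the initial two-sided \textsc{Cautious-Move} along $C_j$ (the adversary can hide only one of the two column edges incident to the black hole, so one follower certifies it), and otherwise it is caught by \textsc{CautiousDoubleOscillation} on its row ring. Your added soundness discussion and the second, ``alternative'' sub-argument for the $b=j$ case go beyond the paper's proof (and that alternative's count $n+3-(n-1)=4$ overlooks the up to two agents already consumed on $C_j$ during the initial column sweep), but your primary argument matches the paper's and suffices on its own.
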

 \begin{proof}
     Suppose, the black hole is along $C_j$ where $home=v_{i,j}$, then in step 1 and step 2 of Algorithm \ref{alg-4}, the black hole is detected. It is because, $a_1$ and $a_3$ being the $Leader$ along $north$ and $south$, respectively, can simultaneously enter the black hole in the worst case, whereas $a_2$ and $a_4$, being the $Follower$ of them waits for at most $3n$ rounds in their $Leader$'s adjacent nodes. Since $C_j$ is a column ring of size $n$, hence there exists a round, in which $a_2$ is waiting for $a_1$, whereas $a_4$ is waiting for $a_3$, and the adversary can only disappear only one edge among them. So, either of these two agents will inevitably locate the black hole position and terminate the algorithm.

Otherwise, if the black hole is located along a ring $R_t$ such that the node is not along $C_j$, so when the algorithm instructs the 3 agents at $v_{t,j}$ to perform \textsc{CautiousDoubleOscillation} at some iteration, then these three agents ultimately detects the black hole, and it follows from the correctnes of \textsc{CautiousDoubleOscillation} in \cite{BHSDynRingLuna}.\end{proof}

\begin{theorem}\label{theorem:ComplexityAlg4}
    A group of $n+3$ agents along a dynamic torus of size $n\times m$ correctly locates the black hole in $O(nm^{1.5})$ rounds while executing Algorithm \ref{alg-4}.
\end{theorem}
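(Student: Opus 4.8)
The plan is to combine the per-iteration time bound from Lemma~\ref{lemma:R_iExploreAlg4} with the correctness guarantee of Lemma~\ref{lemma:CorrAlg4}, together with the bound on the initial phase (steps~1--2 of Algorithm~\ref{alg-4}), and then sum over all $n$ row rings. First I would account for the preprocessing: $a_1,a_2$ run \textsc{Cautious-Move}$(north,i)$ and $a_3,a_4$ run \textsc{Cautious-Move}$(south,i)$, which by Corollary~\ref{corollary:complexityExitAlg1a1b} (applied to column rings of size $n$, so $l=2$ gives $3\cdot 1\cdot n = 3n$ rounds of exploration plus the exit, bounded by $12n$ as used in the \texttt{if} guard) terminates within $O(n)$ rounds, after which at least $3$ of the $4$ agents are back at $home=v_{i,j}$.

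Next I would invoke Lemma~\ref{lemma:R_iExploreAlg4}: each iteration of the \texttt{for} loop costs at most $T + 6m + 15n$ rounds, where $T = 19m^{1.5} + 7(m+\sqrt m) = O(m^{1.5})$. Since $3 \le n \le m$, we have $15n = O(m) = O(m^{1.5})$ and $6m = O(m^{1.5})$, so one iteration costs $O(m^{1.5})$ rounds. The loop runs once for each row ring $R_t$, $0 \le t \le n-1$ (the inner \texttt{for} over $t=i;\,t\le i+1;\,t\!-\!-$ just handles the two-row bookkeeping within the outer iteration and contributes only a constant factor), giving $n \cdot O(m^{1.5}) = O(nm^{1.5})$ rounds in total. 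Adding the $O(n)$ preprocessing, which is dominated, yields the claimed $O(nm^{1.5})$ bound.

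Finally I would observe that correctness is already established: by Lemma~\ref{lemma:CorrAlg4} the algorithm always reports the correct black-hole location, and the companion lemma guaranteeing that three agents are always available to run \textsc{CautiousDoubleOscillation} on every $R_t$ ensures the procedure can actually be carried out at each iteration, so the time bound above is the time to termination and not merely to some intermediate stage.

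The main obstacle is not the arithmetic but making the aggregation airtight: one must be sure that the $O(m^{1.5})$ per-iteration cost is a genuine worst-case bound that holds \emph{simultaneously} with the agent-availability invariant — i.e., that the adversary cannot force a single row ring to be revisited or cause the agent count on $C_j$ to drop below what Theorem~\ref{theorem:corrAlg1} needs for the subsequent \textsc{Cautious-WaitMoveSouth}$(t-1,4)$ call. Once that invariant is in hand (from the preceding lemma), the summation $\sum_{t=0}^{n-1}(T+6m+15n) = O(nm^{1.5})$ is immediate, and I would present the proof essentially as this one-line sum preceded by the invocation of Lemmas~\ref{lemma:R_iExploreAlg4} and~\ref{lemma:CorrAlg4}.
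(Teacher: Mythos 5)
Your proposal is correct and follows essentially the same route as the paper's proof: invoke Lemma~\ref{lemma:R_iExploreAlg4} to bound each iteration by $T+6m+15n=O(m^{1.5})$ and multiply by the $n$ row rings, with the initial $O(n)$-round column exploration dominated. Your additional explicit accounting of the preprocessing phase and the invocation of Lemma~\ref{lemma:CorrAlg4} for correctness are harmless refinements of what the paper leaves implicit (though note the \texttt{for} loop in Algorithm~\ref{alg-4} is itself the single loop over all $n$ rows, not an inner loop of some outer iteration).
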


\begin{proof}
    By lemma \ref{lemma:R_iExploreAlg4}, it takes at most $T+6m+15n\approx O(m^{1.5})$ rounds (since $3\leq n \leq m$) to explore a ring $R_i$ and reach the next node $v_{i-1,j}$. Now, since the torus is of size $n\times m$, hence there are $n$ such rings in $\mathcal{G}$ and in order to explore and locate the black hole, in the worst case, each of the $n$ rings need to be explored, so it takes $n\cdot O(m^{1.5})=O(nm^{1.5})$ rounds.\end{proof}

\subsection{BHS with $n+4$ agents}
 In this case the set of $n+4$ agents, $A=\{a_1,a_2,\ldots,a_{n+4}\}$ agents are initially co-located at $home=v_{i,j}$, say. The algorithm in this case is similar as the earlier BHS algorithm with $n+3$ agents, the only difference is that here in order to explore $R_t$, instead of 3, 4 agents are used, where the lowest and second lowest Id agents at $v_{t,j}$ perform \textsc{Cautious-Move}$(west,j)$ and the third lowest and fourth lowest Id agents are instructed to perform \textsc{Cautious-Move}$(east,j)$, instead of \textsc{CautiousDoubleOscillation}. The pseudocode is explained in algorithm \ref{alg-5}.

 \begin{algorithm2e}[!ht]\footnotesize
 \caption{BHS with $n+4$ agents}\label{alg-5}
Instruct $a_1$ and $a_2$ to perform \textsc{Cautious-Move}($north,i$).\\
Instruct $a_3$ and $a_4$ to perform \textsc{Cautious-Move}($south,i$).\\

\If{$time>12n$}
{
 \For{$t=i;~t\leq i-1;~t--$}
 {
 Instruct the lowest and second lowest Id agents at $v_{t,j}$ to perform \textsc{Cautious-Move}($west,j$).\\
 Instruct the third lowest and fourth lowest Id agents at $v_{t,j}$ to perform \textsc{Cautious-Move}($east,j$).\\
\Comment{$time1$ is defined as the time since the last call of \textsc{Cautious-Move}}.\\
 \If{$time1>12m$}
 {Perform \textsc{Cautious-WaitMoveSouth}$(t-1,5)$.}
 }
 }
 \end{algorithm2e}

\begin{lemm}\label{lemma:3among4Alg5}
    At least 3 among 4 agents executing \textsc{Cautious-Move}$(west,j)$ and \textsc{Cautious-Move}$(east,j)$ along $R_t$ at some $i$-th iteration of Algorithm \ref{alg-5} reach $v_{t,j}$ within $12m$ rounds since the start of \textsc{Cautious-Move}() in the current iteration, where $0\leq t \leq n-1$.
\end{lemm}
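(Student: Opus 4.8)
The plan is to build on the structural facts already established for the two basic primitives \textsc{Cautious-Move}$(west,j)$ and \textsc{Cautious-Move}$(east,j)$, namely Corollary \ref{corollary:complexityExitAlg1a1b} and Lemma \ref{lemma:complexityAlg1a1b}, applied separately to each of the two pairs of agents that explore the row ring $R_t$ in opposite directions. The key observation is that in Algorithm \ref{alg-5} the agents exploring $R_t$ all start from the common node $v_{t,j}$, split into a "west pair" $\{a,b\}$ running \textsc{Cautious-MoveLeader}/\textsc{Follower}$(west,j)$ and an "east pair" $\{c,d\}$ running the symmetric version towards $east$, and after the $3m$-round exploration phase all four enter the \textit{Exit} stage, which drives each surviving agent back to the column $C_j$ (i.e. back to $v_{t,j}$). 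So the claim is really a statement about how many of the four agents can be permanently lost.

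First I would argue that \emph{the black hole can swallow at most one} of the four agents during the combined exploration. Each pair performs a single cautious walk, and cautious walk guarantees (as already noted for Algorithm \ref{alg-1} and its \textsc{Cautious-Move} specialization) that at most one agent of a pair enters the black hole; but if the black hole lies on $R_t$ at all, only one of the two directions actually reaches it within the $3m$-round window before \textit{Exit}, so only the leader of that one pair is at risk. Hence at least three of $\{a,b,c,d\}$ are alive and enter state \textit{Exit}. (If the black hole is \emph{not} on $R_t$, then all four survive, which is an even stronger conclusion.) Second, I would invoke Corollary \ref{corollary:complexityExitAlg1a1b}: once a set of agents enters \textit{Exit} on a 1-interval-connected ring, all but at most one of them reach the target node $v_{t,j}$ within $3(\ell-1)m$ rounds. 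With $\ell \le 4$ surviving agents in \textit{Exit} that is at most $9m$ rounds, so at least two of the survivors make it back to $v_{t,j}$. Combining the two bounds — at most one lost to the black hole, at most one stuck during \textit{Exit} — gives at least $4-1-1 = 2$ agents at $v_{t,j}$; to get the claimed \emph{three}, I would use the fact (as in the $n+3$ analysis and Corollary \ref{corollary:2agentsBHAlg1}) that the "lost to black hole" and "stuck in \textit{Exit}" counts cannot both be realized against the same pair, and more carefully that when the black hole is on $R_t$ the two east-agents (the direction that does not reach it) behave exactly like a safe-ring execution and at most one of the other two is ever stuck, so altogether at most one of the four fails to reach $v_{t,j}$, leaving three.

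For the timing: the exploration stage runs for exactly $3m$ rounds (the $time>3m$ guard in Algorithms \ref{alg-1a} and \ref{alg-1b}), and then the \textit{Exit} stage needs at most $3(\ell-1)m \le 9m$ additional rounds by Corollary \ref{corollary:complexityExitAlg1a1b}; summing gives at most $12m$ rounds, matching the $time1>12m$ guard in Algorithm \ref{alg-5} and the statement's bound. I would phrase this as: within $3m$ rounds all survivors are in \textit{Exit}, and within $3m + 9m = 12m$ rounds at least three of them occupy $v_{t,j}$.

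The main obstacle I expect is the bookkeeping that pushes "at least two" up to "at least three." The clean union bound ("one can be eaten, one can be stuck") only yields two, so the argument must exploit the asymmetry: the direction whose walk does not encounter the black hole is, from its own local view, running on a \emph{safe} ring, so Lemma \ref{lemma:complexityAlg1a1b}/Corollary \ref{corollary:complexityExitAlg1a1b} applies to that pair with the full guarantee that \emph{at least one} of the two returns, and in fact the adversary can only freeze the lowest-Id agent in \textit{Exit} at a given missing edge while the others reverse — so across all four agents at most one is genuinely immobilized at any time. Pinning down that this single "stuck" slot and the single "consumed" slot are the only two failure modes, and that they cannot coexist in a way that costs two agents from the \emph{same} side, is the delicate part; everything else is a direct substitution of $\ell \le 4$ into the already-proven ring lemmas.
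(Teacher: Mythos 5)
Your main line is the paper's proof almost verbatim: the exploration stage lasts $3m$ rounds, all four agents then enter state \textit{Exit}, and Corollary \ref{corollary:complexityExitAlg1a1b} with $l=4$ gives that at least $4-1=3$ of them reach $v_{t,j}$ within a further $3(4-1)m=9m$ rounds, i.e.\ $12m$ in total. For what you call the safe-ring case your argument is therefore correct and identical to the paper's.

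The detour you take for the case where the black hole lies on $R_t$ is where problems appear. First, your claim that at most one of the four agents can be swallowed during the combined exploration is not right: the two leaders approach the black hole from opposite sides and both can reach it within the $3m$-round window (the paper's own remark accompanying Fig.~\ref{fig:black_hole_consumption_part1} states that up to two agents enter the black hole, one from $east$ and one from $west$). Second, once you are down to $\ell\le 3$ survivors in \textit{Exit}, Corollary \ref{corollary:complexityExitAlg1a1b} only yields ``at least two return,'' and you explicitly leave the upgrade from two to three as an open ``delicate part,'' so the proof is incomplete as written. The clean way out --- and the reason the paper does not discuss this case at all --- is that if the black hole is on $R_t$ it is detected during the exploration stage: when a leader vanishes, its follower eventually observes $f(dir)=0$ on an existing edge for more than one round and enters \textit{TerminateW}, and since the adversary can keep at most one edge of $R_t$ missing per round it cannot blind both followers simultaneously; by the paper's convention a correct detection terminates the whole algorithm immediately. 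Hence the lemma's conclusion is only ever invoked when all four agents survive into \textit{Exit}, and the corollary with $l=4$ applies directly. Replacing your black-hole case analysis by this one-line termination observation closes the gap.
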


\begin{proof}
    As stated, \textsc{Cautious-Move}$(west,j)$ and \textsc{Cautious-Move}$(east,j)$, takes at most $3m$ rounds to explore the ring $R_t$, after which each of the 4 agents enter the state \textit{Exit}, so by Corollary \ref{corollary:complexityExitAlg1a1b} at least 3 among 4 agents reach $v_{t,j}$ in at most $9m$ rounds ($9m$ rounds from the moment they entered the state \textit{Exit}). So, in total at least 3 among 4 return to $v_{t,j}$ within $12m$ rounds since the start of execution of \textsc{Cautious-Move}() while exploring $R_t$.
\end{proof}

 \begin{lemm}
     Our BHS algorithm with $n+4$ agents, ensures that in every iteration there always exists 4 agents to perform \textsc{Cautious-Move}$(west,j)$ and \textsc{Cautious-Move}$(east,j)$.
 \end{lemm}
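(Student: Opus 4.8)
The plan is to mirror the argument establishing the corresponding invariant for Algorithm \ref{alg-4} (that three agents are always available for \textsc{CautiousDoubleOscillation}), adapting the bookkeeping to the one extra agent and to the fact that here each row is handled by four agents running \textsc{Cautious-Move}$(west,j)$ and \textsc{Cautious-Move}$(east,j)$. I would argue by contradiction and first reduce to the tightest iteration. We may assume the black hole is not on the column $C_j$ through $home = v_{i,j}$: otherwise it is located already during the initial column phase (the first two steps of Algorithm \ref{alg-5}, identical to those of Algorithm \ref{alg-4}; cf.\ Lemma \ref{lemma:CorrAlg4}), so no iteration of the for loop is ever reached and the statement holds vacuously. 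Hence $C_j$, and in particular $v_{i+1,j}$, are safe, and since $home = v_{i,j}$ the ring $R_{i+1}$ is the last one processed by the for loop. It therefore suffices to show that at least four agents are present at $v_{i+1,j}$ when the final iteration starts, i.e.\ after \textsc{Cautious-WaitMoveSouth}$(i+1,5)$ has finished; the same count, restricted to the rings already processed, handles any earlier iteration with strictly more slack, because the set of \emph{permanently blocked} agents can only grow along the execution.

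Suppose, for contradiction, that at most three agents reach $v_{i+1,j}$, and classify the remaining $n+1$ agents. An agent can be permanently blocked only by a missing edge of a fully explored row, since such a row is never re-entered. By Corollary \ref{corollary:complexityExitAlg1a1b} (equivalently Lemma \ref{lemma:3among4Alg5}), when four agents explore a safe row $R_t$ and enter state \textit{Exit}, at least three of them return to $v_{t,j} \in C_j$, so at most one agent is left permanently blocked on each of the $n-1$ rows $R_i, R_{i-1}, \ldots, R_{i+2}$ that precede $R_{i+1}$, for a total of at most $n-1$. Every other agent --- including any straggler that the column phase left stuck on a $C_j$-edge --- is, by the $12n$- and $12m$-round guards of Algorithm \ref{alg-5} together with the slack built into \textsc{Cautious-WaitMoveSouth}, back on $C_j$ before \textsc{Cautious-WaitMoveSouth}$(i+1,5)$ is invoked, hence participates in that call; by Theorem \ref{theorem:corrAlg1}, such a call with $l$ participants brings at least $l-1$ of them to $v_{i+1,j}$, so at most one participant fails to arrive. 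Summing up, at most $(n-1) + 1 = n$ agents are \emph{not} at $v_{i+1,j}$, so at least $(n+4) - n = 4$ are, contradicting the assumed bound of three. Unfolding this over all iterations yields the lemma, and the single-agent advantage over Algorithm \ref{alg-4} is exactly what raises the guarantee from three to four.

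The step I expect to be the main obstacle is the classification in the second paragraph: making it watertight that every non-permanently-blocked agent is genuinely back on $C_j$ --- and not merely lagging somewhere in the middle of a row --- by the time \textsc{Cautious-WaitMoveSouth}$(i+1,5)$ starts. Concretely, one must verify that the timeouts ($12n$ for the column phase, $12m$ for each \textsc{Cautious-Move} phase, and the $3n$-round reversals inside \textsc{Cautious-WaitMoveSouth}) are generous enough that a straggler leaving a row or the column in state \textit{Exit} always reaches $C_j$ before the next gathering subroutine is launched, so that no agent is over-counted as available. Once that timing is pinned down, what remains is the counting already sketched, plus a single application of Theorem \ref{theorem:corrAlg1} to the last \textsc{Cautious-WaitMoveSouth} and of Corollary \ref{corollary:complexityExitAlg1a1b} to the per-row losses.
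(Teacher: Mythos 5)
Your proposal is correct and follows essentially the same route as the paper's own proof: at most one agent is left stuck on each of the $n-1$ already-explored rows (via Lemma \ref{lemma:3among4Alg5} / Corollary \ref{corollary:complexityExitAlg1a1b}), so at least $5$ agents reach $C_j$, and the final \textsc{Cautious-WaitMoveSouth} loses at most one of them by Theorem \ref{theorem:corrAlg1}, delivering at least $4$ to the last ring. The paper states this as a direct count rather than a contradiction, and does not dwell on the timing concern you flag, but the substance is identical.
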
  

\begin{proof}
      As earlier stated in Lemma \ref{lemma:3among4Alg5} at least 3 among 4 agents return to $v_{t,j}$ within $12m$ rounds, while exploring $R_t$. So, if $R_{i+1}$ is the last ring to be explored in $\mathcal{G}$, then at most $n-1$ agents are stuck along $R_t$, $t\in \{0,1,\ldots,n-1\}\backslash\{i+1\}$, whereas the remaining 5 agents must have successfully reached the nodes along $C_j$ before the $(n-1)$-th execution of \textsc{Cautious-WaitMoveSouth}$(i,5)$. So, just the execution of \textsc{Cautious-WaitMoveSouth}$(i,5)$ at line 9 of algorithm \ref{alg-5}, ensures that at least 4 among 5 agents reach $v_{i+1,j}$ within $19n$ (by Theorem \ref{theorem:corrAlg1}) rounds from the start of this execution. This in turn proves that at each iteration, there exists at least 4 agents at $v_{i,j}$ in order to explore $R_i$.
\end{proof}

\begin{lemm}
    A set of $n+4$ agents executing Algorithm \ref{alg-5}, correctly locates the black hole.
\end{lemm}

\begin{proof}
     We have the following cases: first, the black hole lies along $C_j$, second, it does not lie along $C_j$.
     \begin{itemize}
         \item If the black hole is along $C_j$, then 4 agents $a_1$, $a_2$, $a_3$ and $a_4$ correctly locates the black hole. It is because, $a_1$ and $a_3$ acts a $Leader$ while exploring along $north$ and $south$, respectively, whereas $a_2$ and $a_4$ acts as a $Follower$ of $a_1$ and $a_3$. Since, $C_j$ has $n$ many nodes, so both $a_1$ and $a_3$ if not blocked due to a missing edge, eventually reaches the black hole node from either side in the worst case, whereas their respective followers reach the adjacent node. While the adversary has the ability to restrict one of the $Follower$ from determining the black hole node, the other $Follower$ inevitably locates the black hole node and terminates the algorithm.
         \item If the black hole node is not along $C_j$, suppose it is at the node $v_{i',j'}$ along $R_{i'}$ (where $i'\in \{0,1,\ldots,n-1\}$ and $j'\in\{0,1,\ldots,m-1\}\backslash\{j\}$), in which case at some $t$-th iteration, i.e., while exploring $R_{i'}$, any set of 4 agents executing \textsc{Cautious-Move}$(west,j)$ and \textsc{Cautious-Move}$(east,j)$, can correctly locate the black hole by the earlier argument.
     \end{itemize}
\end{proof}

\begin{theorem}\label{theorem:n+4colocatedRoundComplexity}
    A group of $n+4$ agents executing Algorithm \ref{alg-5} along a dynamic torus of size $n\times m$ correctly locates the black hole in $O(nm)$ rounds.
\end{theorem}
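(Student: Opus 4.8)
The plan is to bound the number of rounds spent on each row ring and then multiply by the number of rings, exactly in the spirit of the proof of Theorem \ref{theorem:ComplexityAlg4}, but exploiting the fact that Algorithm \ref{alg-5} replaces the $O(m^{1.5})$ subroutine \textsc{CautiousDoubleOscillation} with two calls of \textsc{Cautious-Move}, each of which costs only $O(m)$ rounds. First I would account for the initial phase: the two calls \textsc{Cautious-Move}$(north,i)$ and \textsc{Cautious-Move}$(south,i)$ operate on the column ring $C_j$ of size $n$, and by Corollary \ref{corollary:complexityExitAlg1a1b} (applied with ring size $n$, so $3(l-1)n = O(n)$ rounds) they terminate within $12n$ rounds, which is the guard on line 3 of Algorithm \ref{alg-5}.

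Next I would analyse the cost of a single iteration of the \texttt{for} loop on ring $R_t$. By Lemma \ref{lemma:3among4Alg5}, within $12m$ rounds of starting \textsc{Cautious-Move}$(west,j)$ and \textsc{Cautious-Move}$(east,j)$ at least $3$ of the $4$ agents have returned to $v_{t,j}$; this is exactly the condition $time1 > 12m$ on line 8, after which the agents along $C_j$ perform \textsc{Cautious-WaitMoveSouth}$(t-1,5)$. By Theorem \ref{theorem:corrAlg1} (with $l = 5$ and ring size $n$, since we move along a column ring of size $n$), this last subroutine terminates with at least $4$ of the $5$ agents reaching $v_{t-1,j}$ within $4\cdot 4\cdot n + 3n = 19n$ rounds. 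Hence one iteration of the \texttt{for} loop costs at most $12m + 19n$ rounds. Since $3 \le n \le m$, this is $O(m)$.

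Finally, in the worst case the black hole lies on the last row ring to be explored, so all $n$ iterations of the \texttt{for} loop must be executed (correctness of termination is handled by the preceding correctness lemma, which guarantees that the black hole is located either in the column-ring phase or in one of these iterations). Adding the $12n$ rounds of the initial column phase, the total is at most $12n + n\,(12m + 19n) = O(nm)$, again using $n \le m$. Thus Algorithm \ref{alg-5} locates the black hole in $O(nm)$ rounds.

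The only subtle point I anticipate is making sure the sizes passed to Corollary \ref{corollary:complexityExitAlg1a1b} and Theorem \ref{theorem:corrAlg1} are the \emph{column} size $n$ and not $m$ in the vertical subroutines, and confirming that \textsc{Cautious-Move}$(west,j)$/$(east,j)$ really do run on a row ring of size $m$ with the $O(m)$ bound of Corollary \ref{corollary:complexityExitAlg1a1b}; once these are pinned down, the multiplication is routine. There is no deep obstacle here — the improvement over Theorem \ref{theorem:ComplexityAlg4} is entirely due to swapping an $O(m^{1.5})$ ring-exploration subroutine for an $O(m)$ one, and the bookkeeping is straightforward.
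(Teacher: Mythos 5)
Your proposal is correct and follows essentially the same route as the paper's own proof: an initial $12n$-round column phase, a per-iteration cost of $12m+19n$ (with the same $g(5)+3n=16n+3n=19n$ accounting via Theorem \ref{theorem:corrAlg1} and the $12m$ bound of Lemma \ref{lemma:3among4Alg5}), multiplied over $n$ row rings to give $12n+n(12m+19n)=O(nm)$. The subtleties you flag about applying the ring lemmas with size $n$ versus $m$ are handled implicitly in the paper in exactly the way you describe.
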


\begin{proof}
    Initially, to explore $C_j$, 4 agents performing \textsc{Cautious-Move}$(north,i)$ and \textsc{Cautious-Move}$(south,i)$ requires at most $12n$ rounds, either to detect the black hole or to explore $C_j$ and return to $v_{i,j}$. Next, for exploring a ring $R_i$ of size $m$, a set of 4 agents again executing \textsc{Cautious-Move}$(west,j)$ and \textsc{Cautious-Move}$(east,j)$, requires at most $12m$ rounds either to detect the black hole or to explore $R_i$ and then return back to the node $v_{i,j}$. Lastly, in order to move at least 4 agents from $R_i$ to $R_{i-1}$, at most $19n$ ($g(5)+3n=16n+3n$) rounds are required due to the execution of \textsc{Cautious-WaitMoveSouth}$(i-1,5)$ at line 9 of Algorithm \ref{alg-5}. So, in order to successfully explore a single ring for black hole, our algorithm \ref{alg-5} requires at most $12m+19n$ rounds, leaving the initial exploration of $C_j$. Now, there are $n$ such rings to be explored in the worst case, so the total number of rounds required to execute Algorithm \ref{alg-5} is: $12n+n(12m+19n)=O(mn)$, since $3\leq n \leq m$.
\end{proof}

\begin{figure}
  	\centering
  	\includegraphics[scale=0.8]{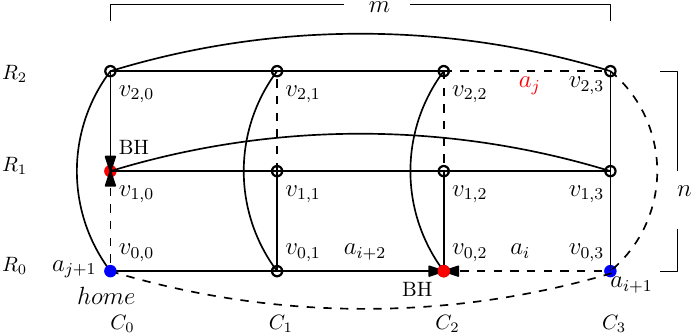}
  	\caption{Representing all possible black hole consumption while executing either algorithm \ref{alg-4} or \ref{alg-5}}
  	\label{fig:black_hole_consumption_part1}
  \end{figure}

\begin{blub}
    Our both Algorithms \ref{alg-4} and \ref{alg-5} for co-located agents, working with a set of $n+3$ and $n+4$ agents, respectively, ensures that at most 2 agents enter the black hole. Fig. \ref{fig:black_hole_consumption_part1} denotes that if the black hole is along the column of $home$ node (i.e., along $C_0$ in the figure) then in the first two steps of both the algorithms the black hole is detected and at most one agent each enters the black hole from $north$ and $south$. Otherwise, if the black hole is not along the column of $home$ (i.e., along $C_2$ (say) in figure), then while exploring the respective row ($R_0$ in the figure) the black hole is detected and at most two agents can enter the black hole along $east$ and $west$. In each possible case, our algorithms ensure at most two agents enter the black hole in the worst case.
\end{blub}

\section{Scattered Agents}\label{section-scattered}

In this section we propose two BHS algorithms on an $n\times m$ dynamic torus. Our first algorithm works with $n+6$ agents and require $O(nm^{1.5})$ rounds, whereas our second algorithm works with $n+7$ agents and require $O(nm)$ rounds.

\subsection{BHS with $n+6$ agents} 

A set of $n+6$ agents, $A=\{a_1,a_2,\ldots,a_{n+6}\}$ are initially scattered along different nodes of the torus $\mathcal{G}$, i.e., the agents are arbitrarily placed, where there may be more than one agent at a node or there can be single agent at each $n+6$ nodes in $\mathcal{G}$. 

At the first step, each agent performs \textsc{Cautious-WaitMoveWest}$(0,6)$ from any initial configuration, so after $23m$ ($g(6)+3m=20m+3m$) rounds has elapsed since the start of \textsc{Cautious-WaitMoveWest}$(0,6)$, each agent currently along $C_0$ is further instructed to perform \textsc{Cautious-WaitMoveSouth}($0,6$), so after $23n$ rounds has elapsed since \textsc{Cautious-WaitMoveSouth}($0,6$), if at least 3 agents have reached the node $v_{0,0}$, then 3 lowest Id agents at $v_{0,0}$ become \textsc{Leader}, \textsc{Avanguard} and \textsc{Retroguard}, respectively and are then instructed to perform \textsc{CautiousDoubleOscillation} along $R_0$. Hence, within $T$ rounds from the start of \textsc{CautiousDoubleOscillation} either the black hole is detected and the algorithm terminates or the ring $R_0$ is explored. After $T$ rounds since the start of \textsc{CautiousDoubleOscillation} these 3 agents are instructed to return to $v_{0,0}$ by marking each node along their movement till $v_{0,0}$ to 1 (as the ring is explored and there is no black hole in this ring, so an agent can mark each port as safe, if not already marked so). So, by corollary \ref{corollary:complexityExitAlg1a1b} in at most $6m$ rounds at least 2 among these 3 agents return to $v_{0,0}$, after which, each agent in $G$ are instructed to perform \textsc{Cautious-WaitMoveWest}$(0,6)$. 

On the other hand, if two agents have reached $v_{0,0}$ after $23n$ rounds has elapsed since \textsc{Cautious-WaitMoveSouth}($0,6$), then the lowest Id agent \textit{cautiously} walks along $west$ whereas the other agent \textit{cautiously} walks along $east$. If along their movement they $catches$ another agent trying to move along the same direction, then they together perform \textsc{Cautious-Move}() in the same direction. After $3m$ rounds has passed since they have started \textit{cautious} walk, these agents along $R_0$ are instructed to return to $v_{0,0}$ by marking each port along their movement to 1. So, within $6m$ rounds, if 3 agents are along $R_0$ then at least 2 among them returns or if 2 agents are along $R_0$ then at least 1 among them returns, further each agent along $G$ is again instructed to perform \textsc{Cautious-WaitMoveWest}$(0,6)$. This process iterates for each $R_i$ rings (where $0\le i \le n-1$), depending on whether 2 or 3 agents have reached the node $v_{i,0}$. The pseudocode is explained in the algorithm \ref{alg-6}.  

 \begin{algorithm2e}[!ht]\footnotesize
 \caption{BHS with $n+6$ agents}\label{alg-6}
 \textsc{Cautious-WaitMoveWest}$(0,6)$.\\
 \For{$i=0;~i\leq n-1;~i++$}
 {
 \Comment{$time1$ is defined as the number rounds since the last call of \textsc{Cautious-WaitMoveWest}().}\\
 \If{$time1>23m$}
 {
 All the agents along $C_0$ perform \textsc{Cautious-WaitMoveSouth}($i,6$).\\ \Comment{$time2$ is defined as the number rounds since the last call of \textsc{Cautious-WaitMoveSouth}().}\\
 \If{$time2>23n$}
 {
 \If{2 agents at $v_{i,0}$}
 {
 Instruct lowest Id agent at $v_{i,0}$ to perform \textit{Cautious} walk along $west$ and the other agent to perform \textit{Cautious} walk along $east$.\\
 If either agent \textit{catches} another agent, then they together perform \textsc{Cautious-Move} along the same direction.\\
 \Comment{$time3$ is defined as the number rounds since the agents started \textit{cautious} walk along $R_i$}\\
\If{$time3>3m$}
{
Each agent along $R_i$ is instructed to return to $v_{i,0}$ while marking each port along their movement to 1 if not already marked, irrespective of the fact that they are stuck or not.\\
\If{encounters a missing edge}
{
Lowest Id agent waits for the edge to reappear, whereas the other agents change direction.\\
}
\If{$time3>9m$}
{Each agent perform \textsc{Cautious-WaitMoveWest}$(0,6)$.\\}
}
}
\Else
{
Instruct the three lowest Id agents at $v_{i,0}$ to perform \textsc{CautiousDoubleOscillation}.\\
\Comment{$time4$ is defined as the number of rounds since the last call of \textsc{CautiousDoubleOscillation}}.\\
\If{$time4>T$}
{
Each agent along $R_i$ is instructed to return to $v_{i,0}$ while marking each port along their movement to 1 if not already marked, irrespective of the fact that they are stuck or not.\\
\If{encounters a missing edge}
{
Lowest Id agent waits for the edge to reappear, whereas the other agents change direction.\\
}
\If{$time4>T+6m$}
{Each agent perform \textsc{Cautious-WaitMoveWest}$(0,6)$.\\}
}
}
}
}
}
\end{algorithm2e}

% \begin{blub}
% If after $9(m+n)$ rounds from the last call of \textsc{Cautious-WaitMoveWest}$(0,6)$, if 2 agents reach $v_{i,0}$ at the $i$-th iteration of the algorithm \ref{alg-6}, then this implies that 3 agents has entered black hole along $east$, $west$ and $north$ or $south$ direction, whereas the one agent is stuck along a vertical missing edge along $C_0$, whereas the rest of $k$ agents are stuck due to horizontal missing edges along the $n$ horizontal rings (refer Fig. \ref{}). 
% \end{blub}

\begin{lemm}\label{lemma:3agentBHAlg6}
    If 2 agents reach $v_{i,0}$ at the $i$-th iteration of Algorithm \ref{alg-6} when $time2>23n$, and the algorithm has not terminated yet, then this implies exactly 3 agents has entered black hole from three different directions.
\end{lemm}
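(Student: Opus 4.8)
The plan is to reason backwards from the hypothesis. Suppose that at the $i$-th iteration, after $time2 > 23n$ rounds have elapsed since the last call of \textsc{Cautious-WaitMoveSouth}$(i,6)$, exactly $2$ agents are present at $v_{i,0}$, and yet the algorithm has not terminated. The first observation I would establish is that, at the start of iteration $i$, all surviving agents were instructed to perform \textsc{Cautious-WaitMoveWest}$(0,6)$ followed by \textsc{Cautious-WaitMoveSouth}$(i,6)$, so that every surviving agent is either stuck along some ring or has congregated at $v_{i,0}$. By Theorem \ref{theorem:corrAlg1}, in each invocation of \textsc{Cautious-WaitMoveWest}$/$\textsc{South} with parameter $l$, at least $l-1$ agents reach the desired node; equivalently, at most one agent can be left stuck by the adversary per ring per invocation. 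The key step is therefore a counting argument: since only $2$ agents reached $v_{i,0}$ out of whatever number of agents survived the previous iterations, and since at most one agent can be stuck along $C_0$ during \textsc{Cautious-WaitMoveSouth}$(i,6)$ and at most one along a row ring during \textsc{Cautious-WaitMoveWest}$(0,6)$, I would carefully account for how many agents are ``missing'' (i.e., neither at $v_{i,0}$ nor accounted for as stuck), and show that the only consistent explanation compatible with starting from at least $n+3$ agents in the preceding iteration (the scattered lower bound, Corollary \ref{corollary:scatteredAgentLB}) is that exactly $3$ agents were consumed by the black hole.

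Next I would argue about the \emph{three different directions}. In iteration $i-1$ (or whichever iteration the black hole lies in), depending on whether $2$ or $3$ agents reached $v_{i-1,0}$, the row ring $R_{i-1}$ was explored either by two agents walking cautiously in opposite directions (possibly merging into \textsc{Cautious-Move}), or by three agents running \textsc{CautiousDoubleOscillation}. In the three-agent \textsc{CautiousDoubleOscillation} case, by Corollary \ref{corollary:2agentsBHAlg1} (and the discussion of \textsc{CautiousDoubleOscillation}) at most $2$ agents enter the black hole along $R_{i-1}$ — so if $3$ total are consumed, the third must have been consumed earlier, necessarily while traversing the \emph{column} $C_0$ during some \textsc{Cautious-WaitMoveSouth} call, i.e., from a vertical (north or south) direction. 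In the two-agent case, both agents of $R_{i-1}$ may be consumed (one from east, one from west), and since the lemma's hypothesis is precisely that only $2$ agents reached $v_{i,0}$ at iteration $i$, this is the regime the statement is about: the two agents of $R_i$ enter from east and west, and a third agent — stuck or traversing $C_0$ at some earlier point, or entering $v_{i,0}$ from the column — is consumed from a third (vertical) direction. I would make this precise by tracing which ports of the black hole node get marked $f(\cdot)=0$ without ever being marked $1$, using the fact that a marked-unsafe-but-not-safe port is the unique signature of an agent lost in the black hole (as in the proof preceding Corollary \ref{corollary:2agentsBHAlg1}).

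The main obstacle I anticipate is the bookkeeping of ``where did the $n+6-3 = n+3$ surviving agents go'': I need to show that exactly $n$ of them can be parked as stuck agents (at most one per ring, across the $n$ row rings and, crucially, without double-counting the column $C_0$ which is shared), that $2$ of them are the ones that reached $v_{i,0}$, and that $1$ more is somewhere along $C_0$ — and that this forces the count of black-hole-consumed agents to be exactly $3$, not fewer (fewer would contradict the algorithm having failed to terminate or failed to deliver $3$ agents to $v_{i,0}$) and not more (more would drop the surviving population below $n+3$, contradicting that the algorithm was still making progress under the guarantees of Theorem \ref{theorem:corrAlg1}). A secondary subtlety is ensuring the three consumption directions are genuinely distinct: I must rule out, e.g., two agents entering from $west$ of the black hole node, which follows from the cautious-walk invariant that once a port is marked $f(\cdot)=0$ any subsequent agent approaching along that same port is diverted (state \textit{Init}'s predicate on $MEwest \wedge catches\text{-}waiting$ and on a persisting $f(west)=0$), so a port can ``swallow'' at most one agent.
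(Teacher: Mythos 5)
Your counting skeleton (at most one agent left stuck per ring per invocation of the gathering subroutines, hence the unaccounted-for agents must be in the black hole) is the same engine the paper uses, but two things go wrong in your write-up. First, the appeal to Corollary \ref{corollary:scatteredAgentLB} is spurious: nothing in the argument depends on the lower bound for solvability. The count is simply that there are $n+6$ agents, at most one stuck per row ring, at most one stuck on $C_0$, and two at $v_{i,0}$, which forces at least three into the black hole. The paper runs this as a contradiction (assume fewer than $3$ are consumed and show that at least $3$ or $4$ agents would then reach $v_{i,0}$ within $time2\leq 23n$), and it must additionally split on whether some row ring carries $6$ or more agents, since Theorem \ref{theorem:corrAlg1} then delivers at least $5$ of them to $C_0$; your sketch does not engage with this.

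Second, and more seriously, your argument for the ``three different directions'' is anchored in the wrong place. You locate the consumption during the exploration of $R_{i-1}$, but by hypothesis the algorithm has not terminated, so no already-explored ring contains the black hole; it must lie on some $R_{t'}$ with $t'\ge i$. The agents are therefore lost not during \textsc{CautiousDoubleOscillation} or the two-agent exploration of a previous row, but during the \emph{gathering} calls \textsc{Cautious-WaitMoveWest}$(0,6)$ and \textsc{Cautious-WaitMoveSouth}. The paper's proof splits on whether the black hole is at $v_{t',0}$ --- i.e., is itself the destination node of the westward gathering, in which case Lemma \ref{lemma:Impossibile3agentAlg1} lets two agents vanish from $east$ and $west$ undetected and a third can vanish from $north$ or $south$ on $C_0$ --- or is off $C_0$, in which case at most one agent is consumed by Corollary \ref{corollary:2agentsBHAlg1} and the resulting count of arrivals at $v_{i,0}$ contradicts the hypothesis. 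Without this case split on the black hole's position relative to $C_0$, neither the count of exactly three nor the claim about three distinct directions goes through. Your closing observation that a port marked $f(\cdot)=0$ can swallow at most one agent is correct and is indeed what makes the three directions distinct, but it does not rescue the misplaced localization of where the losses occur.
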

\begin{proof}
    We prove the above lemma by contradiction, let us consider less than 3 agents have entered the black hole, this implies there are at least $n+4$ remaining agents, out of which only 2 agents has reached $v_{i,0}$ within $time2\leq 23n$ at the $i$-th iteration of Algorithm \ref{alg-6}. Since, $v_{i,0}$ is the node to be visited at the $i$-th iteration, as $R_i$ is the ring to be explored at the current iteration, this means the black hole is not located along $R_t$, $\forall~t\in\{0,1,\ldots,i-1\}$), as otherwise the algorithm must have already terminated while exploring $R_t$. This implies, black hole is somewhere along the nodes of the following row rings $R_i,\ldots, R_{n-1}$. Now, we have two cases:
    \begin{itemize}
        \item Let the black hole be at a node $v_{t',j}$ (where $t'\in \{i,\ldots,n-1\}$ and $j\in \{1, \ldots, m-1\}$). In this case, as $R_{t'}$ is not yet explored, hence we claim that at most one agent has entered the black hole while at most one agent is stuck while trying to reach $v_{t',0}$. Note that this claim holds as the only execution performed along $R_{t'}$ till the current iteration is \textsc{Cautious-WaitMoveWest}$(0,6)$ which assures by Corollary \ref{corollary:2agentsBHAlg1} that if $v_{t',0}$ is not the black hole node then at most 1 agent can enter the black hole. 
        % The reason being, only \textsc{Cautious-WaitMoveWest}$(0,6)$ is executed along $R_{t'}$ in the previous $i-1$ iterations, so an agent first exploring $v_{t',j}$ along $east$ (or $west$) while trying to reach $v_{t',0}$ can only get consumed. If the ring has at least one other agent, then that agent may while moving along $east$ (or $west$) has to encounter a missing edge, otherwise it correctly detects the black hole by seeing the edge along $east$ exitsts and $f(east)=0$ at $v_{t',j+1}$ (or seeing the edge along $west$ exists and $f(west)=0$ at $v_{t',j-1}$). Hence, as the adversary has the ability to block any agent from visiting $v_{t',j+1}$ (or $v_{t',j-1}$) and detect the black hole. So, leaving one agent which waits for $3m$ rounds, other agents changes their direction, and reaches $v_{t',0}$ before reaching  $v_{t',j-1}$ (or $v_{t',j+1}$). So, in any case at most can enter the black hole whereas at most one other agent can be stuck. 
         Now, if all row rings have at most 5 agents at the end of $(i-2)$-th iteration, then by Theorem \ref{theorem:corrAlg1}, leaving at most one agent at each row ring and one agent consumed by the black hole, remaining at least 5 ($=n+6-n-1$) agents must reach $C_0$ after execution of \textsc{Cautious-WaitMoveWest}$(0,6)$ at the $i-1$-th iteration. Further, an execution of \textsc{Cautious-WaitMoveSouth}$(0,6)$ in the $i$-th iteration will bring at least 4 among these 5 agents to $v_{i,0}$ within $time2\leq 18n$, contradicting our claim. On the other hand, if there exists at least one row ring with at least 6 agents at the end of $(i-2)$-th iteration and other $n-1$ rows with $n$ agents, then again by Theorem \ref{theorem:corrAlg1}, at least 5 among them must reach $C_0$ (if that ring with at least 6 agents is $R_{t'}$ then leaving at most 2 agents, i.e., one in black hole and one stuck, the remaining 4 agents must reach $C_0$ while executing \textsc{Cautious-WaitMoveWest}$(0,6)$, whereas at least one agent among remaining $n$ agents distributed in the other $n-1$ rings must also reach $C_0$ while performing the same \textsc{Cautious-WaitMoveWest}$(0,6)$ execution in the $(i-1)$-th iteration) by the end of $(i-1)$-th iteration and further execution of \textsc{Cautious-WaitMoveSouth}$(0,6)$ in the $i$-th iteration takes at least 4 among them to $v_{i,0}$ within $time2\leq 23n$, contradicts our claim.

        \item Let the black hole be at a node $v_{t',0}$ (where $t'\in \{i,\ldots,n-1\}$). In this case, if 4 agents operate along $R_{t'}$ then execution of \textsc{Cautious-WaitMoveWest}$(0,6)$ will itself detect the black hole (refer Corollary \ref{corollary:Possibile4agentAlg1}). So, this means there are at most 3 agents along $R_{t'}$. If 3 agents are along $R_{t'}$ then by Lemma \ref{lemma:Impossibile3agentAlg1}, 2 of them can enter the black hole from $west$ and $east$ while the other agent can remain stuck somewhere along $R_{t'}$. Otherwise, if at most 2 agents are along $R_{t'}$ then also either both of these agents have entered black hole along $east$ and $west$ or one among them has entered the black hole and the other is stuck, while another agent must have entered the black hole along $north$ or $south$ while executing \textsc{Cautious-WaitMoveSouth}$(0,6)$ at some iteration of Algorithm \ref{alg-6}. In either case, there exists at least $n+3$ ($=n+6-2-1$) agents along the remaining $n-1$. By earlier argument, if there exists at least one row with 6 agents (and that row is not $R_{t'}$) at $(i-1)$-th iteration then within $time2\leq 23n$ at least 4 among these 6 agents reach the desired node, which leads to a contradiction. So, each of these $n-1$ rows, has at most 5 agents, this means leaving at most $n-1$ agents stuck along $n-1$ row rings (leaving $R_{t'}$), the remaining 4 ($=n+3-(n-1)$) agents must reach $C_0$ at the end of $(i-1)$-th iteration and again execution of \textsc{Cautious-WaitMoveSouth}$(0,6)$ along $C_0$ at the $i$-th iteration must bring at least 3 agents at $v_{i,0}$ within $time2\leq 23n$, contradicts our claim.
    \end{itemize}
    So, in each case we attain a contradiction.\end{proof}

\begin{blab}\label{corollary:3directionsBHAlg6}
    If 3 agents enter black hole from 3 directions without detecting it, then 2 among these 3 directions are $east$ and $west$, whereas the 3rd is either $north$ or $south$. 
\end{blab}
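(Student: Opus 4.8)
The plan is to build directly on the case analysis already carried out in the proof of Lemma~\ref{lemma:3agentBHAlg6}, which guarantees that exactly three agents are swallowed; here we only need to pin down the three ports through which they enter. First I would show that the black hole must lie on the column $C_0$. Suppose instead that the black hole is $v_{t',j}$ with $j\in\{1,\dots,m-1\}$. During the first phase the only subroutine that could carry an agent into it is \textsc{Cautious-WaitMoveWest}$(0,6)$ along $R_{t'}$, and since the designated node $v_{t',0}$ of that call is safe, Corollary~\ref{corollary:2agentsBHAlg1} bounds the number of agents consumed there by one, entering $v_{t',j}$ while heading \emph{west} (from $v_{t',j+1}$). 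The subroutine \textsc{Cautious-WaitMoveSouth}$(i,6)$ never leaves $C_0$, hence swallows nobody; and when $R_{t'}$ is finally processed, either three or more agents run \textsc{CautiousDoubleOscillation}, which by its correctness would \emph{detect} the black hole, contradicting the hypothesis, or exactly two agents perform cautious walks toward $west$ and $east$ from $v_{t',0}$ and can be consumed only while heading \emph{west} (from $v_{t',j+1}$) or \emph{east} (from $v_{t',j-1}$). So in this situation at most three agents are lost, but along at most the two directions $east$ and $west$; three distinct directions are then impossible, so the hypothesis forces the black hole onto $C_0$.

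Next I would fix the black hole at $v_{t',0}\in C_0$ and enumerate how an agent can fall in. Only agents that have safely reached $v_{t',0}$ ever launch \textsc{CautiousDoubleOscillation} or the $east$/$west$ cautious walk of Algorithm~\ref{alg-6} on $R_{t'}$, and no agent can reach $v_{t',0}$ alive, so neither of these is ever run on $R_{t'}$; the only entries into $v_{t',0}$ are via \textsc{Cautious-WaitMoveWest}$(0,6)$ along $R_{t'}$ (reaching $v_{t',0}$ from $v_{t',1}$ heading $west$, or from $v_{t',m-1}$ heading $east$ through the $\neg\textbf{Init}$ branch of Algorithm~\ref{alg-1}) and via \textsc{Cautious-WaitMoveSouth}$(i,6)$ along $C_0$ (reaching $v_{t',0}$ from $v_{t'-1,0}$ heading $south$, or from $v_{t'+1,0}$ heading $north$). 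On the row ring, if four or more agents ever operate on $R_{t'}$ then Corollary~\ref{corollary:Possibile4agentAlg1} says the black hole is located, contradicting the hypothesis, so at most three do; and then Lemma~\ref{lemma:Impossibile3agentAlg1} says \emph{exactly two} of them are swallowed -- one from $v_{t',1}$ moving $west$ and one from $v_{t',m-1}$ moving $east$. Since Lemma~\ref{lemma:3agentBHAlg6} already accounts for three lost agents, the third must be lost while running \textsc{Cautious-WaitMoveSouth}$(i,6)$ along $C_0$, and by the list above it enters $v_{t',0}$ either moving $north$ or moving $south$. This gives the claimed two $east$/$west$ entries and one $north$/$south$ entry.

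The only real obstacle is the bookkeeping in the first paragraph: one must be certain that no combination of the available subroutines can produce three lost agents along at most two directions while still keeping the black hole undetected, which is exactly why the correctness of \textsc{CautiousDoubleOscillation} and Corollary~\ref{corollary:Possibile4agentAlg1} (four agents on a ring detect it) are needed to rule out the ``dense'' sub-cases, and Corollary~\ref{corollary:2agentsBHAlg1} together with Lemma~\ref{lemma:Impossibile3agentAlg1} to control the ``sparse'' ones. Once the black hole is known to lie on $C_0$, the rest is a direct reading of the branches of Algorithm~\ref{alg-1} and of its \textsc{Cautious-WaitMoveSouth} analogue, so no further ideas are required.
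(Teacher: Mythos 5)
Your overall route is the same one the paper takes implicitly: the corollary is stated without a separate proof and is meant to fall out of the case analysis inside Lemma~\ref{lemma:3agentBHAlg6}, whose first bullet is exactly your first paragraph (off $C_0$ only the two horizontal ports of the black hole are ever crossed, so three distinct directions force the black hole onto $C_0$) and whose second bullet is your second paragraph (the row ring accounts for the $east$/$west$ entries, the column for the remaining one). So in structure and in the lemmas invoked (Corollaries~\ref{corollary:2agentsBHAlg1} and~\ref{corollary:Possibile4agentAlg1}, Lemma~\ref{lemma:Impossibile3agentAlg1}) you are reconstructing the paper's own argument.

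There is, however, one step where you claim more than your citation gives you, and it is precisely the step that carries the direction claim. You write that Lemma~\ref{lemma:Impossibile3agentAlg1} says \emph{exactly} two of the agents on $R_{t'}$ are swallowed, one through each horizontal edge; the lemma only says \emph{at most} two. If only one (or zero) agents are ever consumed through the horizontal edges of $v_{t',0}$, then the hypothesis ``3 agents from 3 directions'' could a priori be realized as $\{north, south\}$ plus one horizontal direction, which is exactly the configuration the corollary denies. To close this you need the symmetric half of the argument: two undetected fatalities through the \emph{column} edges of $v_{t',0}$ are impossible, because the $north$/$south$ traffic is confined to $C_0$, which (after the initial \textsc{Cautious-WaitMoveWest}$(0,6)$ and by Theorem~\ref{theorem:corrAlg1}) always carries at least four live agents when \textsc{Cautious-WaitMoveSouth} is invoked; once both column ports of the black hole are marked $0$, two of these agents end up waiting on opposite sides, the adversary can hide only one column edge per round, and the other agent terminates in state \textit{TerminateW}. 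The paper leaves this observation implicit as well (it simply asserts in the second bullet of Lemma~\ref{lemma:3agentBHAlg6} that the third agent enters ``along $north$ or $south$''), but since your write-up makes the row-side count the load-bearing step, you should either prove the ``exactly two on the row'' claim or, more naturally, rule out ``two on the column'' directly as above.
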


\begin{lemm}\label{lemma:presenceRiAlg6}
    Our BHS algorithm with $n+6$ agents, ensures that if at the $i$-th iteration after $time2>23n$ only 2 agents are present at $v_{i,0}$, then there exists another agent stuck somewhere along $R_i$.
\end{lemm}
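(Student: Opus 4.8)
The idea is to read the stuck agent off Lemma~\ref{lemma:3agentBHAlg6} after first localizing the black hole. Suppose that at the $i$-th iteration, after $time2>23n$, exactly two agents sit at $v_{i,0}$ and the algorithm has not yet terminated. By Lemma~\ref{lemma:3agentBHAlg6}, exactly three agents have been swallowed by the black hole, each through a different port, and by Corollary~\ref{corollary:3directionsBHAlg6} these ports are $east$, $west$, and one of $north$, $south$. First I would pin down the black hole. In Algorithm~\ref{alg-6} an agent crosses a \emph{vertical} edge only while running a \textsc{Cautious-WaitMoveSouth}$(\cdot,6)$ phase, and every such phase is executed on the single column ring $C_0$ (the agents are gathered onto $C_0$ beforehand). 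Hence the agent that entered the black hole through $north$ or $south$ did so while on $C_0$, so the black hole is a node of $C_0$, say $v_{t',0}$; since $R_0,\dots,R_{i-1}$ have already been explored without the algorithm halting, $t'\ge i$, and the hypothesis that two agents are present at $v_{i,0}$, together with the agent-counting bookkeeping from the proof of Lemma~\ref{lemma:3agentBHAlg6}, forces $v_{t',0}$ to lie on the ring $R_i$ currently being processed.

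Next I would exhibit the stuck agent. The two agents that entered $v_{t',0}$ through $east$ and $west$ must have done so by a \emph{horizontal} traversal inside the row ring carrying the black hole, i.e.\ while executing \textsc{Cautious-WaitMoveWest}$(0,6)$ along that ring with the target node equal to the black hole. The number of agents simultaneously on that ring during this phase is at least three (two are consumed through $east$ and $west$, and by Lemma~\ref{lemma:Impossibile3agentAlg1} a third one has to be present and blocked while the other two are swallowed) and at most three (with four or more, Corollary~\ref{corollary:Possibile4agentAlg1} would have detected the black hole, contradicting non-termination). Applying Lemma~\ref{lemma:Impossibile3agentAlg1} to exactly three agents running \textsc{Cautious-WaitMoveWest}$(0,6)$ along $R_i$ into a black-hole target: two of them are lost through $east$ and $west$ while the remaining one is left stuck somewhere along $R_i$ --- precisely the agent claimed in the statement.

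The delicate part is the identification of the black-hole row with $R_i$: one has to rule out every configuration in which the three consumptions could instead be produced on a not-yet-processed ring $R_{t'}$ with $t'>i$, or with only two agents present on the black-hole ring. I would handle this exactly as in the case inventory of the proof of Lemma~\ref{lemma:3agentBHAlg6} --- counting, via the ``at least $l-1$ survivors reach the desired node'' guarantee of Theorem~\ref{theorem:corrAlg1} applied ring by ring against the budget of $n+6$ agents, how many agents can still be at $v_{i,0}$ after \textsc{Cautious-WaitMoveSouth}$(i,6)$, and showing that every such alternative forces at least three arrivals at $v_{i,0}$, contradicting the hypothesis. Everything else is a direct appeal to Lemmas~\ref{lemma:Impossibile3agentAlg1} and \ref{lemma:3agentBHAlg6} and Corollaries~\ref{corollary:3directionsBHAlg6} and \ref{corollary:Possibile4agentAlg1}.
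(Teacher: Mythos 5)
Your proposal takes a genuinely different route from the paper, and the route contains a fatal error. The paper proves this lemma by a pure counting contradiction: assume no agent is stuck along $R_i$; then, subtracting the $3$ consumed agents (Lemma~\ref{lemma:3agentBHAlg6}) and at most one stuck agent on each of the other $n-1$ row rings (Theorem~\ref{theorem:corrAlg1}), at least $4$ agents must sit on $C_0$ at the start of the $i$-th iteration, hence at least $3$ of them reach $v_{i,0}$ after \textsc{Cautious-WaitMoveSouth}$(i,6)$ --- contradicting the hypothesis that only $2$ are there. That argument never needs to know which ring carries the black hole.

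You instead try to exhibit the stuck agent concretely as the ``third agent'' of Lemma~\ref{lemma:Impossibile3agentAlg1} on the black-hole row, which requires your claim that the black hole lies on $R_i$. That claim is not only unjustified (you defer it to ``the bookkeeping of Lemma~\ref{lemma:3agentBHAlg6}''), it is false: by Corollary~\ref{corollary:3directionsBHAlg6} one of the three consumptions is vertical, vertical moves in Algorithm~\ref{alg-6} occur only on $C_0$, so the black hole is some $v_{t',0}$; but two \emph{live} agents occupy $v_{i,0}$ by hypothesis, so $t'\neq i$ and the black hole's row is $R_{t'}\neq R_i$. Consequently the agent that Lemma~\ref{lemma:Impossibile3agentAlg1} leaves stranded is stuck on $R_{t'}$, not on $R_i$, and your argument never produces the agent the statement actually asserts --- the one on $R_i$, which is stuck merely because of adversarial missing edges during \textsc{Cautious-WaitMoveWest}$(0,6)$ and has nothing to do with the black hole. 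A secondary gap: your ``at least three agents on the black-hole row'' step misreads Lemma~\ref{lemma:Impossibile3agentAlg1}, which does not force a third agent to be present; two agents alone can both be consumed from $east$ and $west$, a case the proof of Lemma~\ref{lemma:3agentBHAlg6} explicitly admits. The repair is to drop the localization entirely and run the paper's counting argument.
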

\begin{proof}
    This situation of 2 agents reaching $v_{i,0}$ arises when 3 agents have already entered the black hole within $time2\leq 23n$ at the $i$-th iteration of the algorithm, by Lemma \ref{lemma:3agentBHAlg6}. Let us consider that no agent is stuck along $R_i$, so leaving the two agents which has reached $v_{i,0}$, there exists $n+1$ ($n+6-3-2$) agents along $n-1$ row rings. If there exists at least one row (note, that row is not the row with black hole) with at least 6 agents before the beginning of \textsc{Cautious-WaitMoveWest}$(0,6)$ step at the $(i-1)$-th iteration of Algorithm \ref{alg-6}, then this ensures that at least 5 among these 6 agents will reach a node along $C_0$. Next, performing \textsc{Cautious-WaitMoveSouth}$(i,6)$ at the $i$-th iteration will move at least 4 among these 5 agents to $v_{i,0}$ (as at most one can get stuck along $C_0$). This leads to a contradiction, that only 2 agents reach $v_{i,0}$ at the $i$-th iteration. On the contrary, if at most 5 agents are located at each of these $n-1$ rows, leaving $R_i$, before the execution of \textsc{Cautious-WaitMoveWest}$(0,6)$ at the $(i-1)$-th iteration. In that scenario, executing \textsc{Cautious-WaitMoveWest}$(0,6)$, ensures that leaving at most 1 agent in each $n-1$ row ring, the remaining excess agents reach $C_0$. Now, as per our assumption, 3 agents have already entered the black hole, and at most $n-1$ agents are stuck along the $n-1$ row rings, this leaves the excess 4 agents ($=n+6-(n-1)-3$) to be along $C_0$ at the start of $i$-th iteration. So, whenever the algorithm again instructs each agent along $C_0$ to perform \textsc{Cautious-WaitMoveSouth}$(0,6)$ (at step 5 of Algorithm \ref{alg-6}), then at least 3 among them reach $v_{i,0}$ by $time2\leq 23n$ rounds. So, this again leads to a contradiction that 2 agents are at $v_{i,0}$ when $time2>23n$. Hence, this concludes the fact that if 2 agents reach $v_{i,0}$ at the $i$-th iteration , then there exists another agent stuck along $R_i$.\end{proof}

\begin{lemm}
    Our BHS algorithm with $n+6$ agents correctly locates the black hole.
\end{lemm}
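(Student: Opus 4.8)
The plan is to split the claim into two halves: that \emph{every} termination of Algorithm~\ref{alg-6} reports the true black hole node, and that the algorithm \emph{does} terminate (some agent survives and stops). The first half I would get essentially for free from the building blocks: a terminating step of Algorithm~\ref{alg-6} always occurs inside \textsc{Cautious-WaitMoveWest}()/\textsc{Cautious-WaitMoveSouth}(), \textsc{Cautious-Move}() or \textsc{CautiousDoubleOscillation}, and for each of these it is already known (correctness of \textit{cautious} walk, the ``termination $\Rightarrow$ correct'' lemma for Algorithm~\ref{alg-1}, Corollary~\ref{corollary:2agentsBHAlg1}, and the correctness of \textsc{CautiousDoubleOscillation} from~\cite{BHSDynRingLuna}) that reporting fires only when a surviving agent sits on a safe neighbour of the black hole whose port toward it carries an \emph{unsafe} mark. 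So the real work is the liveness half, which I would organise by the position $v_{p,q}$ of the black hole.

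First I would argue that Algorithm~\ref{alg-6} cannot loop forever without terminating: every phase runs for a bounded number of rounds (Theorem~\ref{theorem:corrAlg1}, Lemma~\ref{lemma:complexityAlg1a1b}, and the explicit round budgets written into Algorithm~\ref{alg-6}), so the loop keeps advancing, and hence either some agent terminates (and by the first half we are done) or the loop reaches the iteration $i=p$ with $v_{p,q}\in R_p$. Since $R_0,\dots,R_{p-1}$ are safe, agents are consumed only while passing through $R_p$ (or through $C_0$ when $q=0$), and there the cautious-walk bounds of Corollary~\ref{corollary:2agentsBHAlg1} and Lemma~\ref{lemma:Impossibile3agentAlg1} cap how many are lost before a mark forces detection.

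In the case $q\neq 0$ I would note that no agent ever enters $v_{p,q}$ from the north or south --- vertical moves happen only along $C_0$ inside \textsc{Cautious-WaitMoveSouth}() --- so at most two agents are ever consumed, all inside $R_p$ moving east/west. Then Lemma~\ref{lemma:3agentBHAlg6} with Corollary~\ref{corollary:3directionsBHAlg6} rules out the ``only $2$ agents reach $v_{p,0}$ at iteration $p$'' event, since it would require a third consumed agent arriving vertically; and the same counting of the $n+6$ agents against at most two consumed plus at most one agent stuck per row shows that at least three agents actually reach $v_{p,0}$. Hence the ``else'' branch of Algorithm~\ref{alg-6} is taken: three agents run \textsc{CautiousDoubleOscillation} on $R_p$ and, by its correctness with the whiteboard replacing the pebble (Section~\ref{section-preliminaries}), they locate $v_{p,q}$.

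The case $q=0$ is where I expect the argument to become delicate, and I regard it as the main obstacle. Here $v_{p,0}$ itself is the black hole and agents are funnelled toward it from both sides of some ring $\rho\in\{R_p,C_0\}$ --- along $R_p$ by every call of \textsc{Cautious-WaitMoveWest}$(0,6)$, and at iteration $p$ along $C_0$ by \textsc{Cautious-WaitMoveSouth}$(p,6)$. The plan is: if four agents are ever present in $R_p$, Corollary~\ref{corollary:Possibile4agentAlg1} detects $v_{p,0}$ outright; otherwise the first agent to step into $v_{p,0}$ from each side of $\rho$ is consumed and leaves an \emph{unsafe} mark on the corresponding neighbour's port (the existence of these consumed agents, and the guarantee that a live agent is still operating in $\rho$, come from Lemma~\ref{lemma:3agentBHAlg6}, Corollary~\ref{corollary:3directionsBHAlg6} and Lemma~\ref{lemma:presenceRiAlg6}), and once both neighbours of $v_{p,0}$ in $\rho$ are so marked the adversary --- limited to one missing edge per round in the $1$-interval-connected ring $\rho$ --- cannot keep both edges incident to $v_{p,0}$ in $\rho$ absent, so some later agent reaches the endpoint of the present one, finds the unsafe mark, and reports $v_{p,0}$ exactly as in the ``termination $\Rightarrow$ correct'' lemma for Algorithm~\ref{alg-1}. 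The hard part is making ``some later agent'' rigorous: one must combine the consumption bounds of the two gathering subroutines, the agent counting of Lemmas~\ref{lemma:3agentBHAlg6} and~\ref{lemma:presenceRiAlg6}, and the fact that $v_{p,0}$ lies simultaneously on two $1$-interval-connected rings, to guarantee that a \emph{live} agent actually reaches a present, unsafe-marked neighbour of $v_{p,0}$ and that the adversary cannot indefinitely quarantine the black hole by rotating which incident edge it keeps missing. Everything else is routine bookkeeping on top of results already established.
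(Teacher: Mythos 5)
Your decomposition (soundness of any termination for free from the subroutines, then liveness by cases on whether the black hole lies on $C_0$) is exactly the paper's, and your treatment of the off-$C_0$ case matches the paper's: no vertical move ever touches $v_{p,q}$ when $q\neq 0$, so Corollary~\ref{corollary:2agentsBHAlg1} caps the losses at two east/west consumptions, at least three agents reach $v_{p,0}$, and \textsc{CautiousDoubleOscillation} finishes the job. The one place where you stop short --- the ``some later agent'' step in the $q=0$ case --- is precisely where the paper's proof does its only real work, and it closes it with a blunt head-count rather than any new idea. Concretely: if four agents ever sit on $R_p$ the black hole $v_{p,0}$ is found already during \textsc{Cautious-WaitMoveWest}$(0,6)$ by Corollary~\ref{corollary:Possibile4agentAlg1}; otherwise, at iteration $i=p$ the worst case is that $n$ agents are stuck one per row ring, three agents have been consumed (east, west, and one of north/south, per Lemma~\ref{lemma:3agentBHAlg6} and Corollary~\ref{corollary:3directionsBHAlg6}), and one agent is stuck on $C_0$; that accounts for $n+4$ of the $n+6$ agents, so \emph{exactly two} live agents are still advancing toward $v_{p,0}$ along $C_0$, one from each side. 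Since $C_0$ is $1$-interval connected the adversary has already spent its one missing edge keeping the stuck agent stuck (or can block only one of the two approaches), so both survivors make progress; one of the two vertical ports of $v_{p,0}$ already carries the unsafe mark left by the third consumed agent, hence at most one of the two survivors is consumed and the other reaches a present, $0$-marked neighbour and terminates correctly. That counting step is what turns your ``the adversary cannot indefinitely quarantine the black hole'' into a proof; I would add it explicitly rather than leaving it as an acknowledged obstacle.

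One small caution: your phrase ``the adversary cannot keep both edges incident to $v_{p,0}$ in $\rho$ absent'' is true but not by itself sufficient --- you also need a live agent stationed at (or guaranteed to reach) the marked endpoint of whichever edge is present, which is exactly what the two-survivors-from-two-sides count supplies. Without it, the adversary could in principle alternate the missing edge so that the present side is always the one with no agent nearby; the count rules this out because the two survivors approach from opposite directions and only one ring edge of $C_0$ can be missing per round.
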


\begin{proof}
    We consider each possible position of the black hole, and in each case we prove that our algorithm correctly detects it.
    \begin{itemize}
        \item Suppose black hole is along $C_0$, in this case, consider the black hole node to be at the node $v_{i,0}$. Now, initially if 4 agents are located along $R_i$, then by Corollary \ref{corollary:Possibile4agentAlg1}, 4 agents performing \textsc{Cautious-WaitMoveWest}$(0,6)$ along $R_i$ can locate the black hole in at most $15m$ rounds (refer Corollary \ref{corollary:Possibile4agentAlg1}). Otherwise, if at most 3 agents are located along $R_i$, so the worst possible situation is that 2 agents enter black hole along $east$ and $west$ whereas the 3rd agent is stuck by a missing edge, while executing \textsc{Cautious-WaitMoveWest}$(0,6)$. In this situation, in the $i$-th iteration, while performing \textsc{Cautious-WaitMoveSouth}$(i,6)$ at least 2 agents try to reach $v_{i,0}$ along $C_0$ because at most $n$ agents are stuck along $n$ row rings, at most 3 agents have already entered the black hole without yet detecting it (these 3 directions are $east$, $west$ and $north$ or $south$), and at most 1 agent is stuck along $C_0$ along its way to reach $v_{i,0}$. So, the adversary has no other ability to stop these two agents from reaching $v_{i,j}$, and at most one among them enters the black hole while the other agent correctly locates the black hole.
        \item Suppose the black hole is not along $C_0$, in this case, consider the black hole node to be at a node $v_{i,j}$ (where $j \in \{1,\ldots,m-1\}$). As, the black hole is not along $C_0$, so the only operation performed along $R_i$ before the exploration of $R_i$ is \textsc{Cautious-WaitMoveWest}$(0,6)$. This implies at most one agent is consumed by the black hole either from $east$ or $west$ while performing \textsc{Cautious-WaitMoveWest}$(0,6)$, as by Corollary \ref{corollary:2agentsBHAlg1}, at most 2 agents enter the black hole while performing \textsc{Cautious-WaitMoveWest}$(j,l)$ along $R_i$ when the black hole is itself the node $v_{i,j}$. This means, at the $i$-th iteration at least 3 agents reach the node $v_{i,0}$ within $time2\leq 23n$ and they perform \textsc{CautiousDoubleOscillation} along $R_i$ which correctly locates the black hole.  \end{itemize}\end{proof}

\begin{theorem}\label{theorem:n+6scatteredRoundComplexity}
    A group of $n+6$ agents executing Algorithm \ref{alg-6} along a dynamic torus of size $n\times m$ correctly locates the black hole in $O(nm^{1.5})$ rounds.
\end{theorem}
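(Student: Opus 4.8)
The plan is to bound the total number of rounds by summing the cost of each of the $n$ iterations of the \texttt{for} loop in Algorithm~\ref{alg-6}, plus the cost of the very first call to \textsc{Cautious-WaitMoveWest}$(0,6)$. The key observation is that every iteration consists of at most three phases whose costs are already bounded by earlier results: (i) a call to \textsc{Cautious-WaitMoveSouth}$(i,6)$ along the column $C_0$, which by Theorem~\ref{theorem:corrAlg1} (applied with $l=6$ on a ring of size $n$) terminates within $g(6)+3n = 23n$ rounds; (ii) a ring-exploration phase along $R_i$, which is either a \textsc{CautiousDoubleOscillation} costing at most $T = 19m^{1.5}+7(m+\sqrt{m})$ rounds (Lemmas 14--15 of~\cite{BHSDynRingLuna}) or, in the two-agent case, a \textsc{Cautious-Move}-style exploration costing at most $3m$ rounds; and (iii) a return-to-$v_{i,0}$ phase, which behaves like the \textit{Exit} state and by Corollary~\ref{corollary:complexityExitAlg1a1b} costs at most $6m$ rounds. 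Finally, before the next iteration there is another call to \textsc{Cautious-WaitMoveWest}$(0,6)$ along a row of size $m$, costing at most $g(6)+3m = 23m$ rounds by Theorem~\ref{theorem:corrAlg1}.

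Summing these, one iteration costs at most $23n + \max\{T, 3m\} + 6m + 23m = 23n + T + 29m$ rounds, and since $T = \Theta(m^{1.5})$ and $3 \le n \le m$, this is $O(m^{1.5})$. There are at most $n$ iterations in the worst case, because each iteration explores (or discovers the black hole in) exactly one row ring $R_i$ with $0 \le i \le n-1$, and by the correctness lemma and Lemmas~\ref{lemma:3agentBHAlg6} and~\ref{lemma:presenceRiAlg6} the invariant ``enough agents survive to continue'' is maintained throughout. Hence the total number of rounds is at most the initial $23m$ plus $n \cdot O(m^{1.5}) = O(nm^{1.5})$, which is the claimed bound.

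The main subtlety — which I would want to state explicitly rather than gloss over — is that in the two-agent branch the ring exploration is not guaranteed to detect the black hole if $v_{i,0}$ itself is the black hole (Lemma~\ref{lemma:Impossibile3agentAlg1}), but in that situation Lemmas~\ref{lemma:3agentBHAlg6} and~\ref{lemma:presenceRiAlg6} guarantee that the configuration is already such that the detection happens during the column phase of a subsequent iteration (when \textsc{Cautious-WaitMoveSouth} brings two agents to $v_{i,0}$ with no remaining adversarial blocking power), so the iteration count is still $O(n)$. I would also note that the $\Omega(mn)$ lower bound of Theorem~\ref{theorem:scatteredRoundLB} applies here, so this algorithm is within a $\sqrt{m}$ factor of optimal, motivating the $n+7$-agent refinement in the next subsection. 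The argument is essentially a routine accounting once all the component lemmas are in place; the only place where care is genuinely needed is confirming that the worst-case number of iterations is $n$ and not larger, i.e. that no iteration is ever ``wasted'' by exploring a ring twice.
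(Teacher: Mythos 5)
Your proposal is correct and follows essentially the same route as the paper's proof: a per-iteration accounting of the column move ($23n$ rounds via Theorem~\ref{theorem:corrAlg1}), the ring exploration (at most $T=19m^{1.5}+7(m+\sqrt{m})$ rounds), and the row move, multiplied by the $n$ iterations of the loop. Your version is in fact slightly more careful than the paper's, which simply writes the total as $n(T+23(n+m))$ and omits the explicit $6m$ return phase and the two-agent branch, but these refinements do not change the asymptotic bound.
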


\begin{proof}
    In order to reach at least 3 agents from any initial configuration to $v_{0,0}$, our algorithm takes at most $23(n+m)$ rounds (since, \textsc{Cautious-WaitMoveWest}$(0,6)$ takes at most $23m$ rounds to reach a node along $C_0$ and \textsc{Cautious-WaitMoveSouth}$(0,6)$ takes another at most $23n$ rounds to reach $v_{0,0}$). Next, performing \textsc{CautiousDoubleOscillation} along $R_0$ takes at most $T$ rounds (where, $T= 19m^{1.5}+7(m+\sqrt{m})$ rounds). Now, there are $n$ many such row rings, and we iterate the above process for each $n$ rings, so total number of rounds is: $n(T+23(n+m))=O(nm^{1.5})$, as $3\leq n \leq m$. \end{proof}

\subsection{BHS with $n+7$ agents}

In this case the set of $n+7$ agents, $A=\{a_1,a_2,\ldots,a_{n+7}\}$ are scattered along the nodes of $\mathcal{G}$. The BHS algorithm with $n+7$ agents is almost similar as the earlier BHS algorithm with $n+6$ agent. The differences are as follows: at each iteration the agents are instructed to perform \textsc{Cautious-WaitMoveWest}$(0,7)$ instead of \textsc{Cautious-WaitMoveWest}$(0,6)$. Next, while exploring a ring $R_i$ at the $i$-th iteration at least 3 agents reach $v_{i,0}$ within $time2> 27m$ ($g(7)+3m=24m+3m$), whereas in Algorithm \ref{alg-6} at least 2 agents reach $v_{i,0}$, within $time2>23m$. Next, if 3 agents reach $v_{i,0}$, then our earlier algorithm, 2 agent scenario is similar to 3 agent scenario in this case. In Algorithm \ref{alg-6}, both agents are instructed to walk \textit{cautiously} along $west$ and $east$, respectively, but now as we have one more agent, so two lowest Id agents among them perform \textsc{Cautious-Move}($west,i$), while the other walks \textit{cautiously} along $east$. Otherwise, if 4 agent reach $v_{i,0}$, then this case is again similar to our 3 agent case in Algorithm \ref{alg-6}, in which these 3 agents perform \textsc{CautiousDoubleOscillation} whereas in Algorithm \ref{alg-7} as we have one more agent, so two lowest Id agents perform \textsc{Cautious-Move}$(west,0)$ and the other two agents (i.e., 3rd lowest and 4th lowest Id agents) perform \textsc{Cautious-Move}$(east,0)$, and all these process iterates for each row ring.

\begin{algorithm2e}[!ht]\footnotesize
\caption{BHS with $n+7$ agents}\label{alg-7}
\textsc{Cautious-WaitMoveWest}$(0,7)$\\
\For{$t=0;~i\leq n-1;i++$}
{
\Comment{$time1$ is defined as the number of rounds since the last call of \textsc{Cautious-WaitMoveWest}()}.\\
\If{$time1>27m$}
{
All the agents along $C_0$ perform \textsc{Cautious-WaitMoveSouth}$(i,7)$.\\
\Comment{$time2$ is defined as the number of rounds since the last call of \textsc{Cautious-WaitMoveSouth}()}.\\
\If{$time2>27n$}
{
\If{3 agents at $v_{i,0}$}
{
Instruct the first two lowest Id agents to perform \textsc{Cautious-Move}$(west,0)$ and the other agent to perform \textit{cautious} walk along $east$.\\
If the single agent performing \textit{cautious} walk along $east$ \textit{catches} another agent not stuck, then they together perform \textsc{Cautious-Move}$(east,0)$.\\
\Comment{$time3$ is defined as the number of rounds since the last call of \textsc{Cautious-Move}$(west,0)$}\\
\If{$time3>3m$}
{
\If{the single agent still performs \textit{cautious} walk along $east$ and not \textsc{Cautious-Move}$(east,0)$}
{
Instruct it to return back to $v_{i,0}$ whether stuck or not by marking each port along its movement to 1, if not already marked.\\
}
\If{$time3>12m$}
{
Each agent is instructed to perform \textsc{Cautious-WaitMoveWest}$(0,7)$.\\
}
}
}
\Else
{
Instruct the two lowest Id agents to perform \textsc{Cautious-Move}$(west,0)$ and the third and fourth lowest Id agent to perform \textsc{Cautious-Move}$(east,0)$.\\
\Comment{$time4$ is defined as the number of rounds since the last call of \textsc{Cautious-Move}$(west,0)$.}\\
\If{$time4>12m$}
{
Perform \textsc{Cautious-WaitMoveWest}$(0,7)$\\
}
}
}
}
}
\end{algorithm2e}

\begin{lemm}\label{lemma:3agentBHAlg7}
    If 3 agents reach $v_{i,0}$ when $time2>27n$, and the algorithm has not terminated, then 3 agents have entered the black hole from three different directions.
\end{lemm}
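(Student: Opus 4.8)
The plan is to follow the structure of the proof of Lemma~\ref{lemma:3agentBHAlg6}, simply upgrading every count by one and replacing the time budgets $23m,23n$ by $27m,27n$ (since the relevant subroutine is now \textsc{Cautious-WaitMoveWest}$(0,7)$ with $g(7)=24\cdot(\cdot)$ and the Exit phase adds $3\cdot(\cdot)$ more). I would argue by contradiction: assume the algorithm has not terminated, exactly $3$ agents reach $v_{i,0}$ with $time2>27n$, and \emph{fewer than} $3$ agents have been consumed by the black hole. Because each of $R_0,\dots,R_{i-1}$ has been completely explored in a previous iteration (otherwise the algorithm would have stopped), the black hole sits on some $R_{t'}$ with $t'\in\{i,\dots,n-1\}$, and I would split into the two cases used in Lemma~\ref{lemma:3agentBHAlg6}: (a) the black hole is an interior node $v_{t',j}$ with $j\neq 0$, or (b) it lies on the column $C_0$, i.e.\ at $v_{t',0}$.

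In case (a) the only routine ever run on $R_{t'}$ so far is \textsc{Cautious-WaitMoveWest}$(0,7)$; since $v_{t',0}$ is safe, Corollary~\ref{corollary:2agentsBHAlg1} forces at most one agent to be consumed in $R_{t'}$, and Theorem~\ref{theorem:corrAlg1} leaves at most one further agent stuck there, while each other safe row ring also loses at most one agent to being stuck during the $(i-1)$-th call of \textsc{Cautious-WaitMoveWest}$(0,7)$. Counting, at least $n+7-n-1=6$ agents reach $C_0$ by the end of the $(i-1)$-th iteration, and then \textsc{Cautious-WaitMoveSouth}$(0,7)$ in the $i$-th iteration delivers, by Theorem~\ref{theorem:corrAlg1}, at least $5$ of them to $v_{i,0}$ within $time2\le 27n$ --- contradicting that exactly $3$ reach $v_{i,0}$. (As in Lemma~\ref{lemma:3agentBHAlg6}, one may also spell this out as the two subcases ``every row carries at most $6$ agents'' versus ``some row carries at least $7$'', which only strengthens the bound.) In case (b), if $\ge 4$ agents ever operate along $R_{t'}$ then Corollary~\ref{corollary:Possibile4agentAlg1} detects the black hole, contradicting non-termination, so at most $3$ agents lie on $R_{t'}$; Lemma~\ref{lemma:Impossibile3agentAlg1} then gives at most two consumed there (from $east$ and $west$) plus one stuck, while \textsc{Cautious-WaitMoveSouth}$(0,7)$ may lose at most one more from $north$ or $south$. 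An identical counting argument, removing the at most $2$ consumed and at most $1$ stuck agents of $R_{t'}$, shows that enough agents remain on the other $n-1$ rings to force at least $4$ agents onto $v_{i,0}$, again a contradiction. Hence at least $3$ agents are consumed, and since a fourth consumption would necessarily be from the one remaining vertical direction and would therefore be detected (the reasoning behind Corollary~\ref{corollary:3directionsBHAlg6}), exactly $3$ agents are consumed, from $east$, $west$, and one of $north,south$.

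I expect the main obstacle to be the bookkeeping of the agent distribution across iterations rather than any single inequality: the counting tacitly relies on the $n+7$ analogue of Lemma~\ref{lemma:presenceRiAlg6}, namely that whenever fewer agents than expected reach a target node a matching number are provably stuck in identifiable rings, and one must check that the adversary cannot do better by concentrating agents in one ring versus spreading them to maximise the number of stuck agents --- in every configuration the ``$+1$'' slack from the seventh agent must survive. The secondary, routine-but-delicate point is matching the thresholds $27m$, $27n$, $12m$ (and $g(7)$) in Algorithm~\ref{alg-7} against the round bounds of Theorem~\ref{theorem:corrAlg1} and Corollary~\ref{corollary:complexityExitAlg1a1b}, so that ``$time2>27n$'' genuinely certifies that \textsc{Cautious-WaitMoveSouth}$(0,7)$ has completed.
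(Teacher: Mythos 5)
Your proposal is correct and matches the paper's intent: the paper itself proves this lemma only by remarking that it is a direct consequence of Lemma~\ref{lemma:3agentBHAlg6}, since the routing to $C_0$ and then to $v_{i,0}$ is unchanged and only one extra agent is added. Your write-up simply carries out that same two-case counting argument explicitly with every count shifted by one and the time budgets $23m,23n$ replaced by $27m,27n$, which is exactly the adaptation the paper has in mind.
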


\begin{lemm}\label{lemma:presenceRiAlg7}
    Our BHS algorithm with $n+7$ agents ensures that if at the $i$-th iteration after $time2>27n$ only 3 agents are present at $v_{i,0}$, then there exists another agent stuck somewhere along $R_i$.
\end{lemm}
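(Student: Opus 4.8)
The plan is to argue by contradiction, closely mirroring the proof of Lemma~\ref{lemma:presenceRiAlg6} but with every agent count shifted up by one to reflect the extra agent. Suppose that at the $i$-th iteration, once $time2>27n$, exactly three agents sit at $v_{i,0}$, the algorithm has not yet terminated, and yet no agent is stuck anywhere along $R_i$. Because only three (rather than four) agents reach $v_{i,0}$ while the algorithm is still running, Lemma~\ref{lemma:3agentBHAlg7} tells us that exactly three agents have already been swallowed by the black hole, and Corollary~\ref{corollary:3directionsBHAlg6} identifies the three directions of entry: two casualties on the black-hole row through $east$ and $west$ during a \textsc{Cautious-WaitMoveWest}$(0,7)$ step, and one through $north$ or $south$. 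Consequently the black hole sits on $C_0$, say at $v_{t',0}$; moreover $t'\neq i$ since $v_{i,0}$ currently holds live agents and is therefore safe, and $t'>i$ because the algorithm processes rows in increasing order and has not terminated. Discarding the three agents at $v_{i,0}$ and the three consumed agents leaves $n+1$ agents which, by the no-stuck-on-$R_i$ hypothesis, are spread over the $n-1$ row rings other than $R_i$ at the start of the $(i-1)$-th call of \textsc{Cautious-WaitMoveWest}$(0,7)$.

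Next I would split on the occupancy of those $n-1$ rows. A row carrying at least four agents during a \textsc{Cautious-WaitMoveWest}$(0,7)$ step whose target node is the black hole would detect it (Corollary~\ref{corollary:Possibile4agentAlg1}), so the black-hole row $R_{t'}$ is never ``heavy'', and the case split is effectively over the rows distinct from both $R_i$ and $R_{t'}$. If one of those holds at least six agents, then by Theorem~\ref{theorem:corrAlg1} (the $27m$-round window being ample, by Lemma~\ref{lemma:4mRound1agentDesiredNodeAlg1}) at least five of them reach $C_0$ during \textsc{Cautious-WaitMoveWest}$(0,7)$, and then at least four of those reach $v_{i,0}$ during \textsc{Cautious-WaitMoveSouth}$(i,7)$ (which can strand at most one agent on $C_0$), contradicting that only three agents are at $v_{i,0}$. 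Otherwise every such row holds at most five agents, so Theorem~\ref{theorem:corrAlg1} leaves at most one agent stuck in each of the $n-1$ rings; together with the three consumed agents this pins down at most $n+2$ agents, so at least $n+7-(n+2)=5$ agents are present on $C_0$ when the $i$-th iteration begins, and the subsequent \textsc{Cautious-WaitMoveSouth}$(i,7)$ delivers at least four of them to $v_{i,0}$ — again contradicting the hypothesis of exactly three. In either case we reach a contradiction, so some agent must be stuck along $R_i$, as claimed.

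The step I expect to be most delicate is the agent-conservation bookkeeping: one must attribute the three casualties precisely (two to the West phase on $R_{t'}$ and one to a South phase), confirm that the no-stuck-on-$R_i$ assumption legitimately caps the number of stranded agents by $n-1$ rather than $n$, and handle the boundary cases of the occupancy split exactly as in Lemma~\ref{lemma:presenceRiAlg6} — in particular using Corollary~\ref{corollary:Possibile4agentAlg1} to exclude $R_{t'}$ from the heavy-row case, so that the guarantee of Theorem~\ref{theorem:corrAlg1} that $\ell-1$ out of $\ell$ agents reach the target may be applied without first deducting the swallowed agents. Once these counts are fixed, every inequality is the one used for the $n+6$ algorithm with its constant raised by one, and the remaining arithmetic is routine.
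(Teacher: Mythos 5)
Your proof is correct and takes essentially the same approach as the paper: the paper gives no standalone argument for this lemma, remarking only that it is a consequence of Lemma~\ref{lemma:presenceRiAlg6} because the strategies for reaching $C_0$ and then $v_{i,0}$ are unchanged and only the row-exploration phase gains an agent. Your write-up is exactly that intended argument made explicit, and the bookkeeping ($n+7-3-3=n+1$ agents over $n-1$ rows, the heavy-row threshold of six, and the final tallies of five agents on $C_0$ hence four at $v_{i,0}$ in the light case) all checks out against the $n+6$ proof shifted by one.
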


Lemmas \ref{lemma:3agentBHAlg7} and \ref{lemma:presenceRiAlg7} are just a consequence of Lemmas \ref{lemma:3agentBHAlg6} and \ref{lemma:presenceRiAlg6}, it is because, the strategies used in the earlier algorithm from reaching $C_0$ then to $v_{i,0}$ at the $i$-th iteration is same in this algorithm as well, only difference is that we have used one extra agent, which changes only the row ring exploration strategies. Also, Corollary \ref{corollary:3directionsBHAlg6} holds for this algorithm as well.

\begin{lemm}
    Algorithm \ref{alg-7} correctly locates the black hole with a set of $n+7$ agents.
\end{lemm}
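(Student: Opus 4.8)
The plan is to argue by cases on the location of the black hole, mirroring the structure of the correctness proof for Algorithm~\ref{alg-6}, and invoking Lemmas~\ref{lemma:3agentBHAlg7} and \ref{lemma:presenceRiAlg7} (together with Corollary~\ref{corollary:3directionsBHAlg6}) wherever the $n+6$ proof used their $n+6$ analogues. First I would split on whether the black hole lies along $C_0$ or not. If it does, say at $v_{i,0}$, I would further distinguish the number of agents initially operating along $R_i$. If at least $4$ agents are along $R_i$, then the very first execution \textsc{Cautious-WaitMoveWest}$(0,7)$ locates the black hole by Corollary~\ref{corollary:Possibile4agentAlg1}. Otherwise at most $3$ agents are along $R_i$, and in the worst case $2$ of them enter the black hole from $east$ and $west$ while the third is stuck behind a missing edge; by Corollary~\ref{corollary:3directionsBHAlg6}, at most $3$ agents total have been consumed without detection (two via $east/west$, one via $north$ or $south$ during some \textsc{Cautious-WaitMoveSouth}$(0,7)$). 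A counting argument then shows that at the $i$-th iteration, when \textsc{Cautious-WaitMoveSouth}$(i,7)$ runs along $C_0$: at most $n$ agents are stuck across the $n$ row rings, at most $3$ are consumed, and at most $1$ is stuck along $C_0$ en route to $v_{i,0}$, leaving at least $n+7-n-3-1 = 3$ agents that reach $v_{i,0}$. The adversary cannot stop all of them, so at least one survivor beyond the one that might enter the black hole correctly identifies $v_{i,0}$ as the black hole and terminates.

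For the second case, suppose the black hole is at $v_{i,j}$ with $j\in\{1,\ldots,m-1\}$, i.e.\ not along $C_0$. Then the only operation ever performed along $R_i$ prior to the explicit exploration of $R_i$ is \textsc{Cautious-WaitMoveWest}$(0,7)$, and by Corollary~\ref{corollary:2agentsBHAlg1} at most $2$ agents can be consumed by the black hole during such an execution when $v_{i,j}$ (not $v_{i,0}$) is the black hole; since the black hole is not in $C_0$, in fact at most one agent enters it from $east$ or $west$ here. Consequently, by the counting bookkeeping of Lemma~\ref{lemma:presenceRiAlg7} (the $n+7$ analogue of Lemma~\ref{lemma:presenceRiAlg6}), at the $i$-th iteration at least $4$ agents reach $v_{i,0}$ within $time2 \le 27n$. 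The algorithm then assigns the two lowest-Id agents to \textsc{Cautious-Move}$(west,0)$ and the next two to \textsc{Cautious-Move}$(east,0)$, which jointly explore $R_i$; by the correctness of \textsc{Cautious-Move}() (Lemma~\ref{lemma:complexityAlg1a1b}) and the fact that the black hole is an interior node of $R_i$, one of these four agents is adjacent to the black hole and reporting while the adversary can block at most one of the other survivors, so the black hole is correctly located. When exactly $3$ agents reach $v_{i,0}$, Lemma~\ref{lemma:presenceRiAlg7} guarantees a fourth agent is stuck somewhere along $R_i$; the two lowest-Id agents run \textsc{Cautious-Move}$(west,0)$ while the third walks cautiously along $east$ (merging into \textsc{Cautious-Move}$(east,0)$ upon catching the stuck agent), and the same argument applies with the stuck-then-freed fourth agent supplying the needed surviving witness.

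The main obstacle is the careful accounting in the ``$\le 3$ agents along $R_i$'' subcase of the $C_0$ case: one must verify that after up to $3$ agents are lost to the black hole from three distinct directions (two of which are $east$ and $west$ along $R_i$ by Corollary~\ref{corollary:3directionsBHAlg6}, the third $north$/$south$ occurring during some \textsc{Cautious-WaitMoveSouth} step at an earlier or current iteration), the invariant of Lemma~\ref{lemma:presenceRiAlg7} still leaves enough agents along $C_0$ for \textsc{Cautious-WaitMoveSouth}$(i,7)$ to deliver two survivors to $v_{i,0}$ — one of whom may be consumed, the other of whom must terminate correctly. This hinges on Theorem~\ref{theorem:corrAlg1} (at most one agent stuck per ring execution) and on tracking that the extra seventh agent is exactly what absorbs the worst-case third consumption relative to the $n+6$ analysis; once that bookkeeping is in place, each remaining subcase is a direct transcription of the $n+6$ correctness argument with \textsc{CautiousDoubleOscillation} replaced by the paired \textsc{Cautious-Move}$(west,0)$/\textsc{Cautious-Move}$(east,0)$ exploration.
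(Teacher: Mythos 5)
Your proposal is correct and follows essentially the same route as the paper's own proof: the same case split on whether the black hole lies on $C_0$, the same sub-cases on the number of agents initially on $R_i$, and the same invocations of Corollary \ref{corollary:Possibile4agentAlg1}, Corollary \ref{corollary:2agentsBHAlg1}, Theorem \ref{theorem:corrAlg1}, and Lemmas \ref{lemma:3agentBHAlg7}--\ref{lemma:presenceRiAlg7}, with the explicit count $n+7-n-3-1=3$ matching the paper's claim that at least three agents reach $v_{i,0}$ in the worst case. The only cosmetic difference is that you also discuss a ``three agents reach $v_{i,0}$'' sub-case under the black-hole-not-on-$C_0$ branch, which is vacuous there (three consumptions force the black hole onto $C_0$ by Corollary \ref{corollary:3directionsBHAlg6}) but does no harm.
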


\begin{proof}
    Similarly as proved earlier in case of Algorithm \ref{alg-6}, in this case also we show that for each possible position of the black hole, our Algorithm \ref{alg-7} correctly locates the black hole.
    \begin{itemize}
        \item Suppose the black hole is along $C_0$, and let the black hole be at the node $v_{i,0}$. If 4 agents are initially placed along $R_i$, then just performing \textsc{Cautious-WaitMoveWest}$(0,7)$ will itself locate the black hole (refer Corollary \ref{corollary:Possibile4agentAlg1}). On the contrary, if at most 3 agents are along $R_i$, then in the worst case two agents get consumed along $east$ and $west$ and the third agent is stuck by a missing edge. Further, at the $i$-th iteration, performing \textsc{Cautious-WaitMoveSouth}$(i,7)$ takes at least 3 agents to $v_{i,0}$. By Lemmas \ref{lemma:3agentBHAlg7} and \ref{lemma:presenceRiAlg7}, this situation occurs when 3 agents have already entered the black hole along $east$, $west$ and $north$ or $south$ while $n$ agents are stuck along $n$ row ring and 1 agent is stuck along $C_0$. Now, observe the adversary cannot stop any more agent along $C_0$ as it has already stopped one agent which is stuck along $C_0$, so these three agents will ultimately detect the black hole position.
        \item Suppose the black hole is not along $C_0$, and let the black hole node be at the node $v_{i,j}$ (where $j \in \{1,2,\ldots,m-1\})$. So, the only operation performed along $R_i$ before its exploration at the $i$-th iteration is \textsc{Cautious-WaitMoveWest}$(0,7)$, which can consume at most one agent and stuck another agent (refer Corollary \ref{corollary:2agentsBHAlg1} and Theorem \ref{theorem:corrAlg1}). This implies at the $i$-th iteration at least 4 agents reach $v_{i,0}$ within $time2\leq 27n$, and performing \textsc{Cautious-Move}$(west,0)$ by 2 agents and \textsc{Cautious-Move}$(east,0)$ by the other 2, will ultimately locate the black hole in at most $3m$ rounds.
    \end{itemize}
\end{proof}

\begin{theorem}\label{theorem:n+7scatteredRoundComplexity}
    A group of $n+7$ agents executing Algorithm \ref{alg-7} along a dynamic torus of size $n\times m$ correctly locates the black hole in $O(nm)$ rounds.
\end{theorem}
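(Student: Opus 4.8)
The plan is to mirror the complexity analysis of Theorem~\ref{theorem:n+6scatteredRoundComplexity} for Algorithm~\ref{alg-6}, but replacing the $O(m^{1.5})$ cost of the \textsc{CautiousDoubleOscillation} phase by the $O(m)$ cost of the \textsc{Cautious-Move} subroutines; this substitution is exactly the modification that turns Algorithm~\ref{alg-6} into Algorithm~\ref{alg-7}. Correctness has already been settled by the preceding lemma, so only the round bound is left. First I would break one iteration of the for loop of Algorithm~\ref{alg-7} into its three consecutive phases: the west column phase \textsc{Cautious-WaitMoveWest}$(0,7)$, the south column phase \textsc{Cautious-WaitMoveSouth}$(i,7)$ along $C_0$, and the row exploration phase built from \textsc{Cautious-Move}$(west,0)$ and \textsc{Cautious-Move}$(east,0)$ (with a single cautious walk on the east side in the three agent case, which later merges into a \textsc{Cautious-Move} or is recalled after $3m$ rounds).

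Next I would plug in the subroutine bounds already proved. By Theorem~\ref{theorem:corrAlg1} applied with $l=7$, at least $l-1=6$ agents finish \textsc{Cautious-WaitMoveWest}$(0,7)$ within $g(7)+3m=4\cdot 6\cdot m+3m=27m$ rounds and, symmetrically, \textsc{Cautious-WaitMoveSouth}$(i,7)$ within $g(7)+3n=27n$ rounds; these are precisely the guards $time1>27m$ and $time2>27n$ that hand control to the row phase. For the row phase, Lemma~\ref{lemma:complexityAlg1a1b} (equivalently, $3m$ rounds for the exploration stage followed by at most $3(l-1)m\le 9m$ rounds for state \textit{Exit} with $l\le 4$ participating agents, via Corollary~\ref{corollary:complexityExitAlg1a1b}) shows that $R_i$ is processed and the survivors are back at $v_{i,0}$ within $12m$ rounds, matching the guards $time3>12m$ and $time4>12m$; at that point a fresh \textsc{Cautious-WaitMoveWest}$(0,7)$ is triggered, which I would charge to the next iteration.

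Finally I would sum up. Lemmas~\ref{lemma:3agentBHAlg7} and~\ref{lemma:presenceRiAlg7}, together with Corollary~\ref{corollary:3directionsBHAlg6}, guarantee that every iteration really starts with at least three usable agents at $v_{i,0}$, so no iteration stalls. Hence one iteration costs at most $27m+27n+12m=39m+27n$ rounds, which is $O(m)$ because $3\le n\le m$. Since in the worst case all $n$ rows must be explored before the black hole is found, the total, including the single initial call of \textsc{Cautious-WaitMoveWest}$(0,7)$ in line~1, is at most $27m+n(39m+27n)=O(nm)$, which by Theorem~\ref{theorem:scatteredRoundLB} is optimal.

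I expect the main obstacle to be purely bookkeeping: one must be careful to attribute the \textsc{Cautious-WaitMoveWest}$(0,7)$ that each iteration spawns at its end to the correct (next) iteration so that it is not double counted, and to confirm that the progress invariant used to ensure that each row phase actually begins is exactly the one established in Lemmas~\ref{lemma:3agentBHAlg7} and~\ref{lemma:presenceRiAlg7} and not a stronger claim. Once these points are nailed down, the $O(nm)$ bound falls out by adding the three per-phase costs and multiplying by $n$.
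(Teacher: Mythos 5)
Your proposal matches the paper's own proof essentially step for step: the same decomposition into the two column phases ($27m+27n$ rounds via Theorem~\ref{theorem:corrAlg1} with $l=7$) and the row phase ($12m$ rounds via Corollary~\ref{corollary:complexityExitAlg1a1b}), summed to $39m+27n$ per iteration and multiplied by the $n$ row rings to obtain $O(nm)$. The extra bookkeeping you flag about charging each spawned \textsc{Cautious-WaitMoveWest} to the next iteration is a harmless refinement that the paper glosses over without affecting the asymptotic bound.
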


\begin{proof}
    From any initial configuration, in order to reach $v_{0,0}$ for at least 3 agents, it takes at most $27(m+n)$ rounds (as \textsc{Cautious-WaitMoveWest}$(0,7)$ and \textsc{Cautious-WaitMoveSouth}$(0,7)$ takes at most $27m$ and $27n$ rounds, respectively). Next, whenever at least 3 agents reach $R_0$, they perform \textsc{Cautious-Move}() which takes another at most $12m$ rounds (where $3m$ rounds are required to explore $R_0$ and $9m$ rounds are required for at least 3 among 4 agents to return to $v_{0,0}$ by Corollary \ref{corollary:complexityExitAlg1a1b}). Hence, in total at most $39m+27n$ rounds are required to explore $R_0$ from any initial configuration. Now, this process is iterated for each $n$ row ring, hence the total number of rounds required to execute Algorithm \ref{alg-7} is: $n(39m+27n)=O(mn)$. \end{proof}

\begin{figure}
  	\centering
  	\includegraphics[scale=0.8]{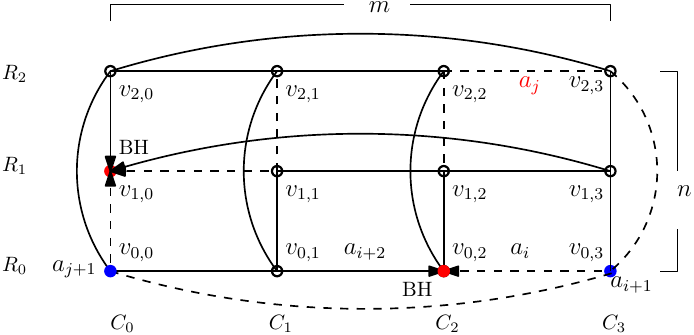}
  	\caption{Represents all possible black hole consumption while executing either Algorithm \ref{alg-6} or \ref{alg-7}}
  	\label{fig:black_hole_consumption_part2}
  \end{figure}

\begin{blub}
    Our both Algorithms \ref{alg-6} and \ref{alg-7} for scattered agents, working with a set of $n+6$ and $n+7$ agents, respectively, ensures that at most 4 agents enter the black hole. Fig. \ref{fig:black_hole_consumption_part2} denotes if the black hole is along $C_0$ then in the first step while executing \textsc{Cautious-WaitMoveWest}() at most two agents can enter black hole from $east$ and $west$, respectively, and if the black hole is not yet detected, then while executing the exploration of $R_i$ (i.e., $R_1$ in figure) another at most two agents can enter the black hole from $north$ and $south$ in step 5 of both algorithms. Otherwise, if the black hole is not along $C_0$, then while exploring the respective row ($R_0$ in the figure) the black hole is detected and at most two agents can enter it along $east$ and $west$, for both the algorithms. So our algorithms ensure at most 4 agents can enter the black hole.
\end{blub}

 \section{Conclusion}\label{section-conclusion}   
In this paper we have considered the black hole search problem on a dynamic torus, in which each row and column are 1-interval connected. We have considered two types of initial configuration of the agents and in each case gave the bounds (both upper and lower bound) on number of agents and complexity in order to locate the black hole. To be specific, when the agents are initially co-located, first, we give lower bounds of $n+2$ and $\Omega(m\log n)$ on number of agents and rounds, respectively. Next, with $n+3$ agents we design a BHS algorithm that works in $O(nm^{1.5})$ rounds whereas with $n+4$ agents we propose an improved algorithm that works in $O(nm)$ rounds. 

When the agents are scattered, we give a lower bound of $n+3$ and $\Omega(mn)$ on number of agents and rounds, respectively. Next, we propose two BHS algorithms, first, works with $n+6$ agent in $O(nm^{1.5})$ rounds and second, works with $n+7$ agents in $O(nm)$ rounds (round optimal algorithm).

In this paper we have considered that each node in the dynamic torus is labeled. A possible future work is to remove this assumption and give a BHS algorithm and check if the bounds get changed. Secondly, for both these cases, finding an agent optimal algorithm is another possible direction which can be pondered in to.
\bibliography{New-Arxiv}

\end{document}